\documentclass{llncs}

\usepackage{amsmath, mathrsfs, amssymb,dsfont}
\usepackage{multirow, booktabs, hhline, enumerate, array}
\usepackage{graphics, geometry,epsfig,epstopdf,subfigure}
\usepackage{pdfsync}
\usepackage{color,cite}
\usepackage{hyperref}

\usepackage{footnote}
\makeatletter
\newcommand\footnoteref[1]{\protected@xdef\@thefnmark{\ref{#1}}\@footnotemark}
\makeatother

\newcommand{\tab}{\hspace{4em}}

\newcommand{\bhdr}[1]{\medskip \noindent\textbf{#1.}}
\newcommand{\ihdr}[1]{\medskip \noindent\textit{#1.}}
\newcommand{\Stp}{\mathscr{S}}
\newcommand{\len}{\lambda}

\newcommand{\remove}[1]{}

\newcommand{\Nds}{\mathds{N}}
\newcommand{\Fds}{\mathds{F}}
\newcommand{\EE}{\mathcal{E}}
\newcommand{\LL}{\mathcal{L}}
\newcommand{\CC}{\mathcal{C}}
\newcommand{\MM}{\mathcal{M}}
\newcommand{\FF}{\mathcal{F}}
\newcommand{\XX}{\mathcal{X}}

\newcommand{\YY}{\mathcal{Y}}
\newcommand{\ZZ}{\mathcal{Z}}

\newcommand{\bset}{\{0,1\}}
\newcommand{\tmin}{{t_{ab}}}
\newcommand{\share}{\mathbf{Share}}
\newcommand{\rec}{\mathbf{Rec}}

\addtolength{\topmargin}{-.6in}
\addtolength{\oddsidemargin}{-.15in}
\addtolength{\evensidemargin}{-.65in}
\addtolength{\textwidth}{.7in}
\addtolength{\textheight}{1.2in}

\title{Multipath Private Communication: \\ An Information Theoretic Approach}
\author{Hadi Ahmadi, Reihaneh Safavi-Naini}
\institute{Department of Computer Science, University of Calgary, Canada\\
\{hahmadi, rei\}@ucalgary.ca}

\begin{document}

\maketitle

\begin{abstract}
Sending private messages over communication environments under surveillance is a very important challenge in communication security and has attracted a lot of attention from cryptographers through time. We believe that resources other than cryptographic keys can be used to provide communication privacy. We consider private message transmission (PMT) in an abstract multipath communication setting between two communicants, Alice and Bob, in the presence of a third-party eavesdropper, Eve. Alice and Bob have no a priori shared keys and furthermore, Eve is computationally unbounded. There are a total of $n$ paths, and the three parties can have simultaneous access to at most $t_a$, $t_b$, and $t_e$ paths. The parties can reselect their accessed paths after every $\len$ bits of communication over a path. We study two types of perfect (P)-PMT and asymptotically-perfect (AP)-PMT protocols. The former has zero tolerance of transmission error and leakage, whereas the latter allows for positive  error and leakage, which tend to zero as the message length increases. We derive the necessary and sufficient conditions (based on the above parameters) under which  P-PMT and AP-PMT are possible. We also introduce explicit P-PMT and AP-PMT protocol constructions. Our results show that AP-PMT protocols attain much higher information rates than P-PMT ones. Interestingly, Alice and Bob can achieve AP-PMT even in unfortunate conditions that they have the least connectivity ($t_a=t_b=1$) and Eve may access all but one paths ($t_e=n-1$). It remains however an open question whether the derived rates can be improved by more sophisticated AP-PMT protocols.

We study applications of our results to private communication over the real-life scenarios of multiple-frequency links and multiple-route networks. We show  practical examples of such scenarios that can be abstracted by the multipath setting: Our results prove the possibility of keyless information-theoretic private message transmission at rates $17$\% and $20$\% for the two example scenarios, respectively. We discuss open problems and future work at the end.
\end{abstract}
\keywordname Secure message transmission, information-theoretic security, multipath routing, frequency hopping

\section{Introduction}
With the rapid growth of online networks and the Internet, an increasing number of daily activities are moved to the online world, and more communications of individuals fall under prying eyes resulting in increasing loss of privacy. There is a wide range of incentives for capturing personal data and communication by various organizations (including secret agencies). Hackers easily tap into routers or WiFi connections to steal online communication data \cite{Li08}. There are reported news on security agencies watching civilian communications through switches and routers in the Internet \cite{LiRi05,Mc03}. Given massive computational resources that can be accessible through cloud services, na\"{i}ve usage of traditional cryptographic systems for protecting communication  in many cases creates a false sense of security rather than real protection \cite{Le13}. This also has the drawback of limited term security guarantee. Development of quantum algorithms such as Shor's  algorithm \cite{Sh97} will render all today's widely  used crypto algorithms, including RSA-based encryption and Diffie-Hellman key exchange, completely insecure. The widely known one-time-pad with information-theoretic security, on the other hand, requires prior sharing of long keys and is so impractical.

The problem we consider is summarized as follows. There is a communication system, such as a network or a wireless link, that can be massively eavesdropped. The adversary has abundant computation power and is able to break computational crypto algorithms. The communicants are not provided with any pre-shared keys. The question is whether there is any hope for private communication in this setting.

In this paper, we investigate using multiple paths of communication as a resource for providing protection against a computationally-unbounded eavesdropper. A path may have different realizations in different scenarios. It can be a connecting path over an ad-hoc network or Internet, a frequency channel in wireless communication, or a fiber strand in a fiber-optic link. Using  path redundancy in networks for providing security has been considered before in the context of secure message transmission (SMT) \cite{DDWY93}: A sender is connected to a receiver through a set of wires, a subset of which is {\em controlled} by an unlimited adversary. Although the SMT setting allows for different types of adversaries, the focus of research has been security against Byzantine adversaries that completely control the corrupted wires. Because of this strong adversarial model, secure communication is only possible when the number of uncorrupted wires is strictly greater than the number of  corrupted ones \cite{DDWY93}.  This renders SMT impossible in many cases of interest: One cannot assume that in widely eavesdropped networks, communicants are connected by more honest paths than the corrupted ones. This strong adversary framework however is not necessary in surveilled networks where the goal of the adversary is listen in without being noticed. Example of massive network surveillance by security agencies made headline news in most of 2013.

\subsection{Our work: PMT in the multipath setting}
We consider message transmission over the following abstract communication system with three parties: a message sender \textit{Alice}, a message receiver \textit{Bob}, and an eavesdropper \textit{Eve}. Alice wants to send a message to Bob privately, without leaking any information to Eve who is computationally unlimited. There is no shared key between Alice and Bob. The system provides a total of $n$ disjoint paths between Alice and Bob, but not all paths can be accessed simultaneously: Alice and Bob can access to up to $t_a$ and $t_b$ paths at a time, respectively. Eve can also observe communication over up to $t_e$ paths at each moment. We assume time is divided into intervals of equal length $\len$. The parties (including Eve) select their paths at the beginning of each time interval and stay with their choice through the end of the interval. They can freely switch to new paths at the beginning of the next interval. The time interval length is determined by the technological limitations of the parties, in particular the adversary, in switching between paths as well as the communication scenario, e.g., switching between frequencies in a multiple-frequency wireless link can be much faster than corrupting new routers in a computer network.

\textit{We refer to the above communication system as the $(n,t_a,t_b,t_e,\len)$-multipath setting and to the problem as private message transmission (PMT)}. We provide formal definitions for PMT protocols over a multipath setting. Foremost, we are interested in necessary and sufficient connectivity conditions (based on the setting parameters), under which PMT is possible. But we do not stop here. We also study how to attain the highest possible information rate (message bits divided by communicated bits). We call this value the \textit{secrecy capacity}. The study of secrecy capacity and capacity-achieving constructions is quite significant, due to bandwidth limitations and the communication cost in most practical scenarios.

\bhdr{P-PMT and AP-PMT}
We define PMT protocols and measure their security using a reliability parameter ($\delta$) and a secrecy $(\epsilon)$ parameter. The former shows the probability of ``incorrect'' message transmission and the latter represents the message leakage to the adversary. Ideally, a secure PMT protocol is expected to provide perfect reliability $\delta=0$ and perfect secrecy $\epsilon=0$. Relaxing the security requirements to a desired extent may however let PMT protocols send messages at higher rates. We consider designing of two types of PMT protocol families (protocols are indexed based on message length), namely \textit{perfect (P)-PMT} families that include only perfectly-secure protocols and \textit{asymptotically-perfect (AP)-PMT} families that allows for PMTs with positive $\delta$ and $\epsilon$, yet these values tend to zero for protocols that send longer messages. The latter family is particularly interesting because it provides security for a much wider range of connectivity level by admitting an arbitrarily small security loss in protocols. From the security point of view, AP-PMT families behave like P-PMT as the message length tends to infinity.

We define \emph{P-secrecy capacity $C_0$} as well as  \emph{AP-secrecy capacity $C_{\sim 0}$} as the highest achievable rates by P-PMT and AP-PMT families, respectively. We start our investigation of capacity and optimal constructions first for the special case of full-access communication (when $t_a=t_b=n$), and then extend the study to the general case.

\bhdr{PMT in the full-access case}
The problem when $t_a=t_b=n$ is closely related to the traditional study of SMT \cite{DDWY93} when the adversary is only passive. Currently known results can provide partial answers to our questions: Assuming $\len = \infty$, existing SMT results show that P-PMT is obtained by applying a ramp secret sharing scheme (SSS) to the message and sending shares on distinct paths. The necessary and sufficient condition for this is $n > t_e$. However, this does not provide a full answer to the problem. Firstly, as we will show, the solution does not work for finite $\len < \log(2n-t_e)$. Secondly, it remains unclear whether we obtain rate improvement by allowing AP-PMT protocols. We thus complete the study of full-access PMT. Replacing the ramp SSS by an algebraic-geometric quasi-ramp SSS, we show a PMT construction that works when $\len <\log(2n-t_e)$. Rate analysis of this construction together with a capacity upper-bound results proves that the P-secrecy capacity in the full-access case satisfies $1-\frac{t_e}{n} - \Delta \leq C^{\mathbf{FA}}_0 \leq 1-\frac{t_e}{n}$, where $\Delta = (2^{\frac{\len}{2}-2} - 0.25)^{-1}$, for both one-way and two-way communication settings. Although finding the precise capacity expression remains an interesting theoretical question, the bounds give a fairly tight approximation of the capacity for practical values of $\len > 100$ (See Section \ref{sec-practical} on our practical consideration). We next consider AP-PMT and prove the same bounds for AP-secrecy capacity $1-\frac{t_e}{n} - \Delta \leq C^{\mathbf{FA}}_{\sim 0} \leq 1-\frac{t_e}{n}$. The conclusion is
\emph{relaxing secrecy and reliability requirements to asymptotically perfect does not help us improve PMT rates in the full-access case}.

\bhdr{PMT in the general case}
Starting from perfect secrecy, we show that P-PMT is possible if and only if $t_e < \tmin$, where $\tmin=\min(t_a,t_b)$. Furthermore, the capacity falls between the almost tight bounds of $1-\frac{t_e} \tmin  - \Delta \leq C_0 \leq 1-\frac{t_e}{\tmin}$, regardless of whether one-way or two-way communication is allowed. Note that the bounds are irrespective of $n$: The PMT construction simply fixes $\tmin$ paths (and forgets about the rest of the network) and applies SSS as in the full-access case. When $\tmin \ll n$, the capacity $C_0 \approx 1-\frac{t_e}{\tmin}$ is far below the full-access capacity $C^{\mathbf{FA}}_0 \approx 1-\frac{t_e}{n}$. We show however that higher rates can be achieved by using AP-PMT instead. We show that $t_e < t_b$ is the necessary and sufficient connectivity condition for ``one-way'' AP-PMT; more surprisingly, we introduce a construction which achieves the rate $\frac{1-\frac{t_e}{n}-\Delta}{1 + \xi^*}$ for some small constant $\xi^*$. Numerical results show that for many interesting scenarios, the rate is fairly close to the full-access capacity $C^{\mathbf{FA}}_0$. Last but not least, we show that when interactive communication is allowed, AP-PMT is possible even when $t_e \geq t_b$, only requiring (i) Eve does not observe all paths ($t_e < n$) and (ii) Alice and Bob can communicate ($t_a,t_b>0$). We introduce an interactive PMT construction with the secrecy rate $\frac{1-\frac{t_e}{n}-\Delta}{1 + \xi^{**}}$, for some small constant $\xi^{**}$ that decreases by $\len$. When $\len$ is large enough, AP-PMT rates get close to $C^{\mathbf{FA}}_0$ (as in full-access) even in situations where communicants suffer from poor connectivity regime, i.e., $t_e=n-1$ and $t_a=t_b=1$.

Tables \ref{tab-connectivity-capacity} summarizes our theoretical results about P-PMT and AP-PMT over the multipath setting. For simplicity, we assume that $\Delta \to 0$ which is reasonable according to our practical consideration of $\len > 100$. While connectivity range remains always the same in the full-access case, in general, AP-PMT works under a much wider connectivity range compared to P-PMT. When interaction is permitted, AP-PMT is always possible as if in the full-access case.

\begin{table}
\centering
\caption{PMT connectivity conditions and capacities in the $(n,t_a,t_b,t_e,\len)$-multipath setting}\label{tab-connectivity-capacity}
\begin{tabular}{c|c|cc|c|c|}
\multicolumn{2}{c}{}        & \multicolumn{2}{c}{\bf Full Access} & \multicolumn{2}{c}{\bf Partial Access} \\
\cline{3-6}
\multicolumn{2}{c|}{}         & One-way & \multicolumn{1}{|c|}{Two-way} & One-way & Two-way \\
 \cline{2-6}
\multirow{2}{*}{\bf Connectivity~} & \bf P-PMT  & \multicolumn{2}{c}{\multirow{2}{*}{$t_e < n$}} & \multicolumn{2}{|c|}{$t_e < \tmin$}  \\
 \cline{2-2} \cline{5-6}
& \bf AP-PMT  &                         &                       ~&~ $t_e < t_b$ \& $t_a > 0$ ~&~ $t_e < n$ \& $t_a,t_b > 0$\\
\hhline{~=====}
\multirow{2}{*}{\bf Capacity} & $\mathbf{C_0}$  & \multicolumn{2}{c}{\multirow{2}{*}{$1-\frac{t_e}{n}$}} & \multicolumn{2}{|c|}{$[1-\frac{t_e}{\tmin}]_+$}  \\
 \cline{2-2} \cline{5-6}
& $\mathbf{C_{\sim 0}}$  &        &         ~&~ $\frac{1-\frac{t_e}{n}}{1+\xi^{*}} < C_{\sim 0} < 1-\frac{t_e}{n}$ ~&~ $\frac{1-\frac{t_e}{n}}{1+\xi^{**}} <C_{\sim 0} < 1-\frac{t_e}{n}$\\
 \cline{2-6}
\end{tabular}
\end{table}

Comparing the P-secrecy and AP-secrecy capacities shows that in the full-access case, both equal $1-\frac{t_e}{n}$ and the rate is achieved by the one-round PMT scheme $\FF_0$. In general however, the achievable rates for P-PMT and AP-PMT deviate: While P-PMT rates cannot exceed $[1-\frac{t_e}{\tmin}]_+$, it is possible to get close to the maximum rate of $1-\frac{t_e}{n}$ (as in full-access) by taking benefit of AP-PMT protocols. The values $\xi^{*}$ and $\xi^{**}$ vary depending on the setting parameter. The precise values are given according to Corollaries \ref{corollary-oneway-AP-cap} and \ref{corollary-twoway-AP-cap}, respectively. In Section \ref{subsec-rateCompare}, we make clear how close these lower-bounds are to the upper-bound.

It remains a theoretically interesting open question to derive the exact expression for the AP-secrecy capacity in the general multipath setting.

\bhdr{Practical consideration and numerical analysis}
To show the relevance of our results to practical communication, we consider two example scenarios of private communication over (i) multiple-frequency links and (ii) multiple-route networks. In both scenarios, we elaborate on practically sound examples that can be modeled by the multipath setting and for which private communication is promised at rates $17$\% and $20$\%, respectively. To our knowledge, this is the first attempt to build optimal rate communication with information-theoretic privacy in these scenarios.

\subsection{Related work}

\subsubsection{Secure message transmission}
The secure message transmission (SMT) problem has been first studied by Dolev et al. \cite{DDWY93}: Alice and Bob are connected by $n$ paths, out of which $t \leq n$ can be corrupted  by the active adversary, Eve. The objective is to guarantee both privacy and reliability of a transmitted message from Alice to Bob. SMT was initially motivated for being use in secure communication and multi-party computation (MPC) protocols \cite{BGW88}, and more recently has found applications in key generation over distributed networks \cite{WS11}. Despite similarities, our work deviates from the SMT literature in a few directions. Firstly, we focus on passive adversaries. This allows us to investigate in more details capacity-achieving constructions for P-PMT and AP-PMT. More importantly as we mentioned earlier, a great portion of threats to online communication are based on passive attacks, trying to violate personal data privacy. Although SMT results against active adversaries can also be utilized for PMT, this is to be consider a security over-design, which most likely leads to inefficient and sub-optimal solutions.

Secondly, SMT assumes Alice and Bob can access to all $n$ paths, i.e., only Eve is limited in her connectivity. This assumption is not always practical. In dense networks with thousands of nodes and paths, there may be more communication paths than parties can possibly afford access simultaneously. Another example, an optical fiber for instance can carry information over up to thousands of wavelengths, but it becomes too expensive to communicate data over all frequencies simultaneously.

The last difference here is the concept of time interval. The SMT work assumes that Alice and Bob can communicate arbitrarily many bits in each round without Eve being able to switch her corrupted paths. This is not a practical assumption. In a real-life scenario, Eve may move to new paths if enough time is provided. We capture this by adding a time-interval length parameter $\len$ to our abstract model. As a consequence, an SMT protocol that transmits more than $\len$ bits in a round without accounting for Eve's movements is not necessarily functional in the new model.

\subsubsection{Frequency hopping}
Spread spectrum is a data transmission technique which spreads the signal energy over a relatively wide frequency range rather than a certain band. The advantages of spread spectrum include low probability of intercept, high resistance to noise and jamming, and little interference with conventional signal transmissions over the same range, due to low power density at each frequency.  Frequency-hopping spread spectrum (FHSS) is  a special form of spread spectrum suited for low-rate and low-power systems. The technique transmits data as a sequence of blocks, where each block is sent over a frequency channel that is selected (pseudo-randomly) from a pool of frequency channels. The FHSS technology has appeared in walkie-talkie devices, early WiFi, homeRF, and Bluetooth. These applications use frequency hopping to enhance communication quality by reducing the effect of interference and narrow-band noise. More recently, the technique has been used to countermeasure jamming-based denial of service (DoS) attacks over hostile communication environments such as sensor networks \cite{WoSt02}. FHSS typically requires the sender and the receiver to be coordinated, i.e., to have pre-shared keys which let them choose a sequence of random frequencies to communicate over. This requirement, however, cannot be always fulfilled specially in ad hoc networks, where two nodes may happen to communicate for the first time. Strasser et al. \cite{SCPC08} addressed this concern by introducing uncoordinated frequency hopping (UFH) and studying its application to jamming-resistant key establishment. The work encouraged several follow-up works, where UFH was examined for objectives such as jamming-resistant broadcast communication \cite{PSC09,LNDL10}.

Although the UFH technique provides key-less secure communication over physical channels, the security it offers is solely ``jamming resistance''. This should not be confused with data confidentially and/or integrity that is traditionally defined in cryptography. UFH does not guarantee data confidentiality and assumes it has been taken care of over higher layers via cryptographic tools. UFH does not provide data integrity either: It even takes use of higher-layer message authentication to provide reliable assembly of transmitted packets as a countermeasure against jamming attacks. Relying on higher-layer cryptographic primitives in UFH has two drawbacks: (i) the need for a public-key infrastructure (or shared keys) and (ii) only computational security guarantees.

We for the first time show that multiple-frequency channels can be used for private communication. In contrast to the above, our approach to PMT has two advantages: (i) it does not rely on higher-layer cryptography, and (ii) it provides security against computationally-unlimited adversaries.

\subsubsection{Multipath routing}
Multipath routing consists of finding multiple routes over a network from a source node to a destination node, and using them for the purposes of  reliable data transmission, load balancing, and/or higher aggregate bandwidth. The technique has has been explored in different contexts, such as the Internet, sensor networks, and mobile ad-hoc networks (MANETs). In all these applications, multipath routing includes two main components, namely \textit{route discovery and maintenance} and \textit{traffic allocation}. The former finds the multiple routes between the communicants, and the latter deals with how data is distributed amongst the routes. Split multipath routing (SMR) \cite{LeGe01} is one of such protocols, proposed for MANETs, that allows the source node to discover and allocate traffic over maximally-disjoint paths.

Despite multipath routing approaches have been mainly proposed for reliability and efficiency of data transmission, the idea can be also used to provide security. We noted earlier that a primary motivation for the well-studied SMT problem is secure communication over highly connected networks. Similarly, one can motivate the application of our results to private transmission over such networks. The advantage of PMT is that it allows Alice and Bob to communicate privately even if they do not access to ALL disjoint paths, rather use a random subset that is not known to the eavesdropper.

\subsection*{Notation}
We use $[x]_+$ for a real value $x$ to show $\max\{0,x\}$. For two random variables $X \in \XX$ and $Y \in \YY$, we denote their statical distance by $SD(X,Y)=0.5 \sum_{x \in \XX} |\Pr(X=x)-\Pr(Y=x)|$. All logarithms are in base 2. The following notations are specific to our work and are used frequently throughout the paper.

\medskip
\begin{tabular}{!{\vrule width 2pt} c|l !{\vrule width 2pt} c | l !{\vrule width 2pt}}
 \noalign{\hrule height 2pt}
$n$ 		& Total number of communication paths    ~& $\Stp$ & A multipath setting \\
$t_a$ 	& Number of paths accessible to Alice        & $\Pi$ & A PMT protocol\\
$t_b$ 	& Number of paths accessible to Bob          &~ $\FF$~ & A PMT scheme or a family of protocols~ \\
$t_e$  	& Number of paths accessible to Eve          & $R$ & Secrecy rate    \\
$\len$ 		& Length of a time interval in bits      & $C$ & Secrecy capacity \\
~$\tmin$~ 		& $\min(t_a,t_b)$                    & $\Delta$ & $(2^{\frac{\len}{2}-2} - 0.25)^{-1}$ \\
 \noalign{\hrule height 2pt}
\end{tabular}
\medskip

\noindent Throughout, we consider $\Delta$ a negligible value for our numerical analysis by assuming large $\len$.

\section{Preliminaries}
\subsection{Threshold, ramp, and quasi-ramp secret sharing schemes}\label{subsec-SSS}
Secret sharing is the task distributing a secret value $S$ among a set of $m$ players such that only qualified subsets of those players can recover the secret efficiently, while no information is leaked to an unqualified subset. A secret sharing scheme (SSS) is defined by a pair $(\share,\rec)$ of functions. The share function $\share$ maps secret $S$ to shares $X_1, X_2,\dots,X_m$, and the recovery function $\rec$ maps the presented shares $X'_1, \dots, X'_m$ to a secret estimate $\hat{S}$. At the time of reconstruction, the $i$-th share value $X_i$ can be null (showed by $\Lambda$) meaning that the player $i$ is not present. The secret estimate $\hat{S}$ is expected to equal $S$ if the collection of shares is qualified and $\bot$ otherwise.

A $(k,m)$-threshold secret sharing scheme (SSS) \cite{Sh79,Bl79} distributes a secret as $m$ shares such that any $\geq k$ shares are qualified and any $k-1$ or fewer are unqualified. Ramp SSSs are thus introduced as an extension to relax the secrecy requirement aiming at lowering the share size. As an extension of this, a $(k,r,m)$-ramp SSS where $r\leq k$, guarantees that $\geq k$ shares are qualified and $\leq k-r$ shares are unqualified, while information leakage increases as the number of shares tends from $k-r$ to $k$. The definition gives threshold SSS as a special case when $r=1$.

\bhdr{Polynomial-based SSS}  The simplest example of a $(k,n)$-threshold SSS is the polynomial-based construction, due to Shamir \cite{Sh79}, which puts the secret (from field $\Fds_p$ of size $p$) as the constant term of a random polynomial of degree $k-1$ over $\Fds_p[x]$ and obtains $m$ shares as points on the polynomial. The construction can be easily converted to a ramp SSS by allowing the secret to include $r$ (instead of $1$) points on the random polynomial. Both threshold and ramp schemes are optimal in their kind as they provide the smallest possible share sizes achievable for the secrecy that they promise. The detailed description of the polynomial-based $(k,r,m)$-ramp SSS is as follows.

\ihdr{$(k,r,m)$-ramp SSS $(\share_{pol},\rec_{pol})$} Let $p \geq m+r$ and $S \in \Fds^r_p$ be the secret.
\begin{itemize}
\item $\share_{pol}(S)$ chooses a random polynomial $f(x)$ of degree $k-1$ over $\Fds_p[x]$, such that $f(0)=S_0, f(1)=S_1, \dots, f(r-1) = S_{r-1}$ and returns $m$ shares $X_1=f(r), X_2=f(r+1), \dots, X_m = f(r+m-1)$.
\item $\rec_{pol}(X'_1,\dots,X'_m)$ chooses the first $k$ present shares $X_i \neq \Lambda$ (if not possible returns $\bot$), obtains $f(x)$ through Lagrange interpolation, and returns the secret $S=(f(0), f(1), \dots, f(r-1))$.
\end{itemize}

\bhdr{Algebraic-geometric SSS} In the polynomial-based SSS, the number of shares ($m$) should be less than the number of distinct points $p$ on the polynomial, i.e., no more than $p-1$ shares can be distributed. Algebraic-geometric constructions resolve this issue by using curves of high enough genus, instead of polynomials, over fields. Garcia and Stichtenoth \cite[Theorem 3.1]{GaSt96} show an explicit family of curves over a field of size $p$ (when $p$ is a square) which have at least $(\sqrt{p}-1) g$ points, where $g$ is the curve genus. Chen and Cramer \cite{ChCr06} use this result to construct an algebraic geometric $(k,g,m)$-quasi-threshold SSS over $\Fds_p$ for any $m < (\sqrt{p}-1)g$ for arbitrarily large $g$: quasi-threshold SS guarantees any $k-1$ or fewer shares are unqualified and any $k+2g$ or more shares are qualified. As the authors note \cite[Section 4]{ChCr06} the construction can be used to give a $(k,r,g,m)$-quasi-ramp SSS, where $k+2g\leq m$, with $\leq k-r$ shares unqualified and $\geq k+2g$ shares qualified. The following is a description of Chen and Cramer's algebraic-geometric SSS using Garcia-Stichtenoth family of curves.

\ihdr{$(k,r,g,m)$-quasi-ramp SSS $(\share_{alg},\rec_{alg})$} Let $\CC$ be a Garcia-Stichtenoth curve with genus $g$ over $\Fds_p$, where $p$ is a square and $(\sqrt{p}-1) g \geq m+r$. Define $Q$, $P_0, P_1, \dots, P_{m+r-1}$ as any $m+r+1$ distinct rational points on $\CC$, $D=(k+2g).(Q)$ as a rational divisor of $\CC$, and $\LL(D)$ as the Riemann-Roch space associated with $D$. Let $S \in \Fds^r_p$ be the secret.
\footnote{Refer to \cite{ChCr06} for the definitions of rational divisor and Riemann-Roch space.}
\begin{itemize}
\item $\share_{alg}(S)$ chooses a random function $f(.) \in \LL(D)$ such that $f(P_0)=S_0, f(P_1)=S_1, \dots$, $f(P_{r-1}) = S_{r-1}$ and returns $m$ shares $X_1=f(P_r), \dots, X_m = f(P_{m+r-1})$ over $\Fds_p$.
\item $\rec_{alg}(X'_1,\dots,X'_m)$ chooses the first $k+2g$ non-null shares $X_i \neq \Lambda$ (if not possible returns $\bot$), obtains $f(.)$ through linear interpolation and returns the secret $S=(f(0), f(1), \dots, f(r-1))$.
\end{itemize}

The important advantage of the above construction is the improved condition $m+r \leq (\sqrt{p}-1) g$ such that the right hand size can be large with $g$, despite a constant field size $p$. This however comes with the price of adding an extra $2g$ gap between the number of qualified and unqualified players. This gap is not generally desired as it implies less security (security against fewer players) if one sticks to a fixed number of qualified players. Although both factors above are linear with respect to $g$, for large enough field size $p$, the benefit of increasing $g$ is dominant: one can generate $\sqrt{p}-1$ additional shares by allowing an extra $2$-player gap in SSS. We take use of this interesting property in our PMT constructions in which we will have $p=2^\len$, where $\len$ is the time interval length.

\section{Problem Description}
\subsection{Multipath setting abstraction}
A \emph{partial-access multipath communication setting} (or \emph{multipath setting} in brief) refers to an abstract communication system which consists of $n$ disjoint communication paths, out of which at most $t_a$, $t_b$, and $t_e$ paths can be accessed by Alice, Bob, and Eve, respectively, at any point in time. More precisely, time is divided into equal-length intervals, each of which corresponding to sending $\len$ consecutive bits over at least one path by either Alice or Bob. The parties choose their sets of access paths at the beginning of each time interval and will hold on to their choice till the end of that interval, i.e., until $\len$ bits are communicated over at least one of the paths. This abstraction of time intervals in bits is obtained by multiplying the bit-transmission speed by path switching time. To summarize, a multipath setting is defined by five public parameters of $(n,t_a,t_b,t_e,\len)$. It is implicit that $t_e,t_a,t_b \leq n$. We always use $\tmin=\min(t_a,t_b)$ to denote the minimum number of paths that both Alice and Bob can access. For the special case of $t_a=t_b=n$, we say the setting provides ``full access'' to Alice and Bob and refer to the setting as an $(n,t_e,\len)$-full-access setting. This is in line with the existing SMT work \cite{DDWY93} which assumes Alice and Bob are not limited in their access. Figure \ref{fig-multipath} illustrates full-access versus partial-access settings.

\begin{figure}[hbt]
  \centering
      \subfigure[Full-access: $t_a=t_b=n$.]{
  \includegraphics[width=.43\textwidth]{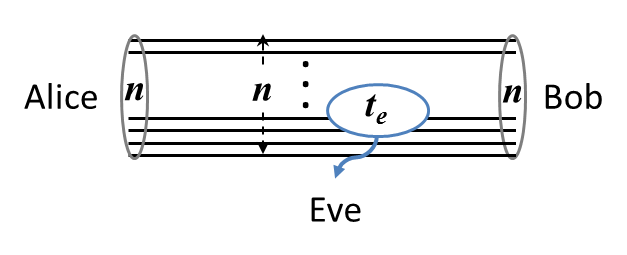}
    \label{fig-TFsetting}
    }
      \subfigure[Partial-access: $t_a,t_b \leq n$.]{
  \includegraphics[width=.43\textwidth]{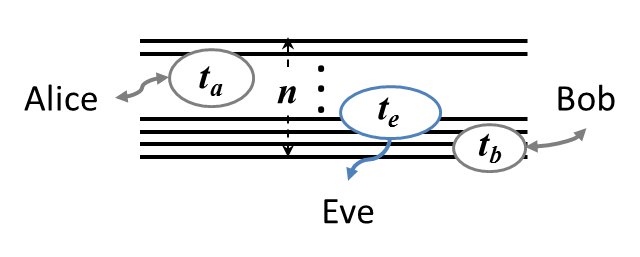}
    \label{fig-TPsetting}
    }
    \caption{Full-access vs. partial-access multipath settings}\label{fig-multipath}
\end{figure}

In general, the value of $\len$ depends on how fast the communicants and (more importantly) Eve can release old paths and capture new paths without possibly missing the live communication. This relates to the actual communication scenario, the communication capability of the transceiver devices, and the transmission speed. We shed more light on this in Section \ref{sec-practical}: In the practical scenarios considered there, the least interval length $\len = 52$ corresponds to the multiple-frequency link scenario, and is obtained by multiplying the $1 \mu$s switching time between frequencies and the transmission speed of $100$ Mbps. Thus for our numerical analysis, we always assume $\len > 100$.

\subsection{PMT protocol and secrecy capacity: definition}
To send a message $S \in \bset^k$ securely from Alice to Bob, a PMT protocol allows them to communicate a total of $c$  bits on their accessed paths (possibly back and forth in multiple rounds) so that Bob computes a variable $\hat{S}$ as his estimate of the message $S$. Eve will obtain the view $View_E(S)$ of the communication and uses it to obtain some knowledge about $S$. The randomness in $View_E(.)$ comes from the randomness of the PMT protocol and that of the adversary.

\begin{definition}[PMT Protocol] \label{def-PMT}
The protocol $\Pi$, as described above, over a multipath setting is a $(k, c, \delta, \epsilon)$-PMT protocol if it transmits any $k$-bit message using $c$ bits of communication such that
\footnote{The definition looks simpler than that in the SMT literature \cite{DDWY93}, since it only considers passive adversary.}
\begin{eqnarray}
& \mbox{Reliability}:&~ \forall s\in \bset^k:~~ \Pr(\hat{S} \neq s) \leq \delta, \label{reli} \\
& \mbox{Secrecy}:&~ \forall s_1,s_2 \in \bset^k:~~ SD\left(View_E(s_1), View_E(s_2) \right) \leq \epsilon. \label{sec}
\end{eqnarray}
The \emph{secrecy} rate of $\Pi$ is obtained as $R=\frac{k}{c}$. The protocol is called perfectly reliable when $\delta=0$, perfectly-secret when $\epsilon=0$; if both hold, $\Pi$ is called a perfectly (P)-PMT protocol.
\end{definition}

In practice, the message length may be unknown before hand and one needs a family of PMT protocols that can be used for arbitrarily long messages. Good PMT families are expected to have a guaranteed rate. We define the optimality of a PMT family by its secrecy rate which is measured as the minimum of the rates of all protocols it includes. We define $(\delta,\epsilon)$-, perfect (P)-, and asymptotically-perfect (AP)- PMT families.

\begin{definition}[$(\delta,\epsilon)$-PMT and P-PMT families]\label{def-de-SMTfamily}
A $(\delta,\epsilon)$-PMT family $\FF$ for a multipath setting $\Stp$ is an infinite sequence $(\Pi_i)_{i\in \Nds}$, where for each $i\in \Nds$, $\Pi_i$ is a $(k_i,c_i,\delta,\epsilon)$-PMT protocol over $\Stp$ and $k_{i+1} > k_i$.
\\
The $(\delta,\epsilon)$-secrecy rate of $\FF$ is defined as \footnote{The infimum exists as the sequence is bounded from below by zero.}
\[R_{\FF:\delta,\epsilon}  = \inf \{ \frac{k_i}{c_i}: ~ i \in \Nds \}.\]
When $\delta=\epsilon=0$, $\FF$ is called a perfect (P)-PMT family and the P-secrecy rate is denoted by $R_{\FF:0}$.
\end{definition}

P-PMT families include only perfectly-secure protocols (with $\delta=\epsilon=0$). Designing such families of protocols is important for highly-sensitive data transmission where absolutely zero failure and leakage is acceptable. There are however scenarios which desire non-zero yet arbitrarily small $\delta$ and $\epsilon$, i.e., one expects to pick a protocol with $\delta$ and $\epsilon$ as close as desired to zero. We hence define AP-PMT families which include  $(\delta,\epsilon)$-PMT protocols; however, by choosing to send longer messages, the values $\delta$ and $\epsilon$ decrease and tend to zero asymptotically.

\begin{definition}[AP-PMT family]\label{def-asymp-SMTfamily}
An AP-PMT family $\FF$ for a multipath setting $\Stp$ is an infinite sequence $(\Pi_i)_{i\in \Nds}$ where for each $i\in \Nds$, $\Pi_i$ is a $(k_i,c_i,\delta_i,\epsilon_i)$-PMT protocol over $\Stp$ and it holds that $k_{i+1} > k_i$, $\delta_{i+1} \leq \delta_i$, $\epsilon_{i+1} \leq \epsilon_i$, and $\lim_{i \to \infty} \delta_i = \lim_{i \to \infty} \epsilon_i =0$.

\noindent The \emph{AP-secrecy rate} of $\FF$ is defined as
\[R_{\FF:\sim 0} = \inf \{ \frac{k_i}{c_i}: ~ i \in \Nds \}.\]
\end{definition}

We shall now define the different types of secrecy capacities for a multipath setting $\Stp$. The capacity means the highest secrecy rate that can be guaranteed for all message lengths by a PMT protocol.
\begin{definition}[Secrecy Capacity]\label{def-secrecy-capacity}
The $(\delta, \epsilon)$- (resp. P- and AP-) secrecy capacity $C_{\delta, \epsilon}$ (resp. $C_0$ and $C_{\sim 0}$) of a multipath setting $\Stp$ equals the largest $(\delta, \epsilon)$- (resp. P- and AP-) secrecy rate achievable via all possible $(\delta, \epsilon)$- (resp. P- and AP-) PMT families over $\Stp$.
\end{definition}

\subsection{Relation among secrecy capacities}
Definition \ref{def-PMT} implies that any $(k,c, \delta_1, \epsilon_1)$-PMT protocol is also $(k,c, \delta_2, \epsilon_2)$-PMT for $\delta_2\geq \delta_1$ and $\epsilon_2\geq \epsilon_1$. Inasmuch as families simply consist of protocols, this suggests that an $(\delta_1, \epsilon_1)$-PMT family is also a $(\delta_2, \epsilon_2)$-PMT family . Put differently, the set of all $(\delta_2, \epsilon_2)$-PMT families is a superset of the set of all $(\delta_1, \epsilon_1)$-PMT families. The conclusion is that $C_{\delta_2, \epsilon_2} \geq C_{\delta_1, \epsilon_1}$, and more generally, the capacity $C_{\delta, \epsilon}$ decreases as $\delta$ and $\epsilon$ decrease and is lower bounded by $C_0$. Using a similar argument about the relation with AP-PMT families, one can reach at
\[C_{0} \leq C_{\sim 0} \leq C_{\delta, \epsilon}, ~~\mbox{for}~ \epsilon>0 ~\mbox{and}~ \delta>0.\]
We ask whether the above inequalities can be replaced by equality. Showing cases for which $C_{\sim 0} < C_{\delta, \epsilon}$ is not ambitious: Consider for instance when $t_a=t_b=t_e = n$; for $\epsilon=1$ since no secrecy is expected, Alice can achieve reliable transmission at full rate $C_{\delta,1}=1$, but requiring arbitrarily small $\epsilon$ leads to zero rate $C_{\sim 0}=0$. The challenging question is, however, the equality of $C_{0}$ and $C_{\sim 0}$. Having an answer to this question is essential: It is fairly reasonable to tolerate arbitrarily small deviation from perfect security in order to improve rate or to convert impossibility of PMT to possibility. In the rest of the paper, we study P-PMT and AP-PMT protocols starting from the special case of full-access communication (when $t_a=t_b=n$), and extending it to the general multipath setting. Our study leads us the following ultimate conclusion:

\begin{center}
{\it For a wide range of (partial-access) multipath settings, it holds that $C_{\sim 0} > C_0$, i.e., AP-PMT results in higher rates than P-PMT.}
\end{center}

\section{PMT in the full-access scenario}
In the full-access multipath setting, Alice and Bob have access to all paths $t_a=t_b=n$; for simplicity, we refer to this as the $(n,t_e,\len)$-full-access setting. Considering this special case when $\len = \infty$ (i.e. Eve cannot switch her paths in a round once they are chosen), the PMT problem relates to the SMT work (for passive adversary), where the optimal solution simply uses a polynomial-based $(n,r,n)$-ramp secret sharing scheme (SSS), denoted by $(\share_{pol},\rec_{pol})$, where $r=n-t_e$. The obtained PMT solution, denoted by $\FF^{pol}_0$, is described below. Let $S=(S_1,\dots,S_r) \in \Fds_{2^u}^r$ be the secret message to be transmitted, for arbitrary $u>\log(2n-t)$.

\bhdr{Polynomial-based P-PMT scheme $\FF^{pol}_0$}
\begin{itemize}
\item Alice calculates shares $(X_1,X_2,\dots,X_n)=\share_{pol}(S)$ and sends share $X_i$ over the $i$-path (path indices are fixed and public).
\item Having received $X_i$'s, Bob obtains the message as $S=\rec_{pol}(X_1,X_2,\dots,X_n)$.
\end{itemize}
Thanks to the SSS secrecy guarantees, Eve obtains no information from her $t$ shares.

\begin{theorem}\label{theorem-FF0}
The scheme $\FF^{pol}_0$ gives a family of $(ur,un, 0,0)$-P-PMT protocols with secrecy rate $R_{\FF^{pol}_0}=1-\frac{t_e}{n}$ over the $(n,t_e,\len)$-full-access setting with $\len = \infty$.
\end{theorem}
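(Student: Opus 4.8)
The plan is to verify three things about the scheme $\FF^{pol}_0$: that each protocol in the family is a valid $(ur, un, 0, 0)$-PMT protocol over the full-access setting, that the family satisfies Definition~\ref{def-de-SMTfamily}, and that its secrecy rate equals $1-\frac{t_e}{n}$. The parameter count is immediate: the secret $S \in \Fds_{2^u}^r$ carries $ur$ bits, and Alice sends $n$ shares each in $\Fds_{2^u}$, i.e.\ $un$ bits total, all in one round. Since $t_a=t_b=n$, Alice can indeed send one share per path and Bob can indeed read all $n$ paths, so the communication pattern is feasible in the $(n,t_e,\len)$-full-access setting; note $\len=\infty$ guarantees Eve's $t_e$ chosen paths are fixed for the whole round, so her view is exactly $t_e$ of the shares.

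Next I would establish perfect reliability ($\delta=0$). In the full-access setting with no active adversary, Bob receives all $n$ shares $X_1,\dots,X_n$ unaltered. By the correctness of the polynomial-based $(n,r,n)$-ramp SSS from Section~\ref{subsec-SSS}, $\rec_{pol}(X_1,\dots,X_n)$ interpolates the degree-$(n-1)$ polynomial $f$ from $n$ points and returns $(f(0),\dots,f(r-1))=S$ exactly, for every message $s$. Hence $\Pr(\hat S \neq s)=0$. For this step one must check the field-size hypothesis $u > \log(2n-t_e)$ suffices for the ramp SSS to be well-defined: the scheme needs $p=2^u \geq m+r = n + (n-t_e) = 2n - t_e$, which is exactly what $u>\log(2n-t_e)$ gives, so the construction is legitimate.

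Then I would establish perfect secrecy ($\epsilon=0$). Eve's view consists of the values of $f$ at $t_e$ of the share-points $\{r,\dots,r+n-1\}$. The $(n,r,n)$-ramp SSS guarantees that any set of at most $k-r = n-(n-t_e) = t_e$ shares is unqualified, i.e.\ leaks no information about $S$; concretely, for a degree-$(n-1)$ polynomial pinned at $f(0),\dots,f(r-1)=S$, any $t_e$ further evaluation points are uniformly and independently distributed regardless of $S$, because the remaining $n - r = t_e$ coefficients are free and a Vandermonde argument shows the map from those free coefficients to any $t_e$ evaluation points is a bijection. Therefore $View_E(s_1)$ and $View_E(s_2)$ are identically distributed for all $s_1,s_2$, giving $SD(View_E(s_1),View_E(s_2))=0$.

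Finally, to get the family and its rate, I instantiate the scheme for an increasing sequence of message lengths --- e.g.\ fix a valid $u>\log(2n-t_e)$ and let the $i$-th protocol use secret dimension $r_i = i\cdot r$ by running $i$ independent copies, or equivalently work over a growing field --- so that $k_{i+1}>k_i$ while every protocol keeps $\delta=\epsilon=0$; this matches Definition~\ref{def-de-SMTfamily} and makes $\FF^{pol}_0$ a P-PMT family. Each protocol has rate $\frac{ur}{un} = \frac{r}{n} = \frac{n-t_e}{n} = 1-\frac{t_e}{n}$, a constant independent of $i$, so the infimum over the family is $1-\frac{t_e}{n}$, which is $R_{\FF^{pol}_0}$. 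I do not expect a serious obstacle here: the content is entirely inherited from the ramp-SSS properties recalled in the preliminaries, and the only subtlety to state carefully is why $\len=\infty$ is needed (so that Eve cannot switch paths mid-round and thereby see more than $t_e$ shares) and why the field-size condition $u>\log(2n-t_e)$ is exactly the hypothesis the polynomial construction requires.
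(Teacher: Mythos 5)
Your proposal is correct and follows essentially the same route as the paper's own proof: the rate from the direct count $\frac{ur}{un}$, perfect reliability from the reconstruction property of the $(n,r,n)$-ramp SSS since Bob receives all $n$ shares, and perfect secrecy because Eve's $t_e$ shares form an unqualified set. The paper simply cites the SSS properties where you additionally spell out the Vandermonde bijection, the field-size check $2^u \geq 2n-t_e$, and the indexing of the family, all of which are consistent elaborations rather than a different argument.
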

\begin{proof}
See Appendix \ref{app-proof-FF0}
\end{proof}

\subsection{P-PMT for finite $\len$}
When $\len$ is finite, the scheme $\FF^{pol}_0$ (without any modification) does not provide us with a P-PMT family since it cannot give protocols for arbitrarily message lengths, rather only for $u \leq \len$: Recall that the message consists of $r$ field elements in $\Fds_{2^u}$, and the SSS shares are from $\Fds_{2^u}$. If $u > \len$, a share $X_i$ needs to be sent in more than one time interval. This lets Eve switch her paths and get (partial) information about more than $t$ shares to learn some information about $S$. However, there is an easy fix to this. To circumvent the above issue, we can always stay with a constant field size $2^\len$ so that each share is delivered in a ``single'' time interval, i.e., it will be either completely leaked or perfectly secure (note that Eve has synchronous path switching with communicants). We instead repeat $\FF^{pol}_0$ for sufficiently many times (time-intervals) to send arbitrarily long messages; hence, a PMT family.

The situation is however more unfortunate for settings where $2n-t > 2^{\len}$, since $\FF^{pol}_0$ cannot provide even a single PMT protocol: The polynomial-based SSS requires $2n-t$ polynomial points ($n-t$ for message $S$ and $n$ for shares $X_1$ to $X_n$); hence the field size $2^u$ should be at least $(2n-t)$. This cannot happen due to the inequalities $2^u < 2^{\len} < 2n-t$. Proposition \ref{proposition-FF0} concludes our result about $\FF^{pol}_0$.

\begin{proposition}\label{proposition-FF0}
Repeating $\FF^{pol}_0$ for arbitrary $q \in \Nds$ times results in a P-PMT family with secrecy rate $R_{\FF^{pol}_0}=1-\frac{t_e}{n}$ over any $(n,t_e,\len)$-full-access setting that satisfies $2n-t \leq 2^{\len}$.
\end{proposition}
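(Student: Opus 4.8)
The plan is to leverage Theorem~\ref{theorem-FF0}, observing that the only obstruction to running $\FF^{pol}_0$ for finite $\len$ is that a single polynomial-based share must not straddle a time-interval boundary, and that the hypothesis $2n-t \le 2^\len$ is exactly what lets us avoid this. First I would fix the field size to $p = 2^\len$, i.e.\ take $u=\len$ in the description of $\FF^{pol}_0$, so that the message is $S\in\Fds_{2^\len}^r$ with $r=n-t_e$ and every share $X_i = f(\cdot)$ is an element of $\Fds_{2^\len}$, that is, exactly $\len$ bits. The condition $2n-t\le 2^\len$ guarantees $p = 2^\len \ge 2n-t_e = m+r$ (with $m=n$ shares), which is precisely the requirement for the polynomial-based $(n,r,n)$-ramp SSS $(\share_{pol},\rec_{pol})$ of Section~\ref{subsec-SSS} to be well-defined. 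Since Alice has full access, she transmits all $n$ shares in parallel over the $n$ paths within a \emph{single} time interval, so one execution of $\FF^{pol}_0$ consumes one interval and communicates $\len n$ bits.

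Next I would redo the security argument of Theorem~\ref{theorem-FF0}, now valid for finite $\len$ precisely because each share is atomic within an interval. Within the one interval of an execution, Eve's set of accessed paths is fixed and has size at most $t_e$; hence she obtains at most $t_e$ shares \emph{in full} and nothing about the bits of the remaining $n-t_e$ shares, since a share lies either entirely inside or entirely outside her window. By the ramp-SSS secrecy guarantee, any $\le k-r = n-(n-t_e)=t_e$ shares are unqualified, so $View_E(s)$ is independent of $s$ and $\epsilon=0$; Bob, seeing all $n\ge k$ shares, is qualified, so $\hat S = S$ and $\delta=0$. Thus a single execution with $u=\len$ is a $(\len r,\len n,0,0)$-PMT protocol.

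To turn this into a family — the notion of ``family'' requires protocols for an unbounded sequence of message lengths — I would define $\Pi_q$ as the $q$-fold independent repetition of this execution (fresh protocol randomness each time), for $q\in\Nds$, transmitting $k_q = q\len r$ message bits with $c_q = q\len n$ communicated bits. Reliability composes trivially: a union bound over the $q$ perfectly-reliable executions gives $\delta=0$. For secrecy, write any two messages in blocks $s_1 = (s_1^{(1)},\dots,s_1^{(q)})$ and $s_2=(s_2^{(1)},\dots,s_2^{(q)})$; Eve's view is the product of $q$ independent per-execution views, and since for each $j$ the distributions $View_E^{(j)}(s_1^{(j)})$ and $View_E^{(j)}(s_2^{(j)})$ coincide, so do the products, giving $SD(View_E(s_1),View_E(s_2))=0$. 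Since $k_{q+1}>k_q$ and each $\Pi_q$ is $(k_q,c_q,0,0)$-PMT, $(\Pi_q)_{q\in\Nds}$ is a P-PMT family with P-secrecy rate $\inf_q \frac{q\len r}{q\len n}=\frac{r}{n}=1-\frac{t_e}{n}$.

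The only genuinely delicate point — and where I would spend the most care — is justifying that fixing $p=2^\len$ truly eliminates partial leakage, so that observing a strict substring of an uncaptured share tells Eve nothing and the finite-$\len$ adversary here is no stronger than the $\len=\infty$ adversary of Theorem~\ref{theorem-FF0}. This rests on the synchronous path-switching assumption (Eve reselects only at interval boundaries, in lockstep with Alice and Bob) together with the choice $u=\len$; once that is pinned down, the remainder is the composition bookkeeping above.
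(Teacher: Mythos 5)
Your proposal is correct and follows essentially the same route as the paper, which (in the discussion preceding the proposition) fixes the field size at $2^\len$ so that each share is atomic within one time interval, invokes the synchronous path-switching of Eve to rule out partial leakage, uses $2n-t_e \le 2^\len$ exactly as the condition making the polynomial-based $(n,r,n)$-ramp SSS well-defined, and obtains a family by $q$-fold repetition at unchanged rate $1-\frac{t_e}{n}$. Your write-up is in fact somewhat more explicit than the paper's about the composition of the $q$ independent executions and about why the finite-$\len$ adversary is no stronger than the $\len=\infty$ one.
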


\begin{remark}
The condition $2 n - t_e < 2^\len$ holds for all scenarios of interest in practice: We suppose there cannot be more than a ``million'' paths ($2n-t_e < 2^{20}$) and switching between paths cannot be done before $\len=50$ bits of communication. It remains a theoretical question whether P-PMT is possible when the condition is not satisfied.
\end{remark}

When $n-2t_e > 2^\len$, we show that P-PMT families can be made by using an algebraic-geometric quasi-ramp SSS instead of the polynomial-based one: The replacement allows us to have arbitrarily many shares while staying with a constant field size $2^\len$. We define $q \in \Nds$ as a factor to introduce increment to the message length. The scheme requires algebraic-geometric SSS with $q (2n-t_e)$ point evaluations, implying using curve genus $g$ such that $(\sqrt{2^\len} -1) g \geq q (2n-t_e)$ (see Section \ref{subsec-SSS}). Let $g = \lceil \frac{q (2n - t_e)}{2^{\len/2}-1} \rceil$ and $r=q (n - t_e) -2g$. Let $S=(S_1,\dots,S_{q r}) \in \Fds_{2^\len}^{r}$ be the message to be transmitted. We design PMT scheme $\FF^{alg}_0$ similarly to $\FF^{pol}_0$, except it uses an $(q n- 2g,r,g, q n)$-quasi-ramp SSS $(\share_{alg},\rec_{alg})$ over field $\Fds_{2^\len}$ for secret sharing and sends the message in $q$ time intervals.

\bhdr{Algebraic-geometry P-PMT scheme $\FF^{alg}_0$}
\begin{itemize}
\item Alice calculates $q n$ shares $\underline{X}=\left( X_{i,j}\right)_{1\leq i \leq q, 1\leq j \leq n} = \share_{alg}(S)$ and sends each $n$-share subsequence $\left(X_{i,j}\right)_{1\leq j \leq n}$ at time interval $i$ over $n$ randomly chosen paths.
\item Bob receives all $q n$ shares and calculates $S=\rec_{alg}(\underline{X})$.
\end{itemize}

The secrecy and rate analysis of $\FF^{alg}_0$ can be done similarly to that of $\FF^{pol}_0$. The reconstruction and secrecy properties of the quasi-ramp SSS imply that Bob can retrieve the message from his received $q n-2g+2g = q n$ shares and Eve obtains no information based on her view of $q n-2g-r = q t_e$ shares. The secrecy rate here is slightly lower than $\FF^{pol}_0$:
\begin{eqnarray*}
R_{\FF^{alg}_0} = \frac{r \len}{q n \len} = \frac{q(n-t_e)-2g}{q n} = 1-\frac{t_e}{n} - \Delta,
\end{eqnarray*}
where (inequality (a) is obtained by choosing $q \geq \frac{2^{\len/2}-1}{t_e}$)
\begin{eqnarray}\label{Delta}
\Delta = \frac{2\lceil \frac{2 q n}{2^{\len/2}-1} - \frac{q t_e}{2^{\len/2}-1} \rceil}{q n} \stackrel{(a)}{\leq} \frac{2\lceil \frac{2 q n}{2^{\len/2}-1} -1 \rceil}{q n} \leq \frac{4}{2^{\len/2}-1} = (2^{\frac{\len}{2}-2}-0.25)^{-1}.
\end{eqnarray}

\begin{corollary}\label{cor-FF-0-full-access}
The scheme $\FF^{alg}_0$ gives a P-PMT family over any $(n,t_e,\len)$ full-access setting. The P-secrecy rate of the family is $R_{\FF^{alg}_0:0} = [1-\frac{t_e}{n} - \Delta]_+$ where $\Delta \leq (2^{\frac{\len}{2}-2}-0.25)^{-1}$.
\end{corollary}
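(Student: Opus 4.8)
\bhdr{Proof plan}
The statement essentially repackages the discussion preceding it, so the plan is short: everything is read off the parameters of the $(qn-2g,\,r,\,g,\,qn)$-quasi-ramp SSS $(\share_{alg},\rec_{alg})$ underlying $\FF^{alg}_0$ --- with $g=\lceil q(2n-t_e)/(2^{\len/2}-1)\rceil$ and $r=q(n-t_e)-2g$ --- together with a count of how many shares Bob and Eve see. First I would check that the scheme is well defined: by the Garcia--Stichtenoth bound recalled in Section~\ref{subsec-SSS}, an algebraic-geometric $(k,r,g,m)$-quasi-ramp SSS over $\Fds_{2^\len}$ exists whenever $k+2g\le m$ and $(2^{\len/2}-1)g\ge m+r$; for our parameters the first is the equality $qn=qn$ and the second reduces to $(2^{\len/2}-1)g\ge q(2n-t_e)-2g$, which is forced by the choice of $g$ (assuming $\len$ even, or else passing to the largest admissible square field below $2^\len$). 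One further needs $r\ge 1$ so that there is something to share; for $n>t_e$ this holds once $\len$ is large enough, and when it fails $1-\tfrac{t_e}{n}-\Delta\le 0$, so the asserted rate $[1-\tfrac{t_e}{n}-\Delta]_+$ is trivially $0$ --- hence I may assume $r\ge 1$.

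Next I would establish $\delta=\epsilon=0$ for a single execution of $\FF^{alg}_0$. Since the setting is full access and Eve is passive, Bob receives all $qn$ shares $\underline{X}=\share_{alg}(S)$ unaltered over the $q$ intervals; as $k+2g=qn$, any $qn$ shares form a qualified set, so $\rec_{alg}(\underline{X})=S$ with probability $1$ and $\delta=0$. Eve sees at most $t_e$ of the $n$ paths in each of the $q$ intervals, hence at most $qt_e$ shares, and $k-r=(qn-2g)-(q(n-t_e)-2g)=qt_e$, so every share-subset of size at most $qt_e$ is unqualified, i.e.\ its joint distribution does not depend on the message. Combined with the message-independence of the (possibly random) path choices, this makes $View_E(s_1)$ and $View_E(s_2)$ identically distributed for all $s_1,s_2$, so $\epsilon=0$. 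The one point I would spell out here is that the quasi-ramp unqualifiedness guarantee holds for \emph{every} subset of at most $qt_e$ shares, so it is immaterial whether Eve fixes her $q$ path-sets up front or picks each of them adaptively from what she has already observed; this is the only step that is not purely mechanical, and I expect it to be the part a careful reader would want made explicit.

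Finally, for the rate and the family structure: each execution sends $r\len$ message bits over $qn\len$ communicated bits, so its rate is $r/(qn)=1-\tfrac{t_e}{n}-\tfrac{2g}{qn}$, and restricting to $q\ge(2^{\len/2}-1)/t_e$ bounds $\tfrac{2g}{qn}$ by a value $\Delta\le(2^{\len/2-2}-0.25)^{-1}$ through the chain of estimates in \eqref{Delta}. To obtain a family in the sense of Definition~\ref{def-de-SMTfamily} I would fix one such $q$ and let the $i$-th protocol be $\FF^{alg}_0$ run on $i$ consecutive independent blocks (exactly as Proposition~\ref{proposition-FF0} turns $\FF^{pol}_0$ into a family), yielding $(i\,r\len,\; i\,qn\len,\,0,0)$-PMT protocols of strictly increasing message length and common rate $1-\tfrac{t_e}{n}-\Delta$; the infimum defining $R_{\FF^{alg}_0:0}$ is then exactly $[1-\tfrac{t_e}{n}-\Delta]_+$, as claimed.
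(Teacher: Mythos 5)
Your proposal is correct and follows essentially the same route as the paper: Bob's $qn$ shares form a qualified set since $k+2g=qn$, Eve's at most $qt_e$ shares are unqualified since $k-r=qt_e$, and the rate bound comes from the choice $q\geq(2^{\len/2}-1)/t_e$ exactly as in the paper's display \eqref{Delta}. The extra care you take --- verifying the Garcia--Stichtenoth existence condition $(2^{\len/2}-1)g\ge m+r$, noting that unqualifiedness of \emph{every} $\leq qt_e$-subset covers adaptive path selection by Eve, and making the family construction explicit via repetition --- only fills in details the paper leaves implicit.
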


\subsection{Implication to P-secrecy capacity}
The existing work on SMT (cf. \cite{PCPS10}) suggests the upper-bound $1-\frac{t_e}{n}$ on achievable P-PMT secrecy rates. This combined with the above results leads us to the following approximation of the P-secrecy capacity for the full-access case:
\begin{eqnarray}\label{bounds-C-FA-0}
[1- \frac{t_e}{n} -(2^{\frac{\len}{2}-2}-0.25)^{-1}]_+ \leq  C^{\mathbf{FA}}_0 \leq 1- \frac{t_e}{n}.
\end{eqnarray}

It remains an interesting theoretical question to close the gap between the two bounds. For practical scenarios ($\len > 100$) however, the gap $(2^{\frac{\len}{2}-2}-0.25)^{-1}$ is reasonably close to zero.

\subsection{AP-PMT in the full-access case}\label{subsec-AP-full}
We have so far derived bounds on the P-secrecy capacity in the $(n,t_e,\len)$-full-access setting. We are interested in finding whether rates can be improved if reliability or secrecy requirements are relaxed to asymptotically perfect. To find an answer, we shall obtain the relation between $(\delta,\epsilon)$-PMT protocols and P-PMT ones and the study this relation when $\delta$ and $\epsilon$ approach to zero.

\bhdr{Upper-bounding $(\delta,\epsilon)$-secrecy rates via secret-key rates} There may be different ways to upper-bound the secrecy rates achievable by $(\delta,\epsilon)$-PMT protocols. We here provide a unique approach by upper-bounding secret-key rates of $(\delta,\epsilon)$ secret-key establishment (SKE) protocols. A SKE protocol is defined similarly to a PMT protocol wherein Alice and Bob communicate possible multiple rounds, however, with the objective establishing a common secret-key. Using same notations as in PMT, we define a SKE protocol as follows.

\begin{definition}[SKE Protocol] \label{def-SKE}
A protocol $\Pi$ over a multipath setting is a $(k, c, \delta, \epsilon)$-SKE protocol if uses $c$ bits of communication and allows Alice and Bob to calculate key $S\in \bset^k$ and its estimate $\hat{S} \in \bset^k$, respectively, such that
\begin{eqnarray}
& \mbox{Reliability}:&~ \Pr(\hat{S} \neq S) \leq \delta, \label{reli-SKE} \\
& \mbox{Uniformity \& Secrecy}:&~ SD\left([S, View_E] ~,~ [U_k, View_E] \right) \leq \epsilon, \label{sec-SKE}
\end{eqnarray}
where $View_E$ is Eve's view of the communication and $U_k$ is an independent uniform $k$-bit string.
\\
The \emph{secret-key} rate of $\Pi$ is obtained as $R=\frac{k}{c}$. The protocol is called perfectly reliable when $\delta=0$, perfectly-secret when $\epsilon=0$; if both hold, $\Pi$ is called a perfectly (P)-SKE protocol.
\end{definition}

There is an easy way of obtaining SKE from PMT. A $(k,c,\delta,\epsilon)$-SKE protocol can be constructed by Alice generating a uniformly random key and sending it to Bob using a $(k,c,\delta,\epsilon)$-PMT protocol. The (average-case) secrecy and reliability properties of SKE follow trivially from those of PMT (in Definition \ref{def-PMT}). We conclude:
\begin{corollary}\label{corollary-SKE-PMT}
Any $(k,c,\delta,\epsilon)$-PMT protocol $\Pi$ over a setting $\Stp$ implies a $(k,c,\delta,\epsilon)$-SKE protocol $\Pi'$ over $\Stp$.
\end{corollary}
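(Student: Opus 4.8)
The plan is to instantiate $\Pi'$ exactly as in the construction sketched just above the statement: Alice samples a key $S$ uniformly at random from $\bset^k$ and runs the given $(k,c,\delta,\epsilon)$-PMT protocol $\Pi$ to transmit $S$ to Bob, who takes the PMT output $\hat S$ as his key estimate. Since $\Pi'$ uses the same path accesses and the same $c$ communicated bits as $\Pi$, only the two SKE requirements \eqref{reli-SKE} and \eqref{sec-SKE} remain to be checked against \eqref{reli} and \eqref{sec}. Reliability is immediate by averaging the worst-case PMT bound over the uniform key: $\Pr(\hat S\neq S)=2^{-k}\sum_{s\in\bset^k}\Pr(\hat S\neq s\mid S=s)\le 2^{-k}\sum_{s\in\bset^k}\delta=\delta$, which is \eqref{reli-SKE}.

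For secrecy, I would first note that $S$ already has the uniform marginal, so $SD\bigl([S,View_E]\,,\,[U_k,View_E]\bigr)$ only measures the residual correlation between the key and Eve's view. Writing $View_E(s)$ for the distribution of Eve's view conditioned on $S=s$ and $\bar V=2^{-k}\sum_{s\in\bset^k}View_E(s)$ for the genuine marginal of Eve's view in $\Pi'$, I would expand the statistical distance by conditioning on the value $v$ taken by $View_E$, apply Bayes' rule using $\Pr(View_E=v)=2^{-k}\sum_{s}\Pr(View_E(s)=v)$, and simplify to obtain
\[
SD\bigl([S,View_E]\,,\,[U_k,View_E]\bigr)\;=\;2^{-k}\sum_{s\in\bset^k}SD\bigl(View_E(s)\,,\,\bar V\bigr).
\]
Then joint convexity of statistical distance gives $SD(View_E(s),\bar V)\le 2^{-k}\sum_{s'\in\bset^k}SD(View_E(s),View_E(s'))\le\epsilon$ by \eqref{sec}, and averaging over $s$ leaves the bound at $\epsilon$, which is \eqref{sec-SKE}. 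Hence $\Pi'$ is a $(k,c,\delta,\epsilon)$-SKE protocol over $\Stp$.

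The one step worth writing out carefully — and essentially the only place anything happens — is the displayed identity: one must keep straight that in $[U_k,View_E]$ the string $U_k$ is independent of $View_E$ while $View_E$ still ranges over its true protocol distribution, and check that the bookkeeping correctly turns the $v$-conditional distances into the unconditional average $2^{-k}\sum_s SD(View_E(s),\bar V)$. Everything else is a one-line averaging argument, which is exactly why the parameters $\delta$ and $\epsilon$ carry over verbatim from $\Pi$ to $\Pi'$.
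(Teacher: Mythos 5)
Your construction is exactly the one the paper uses (Alice picks a uniform key and transports it with $\Pi$), and your two averaging arguments correctly supply the details that the paper dismisses as ``trivial'': the identity $SD([S,View_E],[U_k,View_E])=2^{-k}\sum_{s}SD(View_E(s),\bar V)$ and the convexity bound $SD(View_E(s),\bar V)\le\epsilon$ are both right, so the parameters carry over verbatim as claimed. This is essentially the paper's proof, just written out in full.
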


It thus suffices to show an upper bound on the secret-key rates of $(\delta,\epsilon)$ SKE protocols in the full-access case. We do this in two steps. First, we show any SKE protocol can be made perfectly-reliable by increasing secrecy parameter by $\delta$. Second, we show that relaxing secrecy causes no more than a $\frac{1}{1-O(\epsilon)}$ factor improvement to the secrecy rate. The combination gives an upper-bound on achievable rates by $(\delta,\epsilon)$ SKE (and also PMT) protocols.

\ihdr{Step 1: Considering the effect of $\delta$ in SKE}
Perfectly-reliable communication is inherent in the full-access case: Alice and Bob have access to all paths and Eve is passive. Alice and Bob may however decide to introduce intentional error to their computation in order to decrease information leakage and/or to increase the secrecy rate. In his proof of weak to strong secret-key capacity for instance, Maurer \cite{Ma93} shows that Alice can convert her almost-uniform key string to a perfectly-uniform one by passing it through a slightly noisy channel (i.e., by adding intentional error). We show below that by adding intentional error, Alice and Bob may not obtain more than trivial improvement to their leakage in the full-access case. Lemma \ref{lemma-remove-delta} shows that $\delta$-reliability can be made perfect at the price of increasing the secrecy parameter by $\delta$ and without affecting the rate.
\begin{lemma}\label{lemma-remove-delta}
For any $(k,c,\delta,\epsilon)$-SKE protocol $\Pi$ in a full-access multipath setting $\Stp$, there is a $(k, c, 0,\epsilon')$-PMT protocol $\Pi'$ over $\Stp$ with
\[\epsilon' \leq \epsilon+\delta\]
\end{lemma}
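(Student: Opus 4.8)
The plan is to leave the communication of $\Pi$ completely unchanged --- so that $k$, $c$, and in particular Eve's view $View_E$ are identical in $\Pi$ and $\Pi'$ --- and merely change which string the two parties declare as their common secret. In $\Pi$, Alice holds $S$ and Bob holds $\hat{S}$, and these disagree with probability at most $\delta$. In $\Pi'$ I would have \emph{both} parties adopt Bob's value $\hat{S}$ as the key, so that perfect reliability ($\delta=0$) holds by construction. The one thing that needs justification is that Alice can produce $\hat{S}$ without any extra communication: in the full-access setting Alice has noiseless access to all $n$ paths and hence sees the whole transcript $C$, so after the routine normalization that makes Bob's output a deterministic function of the public transcript (equivalently, removing or publicizing Bob's private coins without changing $c$ or $View_E$) Alice simply recomputes $\hat{S}$ from $C$ herself. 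This normalization step --- and the fact that it genuinely uses the full-access hypothesis --- is where I expect the real work to lie.

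Given $\Pi'$, the remaining task is to bound its secrecy parameter $\epsilon' = SD([\hat{S},View_E],[U_k,View_E])$. Here I would apply the triangle inequality for statistical distance,
\[
SD([\hat{S},View_E],[U_k,View_E]) \;\le\; SD([\hat{S},View_E],[S,View_E]) + SD([S,View_E],[U_k,View_E]).
\]
The second term is at most $\epsilon$, which is exactly the uniformity-and-secrecy guarantee (\ref{sec-SKE}) of the original protocol $\Pi$. For the first term, note that $[\hat{S},View_E]$ and $[S,View_E]$ are two random variables on the same probability space that agree on the event $\{\hat{S}=S\}$; by the standard coupling bound $SD(\mathrm{law}(X),\mathrm{law}(Y))\le\Pr(X\ne Y)$ it is therefore at most $\Pr(\hat{S}\ne S)\le\delta$ by the reliability (\ref{reli-SKE}) of $\Pi$. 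Adding the two contributions gives $\epsilon'\le\epsilon+\delta$, and since not a single transmitted bit was altered the rate $k/c$ is preserved.

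What is then left is purely routine: checking that $\Pi'$ is a legitimate protocol over $\Stp$ (its messages coincide with those of $\Pi$, hence are admissible), that perfect reliability holds because both parties output the same $\hat{S}$, and --- if one wants it spelled out --- the elementary coupling inequality used above. All the conceptual content sits in reinterpreting the shared key as Bob's output together with the two-term triangle-inequality estimate; the main obstacle, as flagged, is arranging that Alice can reconstruct Bob's key from the public transcript without spending extra communication, which is precisely where the full-access hypothesis is used.
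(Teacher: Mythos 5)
Your proposal is correct and follows essentially the same route as the paper's proof: the paper likewise uses the full-access hypothesis to argue that Alice and Bob share the entire transcript, derandomizes Bob's key-derivation function by an averaging argument (which can only improve reliability), has Alice's output coincide with Bob's derandomized output, and then bounds the secrecy loss by $\epsilon+\delta$. The only difference is presentational: the paper carries out the final estimate by an explicit decomposition of the joint distribution over the agreement event $\{\hat{S}=S\}$, whereas you package the same bound as a triangle inequality plus the coupling bound $SD\le\Pr(\hat{S}\neq S)$, which is a cleaner way of saying the same thing.
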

\begin{proof}
See Appendix \ref{app-proof-remove-delta}.
\end{proof}

\ihdr{Step 2: Considering the effect of $\epsilon$ in SKE}
Thanks to Step 1, we now only focus perfectly-reliable SKE. Lemma \ref{lemma-remove-epsilon} proves us an upper-bound on the secret-key rates of such protocols. The proof uses the relation between statistical-distance secrecy and mutual-information secrecy to show that allowing for $\epsilon$ leakage causes only a factor of $\frac{1}{1-O(\epsilon)}$ improvement to the rate compared to perfectly-secret SKE.

\begin{lemma}\label{lemma-remove-epsilon}
Any $(k,c,0,\epsilon)$-SKE protocol in the $(n,t_e,\len)$-full-access setting satisfies $\frac{k}{c} \leq \frac{1-\frac{t_e}{n}}{1-1.25 \epsilon- \epsilon \log \epsilon}$.
\end{lemma}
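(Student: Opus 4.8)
The plan is to bound $\frac{k}{c}$ for any perfectly-reliable SKE protocol by converting the statistical-distance secrecy condition into a mutual-information statement, then invoking a known mutual-information upper bound on secret-key rates in the full-access passive setting. First I would use the standard implication between statistical distance and mutual information: from $SD\left([S, View_E],[U_k, View_E]\right) \leq \epsilon$, together with the fact that $S \in \bset^k$, one obtains (by a lemma of the Csisz\'ar–K\"orner / Holenstein–Renner type) a bound of the form $I(S; View_E) \leq O(\epsilon \log \epsilon) + O(\epsilon) \cdot k$ — more precisely something like $I(S;View_E) \leq 1.25\,\epsilon k - \epsilon \log \epsilon$ once the uniformity defect is folded in, where the $k$-factor comes from bounding $|H(U_k) - H(S)|$ by the support size via Fano-type estimates. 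I would state this as the first key step and cite it as a standard fact rather than reprove it.

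Second, I would establish the perfectly-secret baseline: in the $(n,t_e,\len)$-full-access setting, any protocol achieving $I(S;View_E)=0$ and perfect reliability has $\frac{k}{c} \leq 1 - \frac{t_e}{n}$. This is exactly the upper bound already invoked in the excerpt for $C^{\mathbf{FA}}_0$ (attributed to \cite{PCPS10}); the argument is that over any path, Eve sees a $\frac{t_e}{n}$-fraction of the transmitted symbols in expectation over the (worst-case) path choice, so each transmitted bit can contribute at most $1-\frac{t_e}{n}$ bits of ``secret'' information to the key, giving $k \leq c\,(1-\frac{t_e}{n})$ via an entropy-counting / information-flow argument. The hard part will be making the reduction from the $\epsilon$-secret case to this baseline quantitatively clean: I would argue that a protocol with $I(S;View_E) \leq \eta$ still yields $k - \eta \leq c\,(1-\frac{t_e}{n})$ — essentially, the ``leaked'' portion $\eta$ is subtracted from the useful key length, because one can post-process to extract roughly $k-\eta$ bits that are close to uniform and independent of $View_E$, or alternatively because the information-theoretic counting bound is additive in the leakage term.

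Combining the two steps: substituting $\eta = 1.25\,\epsilon k - \epsilon\log\epsilon$ into $k - \eta \leq c(1-\frac{t_e}{n})$ gives $k(1 - 1.25\epsilon) + \epsilon\log\epsilon \leq c(1-\frac{t_e}{n})$, hence
\[
\frac{k}{c} \leq \frac{1-\frac{t_e}{n}}{1 - 1.25\epsilon - \epsilon\log\epsilon},
\]
after dividing through (noting $\epsilon\log\epsilon \leq 0$ so it helps, and moving the $k\epsilon\log\epsilon$ term appropriately — here I would be slightly careful about whether the $\epsilon\log\epsilon$ term scales with $k$ or is absolute, and choose the statement consistent with the paper's $\Delta$-style bookkeeping). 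The main obstacle I anticipate is pinning down the exact constant $1.25$ and the precise form of the $\epsilon\log\epsilon$ term so that they match the lemma statement; this is a matter of choosing the right off-the-shelf ``statistical distance to mutual information'' inequality and tracking the uniformity defect of $S$ carefully, rather than any deep new idea. Everything else reduces to the perfect-secrecy capacity bound already cited in the excerpt plus Corollary \ref{corollary-SKE-PMT} and Lemma \ref{lemma-remove-delta}, which together justify restricting attention to perfectly-reliable SKE in the first place.
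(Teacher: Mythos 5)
Your proposal follows essentially the same route as the paper's proof: the paper bounds $H(S|Z)\le H(X|Z)\le (1-\frac{t_e}{n})\,c$ by counting the transcript bits Eve misses in the worst case over her path choices (maximized when bits are spread evenly across paths), and combines this with the statistical-distance-to-conditional-entropy conversion (Lemma \ref{MI-SD-interplay}) giving $H(S|Z)\ge(1-\epsilon')k$ with $\epsilon'=1.25\epsilon-\epsilon\log\epsilon$ --- which is exactly your two-step decomposition of ``leakage bound plus perfect-secrecy counting baseline, additive in the leakage.'' The only detail to tighten is the one you flag yourself: in the paper's version of the conversion lemma the entire entropy deficit, including the $\epsilon\log\epsilon$ term, scales with $k$, so the combination reads $k(1-\epsilon')\le c\,(1-\frac{t_e}{n})$ with no absolute additive term.
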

\begin{proof}
See Appendix \ref{app-remove-epsilon}.
\end{proof}

\ihdr{Combining the two steps}
Combining Lemmas \ref{lemma-remove-delta} and \ref{lemma-remove-epsilon} and applying them to Corollary \ref{corollary-SKE-PMT} gives us the following result for any $(\delta,\epsilon)$-PMT in the full-access setting.
\begin{theorem}\label{theorem-up-asymp-full}
There is no (one-way or two-way) $(k, c, \delta, \epsilon)$-PMT protocol in the $(n,t_e,\len)$-full-access setting with $\frac{k}{c} > (1-t_e/n)/(1-1.25 \epsilon'-\epsilon' \log \epsilon')$, where $\epsilon'=(\epsilon+\delta)/(1-\delta)$. This implies the $(\delta,\epsilon)$-secrecy capacity upper-bound
\[C^{\mathbf{FA}}_{\delta,\epsilon} \leq \frac{1-t_e/n}{1-1.25 \epsilon'-\epsilon' \log \epsilon'},\]
and the AP-secrecy capacity bounds (using the lower-bound in (\ref{bounds-C-FA-0}))
\begin{eqnarray}\label{bounds-C-FA-ap}
1-\frac{t_e}{n}-\Delta \leq C^{\mathbf{FA}}_{\sim 0} \leq 1-\frac{t_e}{n}.
\end{eqnarray}
\end{theorem}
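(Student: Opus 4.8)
\medskip\noindent\textbf{Proof proposal.}
The plan is to establish the per-protocol rate bound first, and then lift it to the two stated capacity bounds. Given an arbitrary (one-way or two-way) $(k,c,\delta,\epsilon)$-PMT protocol $\Pi$ over the $(n,t_e,\len)$-full-access setting, I would first apply Corollary~\ref{corollary-SKE-PMT} to get a $(k,c,\delta,\epsilon)$-SKE protocol with the same parameters. I would then apply Lemma~\ref{lemma-remove-delta} to this SKE protocol to trade its $\delta$-reliability for perfect reliability, at the cost of raising the secrecy parameter to $\epsilon'=(\epsilon+\delta)/(1-\delta)$ (the $1/(1-\delta)$ factor arising from conditioning on the reliability-error event not occurring), obtaining a perfectly-reliable, $\epsilon'$-secret SKE protocol over the same setting that communicates $c$ bits and agrees on a $k$-bit key. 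Finally, feeding this protocol into Lemma~\ref{lemma-remove-epsilon} yields $\frac{k}{c}\le \frac{1-t_e/n}{1-1.25\epsilon'-\epsilon'\log\epsilon'}$, which is the claimed per-protocol bound. Since Corollary~\ref{corollary-SKE-PMT} and both lemmas are stated for protocols with an arbitrary number of communication rounds, the bound holds verbatim in the interactive case, so no separate two-way analysis is required.

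To obtain the $(\delta,\epsilon)$-secrecy capacity bound, I would recall from Definition~\ref{def-secrecy-capacity} that $C^{\mathbf{FA}}_{\delta,\epsilon}$ is the supremum over all $(\delta,\epsilon)$-PMT families $\FF=(\Pi_i)_{i\in\Nds}$ of $R_{\FF:\delta,\epsilon}=\inf_i k_i/c_i$. Every $\Pi_i$ is a $(k_i,c_i,\delta,\epsilon)$-PMT protocol with the \emph{same} $\delta,\epsilon$, hence the same $\epsilon'$, so the per-protocol bound applies uniformly in $i$; taking the infimum over $i$ and then the supremum over families preserves the inequality and gives $C^{\mathbf{FA}}_{\delta,\epsilon}\le \frac{1-t_e/n}{1-1.25\epsilon'-\epsilon'\log\epsilon'}$.

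For the AP-secrecy capacity, the upper bound comes from a limiting argument. For an AP-PMT family $\FF=(\Pi_i)_i$ we have $\delta_i,\epsilon_i\to 0$, so $\epsilon_i'=(\epsilon_i+\delta_i)/(1-\delta_i)\to 0$ and therefore $1-1.25\epsilon_i'-\epsilon_i'\log\epsilon_i'\to 1$; hence, given any $\eta>0$, the per-protocol bound yields $k_i/c_i< 1-t_e/n+\eta$ for all sufficiently large $i$, so $R_{\FF:\sim 0}=\inf_i k_i/c_i\le 1-t_e/n+\eta$. Letting $\eta\to 0$ and taking the supremum over families gives $C^{\mathbf{FA}}_{\sim 0}\le 1-t_e/n$. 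For the matching lower bound I would note that the scheme $\FF^{alg}_0$ of Corollary~\ref{cor-FF-0-full-access} is a P-PMT family (all $\delta_i=\epsilon_i=0$) and hence in particular an AP-PMT family, so $C^{\mathbf{FA}}_{\sim 0}\ge C^{\mathbf{FA}}_0$, and then use the lower bound of (\ref{bounds-C-FA-0}) to conclude $1-\frac{t_e}{n}-\Delta\le C^{\mathbf{FA}}_{\sim 0}$, which together with the upper bound is exactly (\ref{bounds-C-FA-ap}).

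The heavy lifting lives one level down, inside Lemmas~\ref{lemma-remove-delta} and~\ref{lemma-remove-epsilon}; within this theorem the only genuinely delicate point is the limiting step for $C^{\mathbf{FA}}_{\sim 0}$. Individual protocols in an AP-PMT family may legitimately have rate strictly greater than $1-t_e/n$ (the capacity is not attained by any single protocol), so one cannot simply bound each $k_i/c_i$ by $1-t_e/n$; instead one must use the definition $R_{\FF:\sim 0}=\inf_i k_i/c_i$ together with $\epsilon_i'\to 0$ to argue that the infimum still cannot exceed $1-t_e/n$. A minor additional care-point is checking that the $\epsilon'$ produced by Lemma~\ref{lemma-remove-delta} is indeed $(\epsilon+\delta)/(1-\delta)$, as used in the theorem statement, rather than the weaker $\epsilon+\delta$.
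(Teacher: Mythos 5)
Your proposal is correct and follows essentially the same route as the paper, which proves this theorem precisely by chaining Corollary~\ref{corollary-SKE-PMT}, Lemma~\ref{lemma-remove-delta}, and Lemma~\ref{lemma-remove-epsilon} and then passing to the capacity definitions. On your final care-point: Lemma~\ref{lemma-remove-delta} as stated delivers $\epsilon'\leq\epsilon+\delta\leq(\epsilon+\delta)/(1-\delta)$, and since the bound of Lemma~\ref{lemma-remove-epsilon} is monotone increasing in the secrecy parameter, the theorem's stated bound with $\epsilon'=(\epsilon+\delta)/(1-\delta)$ follows a fortiori, so no extra conditioning argument is needed.
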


The bounds (\ref{bounds-C-FA-0}) and (\ref{bounds-C-FA-ap}) show that both secrecy capacities fall in the same range. Assuming that $\Delta$ is negligible, we reach that P-secrecy and AP-secrecy capacities both equal $1-\frac{t_e}{\tmin}$. With this assumption, we have the conclusion:

\begin{center}
{\it Relaxing security requirements from perfect to asymptotically-perfect does NOT help improve the secrecy rate in the ``full-access'' case.}
\end{center}

\section{PMT in the general multipath setting}
Although the relaxation of security to asymptotically perfect did not cause rate improvement to full-access PMT, it may be of benefit to the general case where $t_a,t_b \leq n$. This motivates us to extend our study to the general multipath setting. We start by considering P-PMT protocols and next look at AP-PMT protocols.

\subsection{P-PMT: capacity and construction}
We show that the P-secrecy capacity of any $(n,t_a,t_b,t_e,\len)$ multipath communication setting equals approximately $C_0 \approx [1-\frac{t_e}{\tmin}]_+$, where $\tmin=\min(t_a,t_b)$. For this we shall derive lower and upper bounds that are almost tight. The lower bound follows trivially by applying one-round P-PMT schemes $\FF^{pol}_0$ and $\FF^{alg}_0$ (for the full-access case) to a fixed set of $\tmin$ paths and forgetting about the rest $n-\tmin$ paths. But this protocol is simple and one-round; one may wonder if there are (possibly) multiple-round P-PMT protocols that achieve better rates. We prove an upper-bound on the secrecy rates of (one-way and interactive) P-PMT protocols, which shows there is no room for improving the P-secrecy rate of $1-\frac{t_e}{\tmin}$ in the general (one-way/two-way) multipath setting. The results also conclude the impossibility of P-PMT (of any rate) when $t_e \geq \tmin$.

\bhdr{Lower-bound via one-round P-PMT}
The lower-bound on $C_0$ is attained by using the same schemes $\FF^{pol}_0$ and $\FF^{alg}_0$ (selectively based on whether $2n-t < 2^\len$), however, over a fixed (hard-coded) set of $\tmin$ paths. By fixing these paths, Alice and Bob can realize a $(\tmin,t_e,\len)$-full-access setting for which the above P-PMT constructions promise the secrecy rate of $1-\frac{t_e}{\tmin} -(2^{\frac{\len}{2}-2}-0.25)^{-1}$ (see (\ref{bounds-C-FA-0})). This of course requires that $t_e \leq \tmin$.

\bhdr{Upper-bound on all P-PMT rates}
We derive an upper-bound on the secrecy rate of any (possibly multiple-round) P-PMT protocol that can be designed for a general multipath setting. Informally, the upper-bound is proved by showing that to provide ``perfect'' secrecy, Alice and Bob should be prepared for the ``worst'' case when Eve captures $t_e$ of their $\tmin$ common communication paths, even if such a worst case occurs with a very small probability. This is similar to a setting where Alice and Bob have full access to a total of $\tmin$ (public) paths, $t_e$ of which can be accessed by Eve; hence the maximum rate $1-t_e/\tmin$.

\begin{lemma}\label{lemma-upperbound-0}
There is no (possibly multiple-round) $(k,c,0,0)$-PMT protocol over the $(n,t_a,t_b,t_e,\len)$-multipath setting with rate $R=\frac{k}{c} > [1-\frac{t_e}{\tmin}]_+$.
\end{lemma}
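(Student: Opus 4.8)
The plan is to prove the bound by a direct information‑theoretic counting argument: perfect reliability forces the whole communication transcript to determine the message, while perfect secrecy forces the part of the transcript seen by a well‑chosen eavesdropper to be independent of it; the chain rule for mutual information then caps the message entropy by a $[1-t_e/\tmin]_+$ fraction of the $c$ transmitted bits. Concretely, for an execution of $\Pi$ let $P_i$ be the set of paths over which a block is both sent and received in interval $i$; since receipt requires the path to lie in $A_i\cap B_i$, we have $|P_i|\le \tmin$. Write $X_{i,j}$ for the block sent on path $j$ in interval $i$ (empty if $j\notin P_i$) and $T=(X_{i,j})_{i,j}$ for the full transcript, so $\sum_{i,j}|X_{i,j}|\le c$. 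From $\delta=0$, Bob's output equals $M$, and since Bob's view consists of $T$ together with his private coins $R_B$, we get $H(M\mid T,R_B)=0$. A short argument removes $R_B$: for a one‑way protocol $T$ is a function of $(M,R_A)$, hence $M\perp R_B\mid T$ and $H(M\mid T)=0$; the interactive case is reduced to this by conditioning on a positive‑probability value of $R_B$, under which reliability still holds (as $\delta=0$) and $M$ is still uniform. Either way $H(M)=I(M;T)$.

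For the secrecy side I would consider the eavesdropper who, independently in each interval, picks a uniformly random $t_e$‑subset $E_i$ of the $n$ paths. For every fixed $\mathbf{e}=(e_i)_i$, $\epsilon=0$ gives $I\bigl(M;(X_{i,j})_{j\in e_i}\bigr)=0$. The key observation is that $SD=0$ secrecy is inherited under conditioning on any event measurable with respect to the eavesdropper's view (if $\Pr[\mathrm{View}_E=v\mid M=s]$ does not depend on $s$, neither does $\Pr[\mathrm{View}_E=v\mid M=s,\mathcal{A}]$ for $\mathcal{A}$ a set of views); in particular this applies to the event $\mathcal{A}$ that in every interval all of the eavesdropper's chosen paths carry communication, i.e.\ $E_i\subseteq P_i$ --- which she reads off her view and which has positive probability whenever every $|P_i|\ge t_e$. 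Conditioned on $\mathcal{A}$ she observes $t_e$ of the $\le \tmin$ active blocks per interval and still learns nothing, so the chain rule gives
\begin{equation*}
k=I(M;T)=I\bigl(M;(X_{i,j})_{j\in E_i}\bigr)+I\bigl(M;(X_{i,j})_{j\in P_i\setminus E_i}\mid(X_{i,j})_{j\in E_i}\bigr)\le H\bigl((X_{i,j})_{j\in P_i\setminus E_i}\bigr),
\end{equation*}
and averaging over which $t_e$ blocks are missed, together with $|P_i|\le\tmin$, bounds this conditional entropy by $[1-t_e/\tmin]_+\sum_{i,j}|X_{i,j}|\le[1-t_e/\tmin]_+\,c$. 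Hence $R=k/c\le[1-t_e/\tmin]_+$; and when $t_e\ge\tmin$ this forces $k\le 0$, so no P‑PMT protocol exists in that regime, which is exactly what the statement asserts.

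The step I expect to be the main obstacle is legitimizing ``the eavesdropper sees $t_e$ of the active paths and this leaks nothing'' when the set of active paths --- even its size, and which blocks are longest --- may depend on the message and on Alice's and Bob's private coins, while the eavesdropper must commit to her $t_e$ paths at the start of each interval knowing none of this. Two points must be made precise: (a) the view‑measurability of the event ``all my chosen paths are live,'' so that conditioning on it preserves $\epsilon=0$; and (b) the treatment of intervals in which fewer than $t_e$ paths are active and of the imbalance among block lengths, so the per‑interval loss is genuinely at most the $[1-t_e/\tmin]_+$ fraction and not merely the weaker $[1-t_e/n]_+$ fraction a blind uniform choice would yield. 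An equivalent, perhaps cleaner packaging is as a reduction: conditioning on a positive‑probability execution in which $|A_i\cap B_i|=\tmin$ throughout and on the view‑measurable event that the eavesdropper's paths fall inside these $\tmin$ common paths turns the instance into a $(\tmin,t_e,\len)$‑full‑access one, whereupon the full‑access upper bound $C^{\mathbf{FA}}_0\le 1-t_e/n$ with $n:=\tmin$ delivers the claim.
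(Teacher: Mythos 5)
Your proposal is essentially the paper's own proof: both rest on the two pillars that perfect reliability forces the message to be determined by the blocks on the at most $\tmin$ paths common to Alice and Bob, while perfect secrecy must survive the positive-probability realization in which Eve's $t_e$ paths land maximally inside that common set, after which a chain-rule counting argument caps $k$ by the bits on the remaining $[\tmin-t_e]_+$ paths. The only difference is presentational --- the paper takes a minimum over Eve's views $z$ and uses $H(S\mid Z=z)=H(S)$ pointwise rather than conditioning a uniformly random eavesdropper on the lucky event you describe --- and the view-measurability and randomized-path-selection subtleties you flag are not treated any more rigorously in the paper's version.
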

\begin{proof}
See Appendix \ref{app-proof-upperbound-0}.
\end{proof}
The upper-bound implies that the P-PMT Scheme based on $\FF^{pol}_0$ or $\FF^{alg}_0$ is nearly optimal. Theorem \ref{theorem-0secrecy} summarizes the results on the P-secrecy capacity in the general setting.

\begin{theorem}\label{theorem-0secrecy}
The P-secrecy capacity of any $(n,t_a,t_b,t_e,\len)$ multipath setting is bounded as $L_0 \leq C_{0} \leq U_0$, where
\begin{eqnarray}
L_0 \stackrel{\triangle}{=} [1-\frac{t_e}{\tmin}-(2^{\frac{\len}{2}-2}-0.25)^{-1}]_+, ~~\mbox{and}~~ U_0 \stackrel{\triangle}{=} [1-\frac{t_e}{\tmin}]_+.
\end{eqnarray}
and the lower-bound is achieved by an explicit one-round PMT protocol.
\end{theorem}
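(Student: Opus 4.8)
The plan is to prove the two inequalities separately. The upper bound $C_0 \le U_0$ is essentially a restatement of Lemma~\ref{lemma-upperbound-0}: given any P-PMT family $\FF=(\Pi_i)_{i\in\Nds}$ over the $(n,t_a,t_b,t_e,\len)$-setting, each $\Pi_i$ is a $(k_i,c_i,0,0)$-PMT protocol, so Lemma~\ref{lemma-upperbound-0} forces $k_i/c_i \le [1-\frac{t_e}{\tmin}]_+$ for all $i$; hence $R_{\FF:0}=\inf_i k_i/c_i \le [1-\frac{t_e}{\tmin}]_+$, and taking the supremum over all P-PMT families gives $C_0 \le U_0$. So the interesting content of the upper bound lives entirely inside Lemma~\ref{lemma-upperbound-0}, and that is where I expect the main obstacle. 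The hard part there is that the lemma must rule out \emph{all} protocols, including multi-round interactive ones; the approach I would take is a worst-case argument showing that perfect secrecy obliges Alice and Bob to remain secure even in the (possibly low-probability) event that Eve's $t_e$ paths all fall among the $\tmin$ paths common to Alice and Bob, which collapses the situation to a $(\tmin,t_e,\len)$-full-access setting and lets one invoke the known $1-t_e/n$-type impossibility bound with $n$ replaced by $\tmin$.

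For the lower bound $C_0 \ge L_0$ I would exhibit the explicit one-round construction. If $t_e \ge \tmin$ then $L_0=0$ and there is nothing to prove, so assume $t_e<\tmin$. Since $t_a,t_b\ge\tmin$, I fix once and for all a public set $P$ of $\tmin$ paths that both Alice and Bob can always access, and instruct both parties to use only the paths of $P$ in every time interval. Because $P$ is public, the worst case is that Eve places all $t_e$ of her paths inside $P$, so Alice and Bob effectively operate in an $(\tmin,t_e,\len)$-full-access setting. I then apply the full-access P-PMT schemes already analysed: run $\FF^{pol}_0$ when $2\tmin - t_e \le 2^\len$ (Proposition~\ref{proposition-FF0}) and $\FF^{alg}_0$ otherwise (Corollary~\ref{cor-FF-0-full-access}). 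Either way one obtains a P-PMT family over the restricted setting --- hence over the original multipath setting --- with P-secrecy rate at least $[1-\frac{t_e}{\tmin}-(2^{\frac{\len}{2}-2}-0.25)^{-1}]_+ = L_0$, matching the full-access bound~(\ref{bounds-C-FA-0}). The resulting protocol is one-round and oblivious to $n$ and to $t_a,t_b$ beyond $\tmin$.

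Two checks remain, and both are routine rather than obstacles: perfect reliability holds because Alice and Bob fully control every path in $P$ while Eve is passive, so every share is delivered intact; and perfect secrecy holds because restricting the communication to $P$ cannot reveal to Eve more than the $t_e$ shares she would see in the underlying full-access scheme, whose SSS secrecy guarantee then applies verbatim. Assembling the upper bound of the first paragraph with this construction yields $L_0 \le C_0 \le U_0$, with the lower bound achieved by the explicit one-round protocol --- which is exactly the statement of the theorem.
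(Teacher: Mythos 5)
Your proposal is correct and follows essentially the same route as the paper: the lower bound is obtained by hard-coding a public set of $\tmin$ paths and running the full-access schemes $\FF^{pol}_0$/$\FF^{alg}_0$ over the resulting $(\tmin,t_e,\len)$-full-access setting, and the upper bound is exactly Lemma~\ref{lemma-upperbound-0} applied protocol-by-protocol to any P-PMT family. The paper treats the theorem as a summary of these two ingredients rather than giving a separate proof, so there is nothing further to reconcile.
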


\subsection{AP-PMT: Capacity and constructions}
We showed that in the full-access case, allowing for arbitrarily small $\delta,\epsilon>0$ leads to the same PMT rate upper-bound as if $\delta=\epsilon=0$ (see Section \ref{subsec-AP-full}). The result however cannot be applied to the general multipath setting: The upper-bound proof of Theorem \ref{theorem-up-asymp-full} does not apply to the $(\delta,\epsilon)$-secrecy capacity in the partial-access case. This is because Lemma \ref{lemma-remove-delta} only works for the full-access case where reliable communication is inherent. This leaves us with the trivial upper-bound of
\begin{eqnarray}\label{upperbound-C-ap0}
C_{\sim 0} \leq U_{\sim 0} \stackrel{\triangle}{=} 1-\frac{t_e}{n}.
\end{eqnarray}

At first look, the above upper seems far from being tight. It seems impossible for Alice and Bob to reach secrecy rates up to $1-\frac{t_e}{n}$, which is regardless of their access capabilities $t_a$ and $t_b$.We prove provided that $\len$ is sufficiently large, the upper bound is almost tight and there are AP-PMT families which can get close to this rate even if $t_a$ and $t_b$ are small. For one-way multipath setting, the required connectivity condition is $t_b > t_e$; for two-way setting however, AP-PMT is always possible only if $t_e < n$ and $t_a,t_b >0$. We introduce three AP-PMT schemes, two of which are one-round constructions for the one-way multipath setting and the last is a two-round construction for the two-way setting.

\subsubsection{AP-PMT Approach}
All three AP-PMT schemes, we introduce below, consist of two primitive blocks: (i) \emph{low-rate key establishment block} and (ii) \emph{high-rate coordinated PMT block}. Both blocks take use of the algebraic-geometric quasi-ramp SSS of Section \ref{subsec-SSS}. The key-establishment block allows Alice and Bob to share a long enough secret-key $W$; this can be via either \emph{one-round key transport} or \emph{interactive key agreement} depending on the setting privileges. By the coordinated PMT block, Alice sends her message over secret paths chosen based on $W$: Since Eve is unaware of $W$, the coordinated PMT rate can be as high as (almost) $1-\frac{t_e}{n}$, as if Alice and Bob possess full access. Provided that the communication overhead of the key-establishment block is negligible, the overall rate of the scheme tends toward that of the second block, i.e., $1-\frac{t_e}{n}$. The size of the coordination key $W$ depends on how long message is to be transmitted. We use parameters $q_1, q_2 \in \Nds$ for the number of time-intervals for key establishment and coordinated PMT, respectively; hence, to allow for flexible message length transmission. Each interval of coordinated PMT requires $w=\lceil \log {n \choose \tmin} \rceil$ bits of shared key for Alice and Bob to choose secret paths; on the other hand, each key-establishment interval produces some $r_1 \len$ (defined in the sequel) bits of key. This clarifies the relation $q_1 r_1 \len = q_2 w$ between the number of intervals. In the following, we provide detailed descriptions for the AP-PMT schemes separately.

\subsubsection{One-round AP-PMT for $t_e \leq \tmin$}
We introduce a one-round AP-PMT scheme $\FF_1$ with an AP-secrecy rate close to $1-\frac{t_e}{n}$. The scheme has perfect reliability, but allows for negligible leakage. It composes a key-transport block (made by P-PMT) and a coordinated PMT block as follows. Given the $(n,t_a,t_b,t_e,\len)$ multipath setting, define $w=\lceil \log{n~ \choose \tmin} \rceil$. For arbitrarily small $\psi>0$, and sufficiently large $q_1 \in \Nds$ (to be defined in Theorem \ref{theorem-FF1}), define
\footnote{Here, we assume that $q_1$ is chosen such that $w$ divides $r_1 \len$.}
\begin{eqnarray}
&& g_1= \lceil \frac{q_1 (2 \tmin -t_e)}{2^{\len/2}-1}\rceil,~~~ r_1= q_1(\tmin -t_e) - 2 g_1 \nonumber \\
&& q_2 = \frac{r_1 \len}{w},~~~ t'_e=(1+\psi)\frac{\tmin t_e}{n},~~~ g_2 =\lceil \frac{q_2 (2 \tmin-t'_e)}{2^{\len/2}-1} \rceil,~~~ r_2= q_2 (\tmin-t'_e)-2g_2. \label{tk-FF1}
\end{eqnarray}

Let $(\share_{alg,1},\rec_{alg,1})$ denote the algebraic-geometric $(q_1 \tmin - 2g_1, r_1, g_1, q_1 \tmin)$-quasi-ramp SSS over  $\Fds_{2^\len}$ used for key transport, and $(\share_{alg,2},\rec_{alg,2})$ denote the algebraic-geometric $(q_2 \tmin - 2g_2, r_2, g_2, q_2 \tmin)$-quasi-ramp SSS over  $\Fds_{2^\len}$ used for coordinated PMT. Let ${\cal T}_0$ be a set of fixed $\tmin$ paths publicly known to the protocol. Let $S \in \Fds_{2^\len}^{r_2}$ be the message to be transmitted by Alice.

\ihdr{One-round $(0,\epsilon)$-PMT scheme $\FF_1$}
\begin{enumerate}[(i)]
\item \emph{Key transport (intervals $1$ to $q_1$).} Alice generates random keys $W = (W_1,\dots, W_{q_2}) \in \bset^{q_2 w}$; note that $q_2 w = r_1 \len$. She obtains $q_1 \tmin$ shares of $W$ as $\left( X_{i,j}\right)_{1\leq i \leq q_1, 1\leq j \leq \tmin} = \share_{alg,1}(W)$ and sends the $\tmin$ shares $\left(X_{i,j}\right)_{1\leq j \leq \tmin}$ over distinct paths in ${\cal T}_0$ during interval $1\leq i\leq q_1$. Having received all shares, Bob reconstructs the transported key as $W=\rec_{alg,1}(X)$. The number of bits communicated at this stage equals:
    \begin{eqnarray}
        c_1= q_1 \tmin \len.  \label{c1-FF1}
    \end{eqnarray}
\item \emph{Coordinated PMT (intervals $q_1+1$ to $q_1+q_2$).} Alice and Bob use key element $W_i \in \bset^w$ to agree on a path set ${\cal T}_i$ of size $\tmin$ for interval $q_1+i$. This gives them a total of $q_2 \tmin$ paths available for $q_2$ consecutive time-intervals. Alice calculates shares of her message as $\underline{Y}=\left( Y_{i,j}\right)_{1\leq i \leq q_2, 1\leq j \leq \tmin} = \share_{alg,2}(S)$ and sends the $\tmin$ shares $\left(Y_{i,j}\right)_{1\leq j \leq \tmin}$ over ${\cal T}_i$ during interval $q_1+i$. Having received all shares $\underline{Y}$, Bob calculates $S=\rec_{alg,2}(\underline{Y})$. This stage requires communicating the following number of bits:
    \begin{eqnarray}
        c_2= q_2 \tmin\len. \label{c2-FF1}
    \end{eqnarray}
\end{enumerate}

The secrecy rate of block (ii) equals $\frac{r_2 \len}{c_2} \approx 1 - \frac{t_e'}{\tmin}$ which tends to $1 - \frac{t_e}{n}$ for arbitrarily small $\psi>0$. The overall secrecy rate of $\FF_1$ is lower due to the communication overhead of block (i). Theorem \ref{theorem-FF1} shows the achievable rates by $\FF_1$ as a function of setting parameters.

\begin{theorem}\label{theorem-FF1}
For any small $\psi,\epsilon>0$, the scheme $\FF_1$ gives $(0,\epsilon)$-PMT and AP-PMT families over an $(n,t_a,t_b,t_e,\len)$ multipath setting with $t_e <\tmin \leq n$. The AP-secrecy rate of the scheme equals
\[R_{\FF_1:\sim 0} =\frac{1-\frac{t_e}{n}-\Delta}{1+\xi_1}, ~~\mbox{where}~~ \xi_1=\frac{\log(\frac{en}{\tmin})}{\len (1-\frac{t_e}{\tmin}-\Delta)} ~~\mbox{and}~~ \Delta=(2^{\frac{\len}{2}-2}-0.25)^{-1}.\]
\end{theorem}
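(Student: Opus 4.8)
The plan is to verify three things in turn: perfect reliability ($\delta=0$), $\epsilon$-secrecy with $\epsilon$ tending to $0$ as the message grows, and the claimed rate. Reliability is immediate. Every path set used for transmission --- the public set ${\cal T}_0$ in block (i) and each coordination set ${\cal T}_i$ in block (ii) --- has size $\tmin\le t_a,t_b$, so Alice and Bob can access all of their paths at every interval. Thus in block (i) Bob collects all $q_1\tmin$ shares of $W$; since $(\share_{alg,1},\rec_{alg,1})$ is a $(q_1\tmin-2g_1,r_1,g_1,q_1\tmin)$-quasi-ramp SSS, its reconstruction threshold is $(q_1\tmin-2g_1)+2g_1=q_1\tmin$, so Bob recovers $W$ exactly, hence knows every ${\cal T}_i$, hence in block (ii) collects all $q_2\tmin$ shares of $S$ and recovers $S$ by the analogous threshold for $(\share_{alg,2},\rec_{alg,2})$. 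Hence $\delta=0$.

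The substance of the proof is secrecy. First I would show block (i) leaks nothing about $W$: over its $q_1$ intervals Eve sees at most $t_e<\tmin$ of the $\tmin$ paths of ${\cal T}_0$ per interval, i.e.\ at most $q_1t_e$ of the shares of $W$, while the unqualified threshold of $(\share_{alg,1},\rec_{alg,1})$ is $(q_1\tmin-2g_1)-r_1=q_1t_e$; so her block-(i) view is independent of $W$. Because $W=(W_1,\dots,W_{q_2})$ is uniform with independent blocks, each ${\cal T}_i$ is then (essentially) uniform over the $\tmin$-subsets of $[n]$ and independent of Eve's view prior to interval $q_1+i$ (a constant increase of $w$ makes the $W_i\mapsto{\cal T}_i$ sampler arbitrarily close to uniform without changing the asymptotic rate). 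Consequently the number $E_i$ of shares of $S$ that Eve observes in interval $q_1+i$ is dominated by a hypergeometric variable of mean $\tmin t_e/n$, and the $E_i$ form a bounded-range martingale-difference sequence even against an adaptive Eve --- the key point being that as long as Eve has so far seen at most $q_2t'_e$ shares of $S$, the joint distribution of those shares is independent of $S$ (they are a subset of size at most the unqualified threshold $q_2t'_e$ of $(\share_{alg,2},\rec_{alg,2})$), so her adaptive path choices, and whether the ``overflow'' event $\{\sum_iE_i>q_2t'_e\}$ occurs, are governed by quantities independent of $S$. An Azuma/Hoeffding bound then gives $\Pr[\sum_iE_i>q_2t'_e]=\Pr[\sum_iE_i>(1+\psi)q_2\tmin t_e/n]\le p$ with $p=e^{-\Omega(\psi^2q_2)}$. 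On the complementary event Eve holds at most $q_2t'_e$ shares of $S$ which, together with her block-(i) view, are independent of $S$; truncating the adversary's transcript on the overflow event to a fixed symbol (this changes its law by at most $p$ and makes it exactly independent of $S$) yields $SD(View_E(s_1),View_E(s_2))\le 2p=:\epsilon$. Since $p\to0$ as $q_2$ (hence the message length $r_2\len$) grows, $\FF_1$ gives an AP-PMT family; fixing $q_1$ large enough that $2p\le\epsilon$ and repeating the whole scheme to reach all lengths gives a $(0,\epsilon)$-PMT family.

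For the rate, the message is $k=r_2\len$ bits and the communication is $c=c_1+c_2=(q_1+q_2)\tmin\len$ bits by \eqref{c1-FF1}--\eqref{c2-FF1}, so $R=\frac{r_2}{(q_1+q_2)\tmin}$. Substituting $g_2=\lceil q_2(2\tmin-t'_e)/(2^{\len/2}-1)\rceil$ gives $r_2/(q_2\tmin)\to 1-\frac{t'_e}{\tmin}-\frac{2(2-t'_e/\tmin)}{2^{\len/2}-1}\ge 1-\frac{t'_e}{\tmin}-\Delta$ as $q_1\to\infty$, and from $q_2w=r_1\len$ together with $r_1/q_1\to\tmin\bigl(1-\frac{t_e}{\tmin}-\frac{2(2-t_e/\tmin)}{2^{\len/2}-1}\bigr)$ and $w=\lceil\log\binom{n}{\tmin}\rceil\le\tmin\log\frac{en}{\tmin}+1$ one gets $q_1/q_2\to\frac{w}{\tmin\len(1-t_e/\tmin-\cdots)}$, which is at most $\xi_1$ up to lower-order terms. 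Plugging these in and then letting $\psi\to0$ (so $t'_e/\tmin\to t_e/n$) yields $R\to\frac{1-t_e/n-\Delta}{1+\xi_1}$; the ceilings in $g_1,g_2$ perturb $R$ only by $O(1/q_1)$, so the family built from growing $q_1$, with $\psi$ an arbitrarily small scheme parameter, attains the stated AP-secrecy rate.

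The main obstacle is the secrecy step against an \emph{adaptive} Eve: one must argue carefully that the hidden coordination sets remain conditionally uniform and mutually independent given Eve's running view, so that a concentration inequality genuinely limits how many shares of $S$ she can gather, and then dispose of the conditioning on the good event without circularity (the truncation trick above). By contrast, reliability and the rate computation are routine once the reconstruction and privacy thresholds of the algebraic-geometric quasi-ramp SSS of Section~\ref{subsec-SSS} are in hand.
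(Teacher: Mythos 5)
Your proposal is correct and follows the same skeleton as the paper's proof in Appendix~E: reliability is immediate from the reconstruction threshold of the quasi-ramp SSS, secrecy reduces to bounding the probability of the overflow event $\{\sum_i T'_i > q_2 t'_e\}$ where $T'_i$ is the overlap between Eve's paths and the secret set ${\cal T}_i$, and the rate follows from the same algebra on $r_1, r_2, g_1, g_2, q_1, q_2$ using Stirling's bound ${n \choose \tmin} < (en/\tmin)^{\tmin}$. Where you genuinely diverge is in the secrecy step, and your version is the more careful one. The paper simply declares the normalized overlaps $T'_i/\tmin$ to be ``independent and identically distributed'' hypergeometric variables and applies a Chernoff bound; this silently assumes Eve's path choices in interval $q_1+i$ are independent of ${\cal T}_i$, which against an adaptive eavesdropper requires exactly the argument you supply --- that as long as she has seen at most the unqualified threshold of shares, her view (and hence her adaptive choices and the overflow indicator) is distributed independently of $S$ and of the yet-unrevealed ${\cal T}_j$'s, so a martingale/Azuma bound applies and the conditioning can be removed by truncating the transcript on the bad event. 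You also flag two points the paper glosses over entirely: the secrecy of block~(i) (Eve's $q_1 t_e$ shares sit exactly at the unqualified threshold $q_1\tmin - 2g_1 - r_1$), and the fact that the map $W_i \mapsto {\cal T}_i$ from $w$-bit strings to $\tmin$-subsets cannot be exactly uniform, which you repair at no asymptotic cost. The paper's route buys brevity; yours buys a proof that actually withstands an adaptive adversary, which is the adversary model the setting permits. One minor quibble: your truncation argument gives $SD \le 2p$ where the paper claims $\le p$ directly by bounding the distance by the probability of the bad event; both are fine since the constant is absorbed into the choice of $q_2$.
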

\begin{proof}
See Appendix \ref{app-proof-FF1}.
\end{proof}

\begin{remark}
It is crucial to use an algebraic-geometric SSS, particularly for the second block, coordinated PMT. Expecting arbitrarily small $\epsilon>0$ requires using SSS with sufficiently many ($q_2 \tmin$) shares for constant field size $2^\len$.
\end{remark}

The secrecy rate $R_{\FF_1:\sim 0}$ is close to the upper-bound expression (\ref{upperbound-C-ap0}), except there is a subtracting factor $\Delta$ due to using quasi-threshold SSS and a divisive factor $(1+\xi_1)$ because of the communication overhead $\xi_1$ of the key-transport block. Comparing this rate with the P-secrecy capacity shows an improvement of secrecy rates by using AP-PMT.

\subsubsection{One-round AP-PMT for $t_e\geq \tmin$}
The scheme $\FF_1$ cannot achieve any positive secrecy rate when $t_e\geq \tmin$; hence, the AP-secrecy capacity for this range is unknown. We observe the following two limitations of $\FF_1$.
\begin{itemize}
\item \emph{$\FF_1$ has perfect reliability.} Allowing for positive but small failure probability $\delta>0$ in transmission may lead to higher secrecy rates.
\item \emph{$\FF_1$ is non-interactive.} It may be possible to improve the rate via interactive communication between Alice and Bob.
\end{itemize}

The above two reasons encourage us to investigate families of AP-PMT schemes that allow non-perfect reliability and/or are interactive. We first consider only one-way communication and study whether relaxing perfect reliability can make AP-PMT possible even when $t_e\geq \tmin$.

Below, we describe the PMT scheme $\FF_2$ which relaxes both reliability and secrecy of transmission, but results in PMT possibility in a wider connectivity range of $t_e<t_b$. Here, the protocol fixes a set ${\cal T}_0$ of $\max(t_a,t_b) \leq n' \leq n$ (instead of $\tmin$) paths for key transport, i.e., Alice sends over $t_a$ paths and Bob listens over $t_b$ paths, both of which are selected independently at random (if applicable) from the $n'$ fixed paths. The reliability is not perfect since the path selection can be random. The details of the scheme are given below.

Given the $(n,t_a,t_b,t_e,\len)$ multipath setting, let $w=\lceil \log{n~ \choose \tmin} \rceil$. For arbitrarily small $\psi>0$, and sufficiently large $q_1 \in \Nds$ (to be defined in Theorem \ref{theorem-FF2}), define
\footnote{Here, we assume that $q_1$ is chosen such that $w$ divides $r_1 \len$.}
\begin{eqnarray}
&& t'_{b,1}= (1-\psi)\frac{t_a t_b}{n'}, ~~~ t'_{e,1}=(1+\psi)\frac{t_a t_e}{n'}, ~~~ g_1= \lceil \frac{q_1 (t_a+t'_b-t_e')}{2^{\len/2}-1}\rceil,~~~ r_1= q_1(t'_{b,1}-t'_{e,1}) - 2 g_1 \nonumber \\
&& q_2 = \frac{r_1 \len}{w},~~~ t'_{e,2}=(1+\psi)\frac{\tmin t_e}{n},~~~ g_2 =\lceil \frac{q_2 (2 \tmin-t'_{e,2})}{2^{\len/2}-1} \rceil, ~~~ r_2= q_2(\tmin-t'_{e,2}) - 2 g_2. \label{tk-FF2}
\end{eqnarray}
Let $(\share_{alg,1},\rec_{alg,1})$ denote the algebraic-geometric $(q_1 t'_{b,1} - 2g_1, r_1, g_1, q_1 t_a)$-quasi-ramp SSS over field $\Fds_{2^\len}$ used for key transport, and $(\share_{alg,2},\rec_{alg,2})$ denote the algebraic-geometric $(q_2 \tmin - 2g_2, r_2, g_2, q_2 \tmin)$-quasi-ramp SSS over field $\Fds_{2^\len}$ used for coordinated PMT. Let $S \in \Fds_{2^\len}^{r_2}$ be the message to be transmitted by Alice.

\ihdr{One-round $(\delta,\epsilon)$-PMT scheme $\FF_2$}
\begin{enumerate}[(i)]
\item \emph{Key transport (intervals $1$ to $q_1$).} Alice generates random keys $W = (W_1,\dots, W_{q_2}) \in \bset^w$ and obtains $q_1 t_a$ shares of $W$ as $\left( X_{i,j}\right)_{1\leq i \leq q_1, 1\leq j \leq t_a} = \share_{alg,1}(W)$. In each round $1\leq i\leq q_1$, she sends the $t_a$ shares $\left(X_{i,j}\right)_{1\leq j \leq t_a}$ over $t_a$ (possibly) random paths from ${\cal T}_0$, and Bob listens over $t_b$ (possibly) random paths from ${\cal T}_0$. If Bob's overall observation $\underline{X}'$ includes less than $q_1 t'_{b,1}$ share elements ($X_{i,j}$'s), he aborts and chooses a random $\hat{S} \in \Fds_{2^\len}^{r_2}$; otherwise, he reconstructs the transported key as $W=\rec_{alg,1}(\underline{X}')$. The number of bits communicated at this stage equals:
    \begin{eqnarray}
        c_1 = q_1 t_a \len.  \label{c1-FF2}
    \end{eqnarray}
\item \emph{Coordinated PMT (intervals $q_1+1$ to $q_1+q_2$).} Alice and Bob use key element $W_i \in \bset^w$ to agree on a path set ${\cal T}_i$ of size $\tmin$ for interval $i+q_1$. This gives them a total of $q_2 \tmin$ secret paths available for $q$ consecutive time-intervals. Alice calculates shares of message as $\underline{Y}=\left( Y_{i,j}\right)_{1\leq i \leq q_2, 1\leq j \leq \tmin} = \share_{alg,2}(S)$ and sends the $\tmin$ shares $\left(Y_{i,j}\right)_{1\leq j \leq \tmin}$ over ${\cal T}_i$ during interval $i+q_1$. Having received all shares $\underline{Y}$, Bob calculates $S=\rec_{alg,2}(\underline{Y})$. At this stage, the number of communicated bits is:
    \begin{eqnarray}
        c_2= q_2 \tmin \len. \label{c2-FF2}
    \end{eqnarray}
\end{enumerate}

Like $\FF_1$, the coordinated PMT rate in the above scheme tends to $1-\frac{t_e}{n}$, whereas the overall rate is lower due to the key-transport overhead. The difference with $\FF_1$ is that $\FF_2$ does not use perfectly-reliable key transport; this allows for pushing the multipath connectivity condition to $t_e>t_b$. The scheme relies on the following assumptions: In key transport, Bob receives at least $q_1 t'_{b,1}$ shares and Eve receives at most $q_1 t'_{e,1}$ shares of $W$; furthermore, in coordinated PMT Eve receives at most $q_2 t'_{e,2}$ shares of $S$. If the above assumptions hold, the PMT becomes perfectly reliable and secret thanks for the secrecy and reconstruction properties of the two quasi-ramp SS schemes. The $\delta$-reliability and $\epsilon$-secrecy properties come from the case that the assumptions fail.
\begin{theorem}\label{theorem-FF2}
For any small $\psi, \delta, \epsilon>0$, the scheme $\FF_2$ gives $(\delta,\epsilon)$-PMT and AP-PMT families over any $(n,t_a,t_b,t_e,\len)$ multipath setting with $t_e<t_b$. The AP-secrecy rate of this scheme reaches
\[R_{\FF_2:\sim 0} = \frac{1-\frac{t_e}{n}-\Delta}{1+\xi_2}, ~~\mbox{where}~~ \xi_2 = \frac{\log(\frac{e n}{\tmin})}{\len \left(\frac{t_b-t_e}{n'} - \Delta \right)}, ~~ n'=\max(t_a,t_b), ~~\mbox{and}~~ \Delta=(2^{\frac{\len}{2}-2}-0.25)^{-1}.\]
\end{theorem}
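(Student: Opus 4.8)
The plan is to check, for each member $\Pi_i$ of the family (indexed by an increasing sequence of admissible $q_1$, subject to $w\mid r_1\len$ and $q_1$ large enough that the genus term is $\le\Delta$ as in (\ref{Delta})), a reliability bound $\delta_i$ and a secrecy bound $\epsilon_i$, then to read off $k_i/c_i$ from (\ref{c1-FF2})--(\ref{c2-FF2}) and the definitions (\ref{tk-FF2}), and finally to let $q_1\to\infty$. \textbf{Reliability.} The coordinated-PMT block is error-free once Bob holds the correct coordination key $W$: he listens on all $\tmin\le t_b$ paths of every ${\cal T}_i$, hence receives the full set of $q_2\tmin$ message shares, which is a qualified set for the $(q_2\tmin-2g_2,r_2,g_2,q_2\tmin)$-quasi-ramp SSS (qualified threshold $k+2g_2=q_2\tmin$), so $\rec_{alg,2}$ returns $S$. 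Thus the only failure mode is that Bob aborts in key transport, i.e. the event $\neg G_B$, where $G_B=$ ``Bob collects at least $q_1 t'_{b,1}$ key shares'' and $q_1 t'_{b,1}$ is exactly the qualified threshold of the first quasi-ramp SSS. In each interval the number of Alice-paths Bob hears lies in $\{0,\dots,t_a\}$ with mean $t_at_b/n'$, and the $q_1$ intervals use independent selections, so a Hoeffding bound on the sum gives $\delta_i\le\Pr[\neg G_B]\le\exp(-\Omega(\psi^2 q_1))\to 0$.

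\textbf{Secrecy.} Introduce the ``good'' events $G_{E,1}=$ ``Eve collects at most $q_1 t'_{e,1}$ key shares'' and $G_{E,2}=$ ``Eve collects at most $q_2 t'_{e,2}$ message shares'', where $q_1 t'_{e,1}$ and $q_2 t'_{e,2}$ are precisely the unqualified thresholds of the two quasi-ramp SSSs (from (\ref{tk-FF2})). Per interval Eve hears at most $t_e$ of Alice's shares with mean $\le t_at_e/n'$, independently over the $q_1$ intervals, so Hoeffding gives $\Pr[\neg G_{E,1}]\le\exp(-\Omega(\psi^2 q_1))$. Conditioned on $G_{E,1}$, the ``unqualified set $\Rightarrow$ zero information'' property of the first SSS makes Eve's key-transport view statistically independent of $W$, hence of the $W$-induced path sets ${\cal T}_1,\dots,{\cal T}_{q_2}$, which are i.i.d.\ uniform $\tmin$-subsets of the $n$ paths; therefore each ${\cal T}_i$ is uniform and independent of everything Eve can condition on before interval $q_1+i$, so her interval-$i$ overlap is stochastically dominated by a hypergeometric of mean $\le\tmin t_e/n$ regardless of how she adapts, and Hoeffding over the $q_2$ intervals bounds $\Pr[\neg G_{E,2}\mid G_{E,1}]$. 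On $G:=G_{E,1}\cap G_{E,2}$ the two unqualified-set properties jointly render $View_E$ independent of $S$, while $\Pr[G\mid S=s]$ does not depend on $s$ (the events only count path overlaps, which involve the $S$-independent key $W$ and Eve's $S$-independent choices); splitting $\Pr[View_E\in\cdot\mid S=s]$ over $G$ and $\neg G$ then yields $SD(View_E(s_1),View_E(s_2))\le\Pr[\neg G_{E,1}]+\Pr[\neg G_{E,2}\mid G_{E,1}]=:\epsilon_i\to 0$.

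\textbf{Rate and family.} By (\ref{c1-FF2})--(\ref{c2-FF2}) the total communication is $c_i=q_1 t_a\len+q_2\tmin\len$ and the message length is $k_i=r_2\len$. Substituting (\ref{tk-FF2}), using $q_2 w=r_1\len$, the genus estimates $g_1=O(q_1 n/2^{\len/2})$ and $g_2=O(q_2 n/2^{\len/2})$ (each contributing a loss $\le\Delta$ once $q_1$ is large, exactly as in (\ref{Delta}) for $\FF^{alg}_0$), the bound $w=\lceil\log\binom{n}{\tmin}\rceil\le\tmin\log(en/\tmin)+1$, and letting $\psi\to 0$, one gets $r_2\len\to c_2(1-\tfrac{t_e}{n}-\Delta)$ and $c_1/c_2\to\xi_2$ with $n'=\max(t_a,t_b)$; hence $k_i/c_i\to(1-\tfrac{t_e}{n}-\Delta)/(1+\xi_2)$. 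Choosing $q_1$ along an increasing sequence respecting the above constraints makes $k_i$ strictly increasing and $\delta_i,\epsilon_i$ non-increasing, so $(\Pi_i)$ is simultaneously a $(\delta,\epsilon)$-PMT family and an AP-PMT family over any setting with $t_e<t_b$, and its AP-secrecy rate equals the stated value.

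\textbf{Main obstacle.} The delicate part is the secrecy argument, specifically the chicken-and-egg issue that the coordinated-PMT path sets are derived from a key $W$ whose secrecy against Eve holds only on the event $G_{E,1}$ (itself an event about Eve's own view): one must argue that, after conditioning on $G_{E,1}$, $W$ and hence every ${\cal T}_i$ stays uniform from Eve's standpoint even as Eve adapts her path choices across the $q_2$ coordinated intervals. Casting the per-interval overlap counts as stochastically dominated by independent hypergeometrics irrespective of Eve's strategy is the device that makes both this conditional-independence step and the two concentration bounds go through; turning that domination into a rigorous statement (together with the ``perfect secrecy below threshold $\Rightarrow$ statistical independence'' step for the quasi-ramp SSS, and the fact that $\Pr[G\mid S=s]$ is message-independent) is where essentially all of the care is needed, the rate computation being routine algebra afterwards.
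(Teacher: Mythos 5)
Your proposal is correct and follows essentially the same route as the paper's proof in Appendix F: reliability is reduced to Bob's abort event in key transport, secrecy to the two overlap events with thresholds $q_1 t'_{e,1}$ and $q_2 t'_{e,2}$ bounded by Chernoff/Hoeffding on sums of normalized hypergeometric variables, combined via a union bound with the quasi-ramp SSS unqualified-set property, and the rate is the same algebra with $\psi\to 0$ and Stirling's bound on $\binom{n}{\tmin}$. The one place you go beyond the paper is in flagging and sketching the conditional-independence argument for why the ${\cal T}_i$ remain uniform from Eve's adaptive standpoint given $G_{E,1}$ and why $\Pr[G\mid S=s]$ is message-independent — the paper's proof treats these as implicit, so your added care is a refinement rather than a different method.
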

\begin{proof}
See Appendix \ref{app-proof-FF2}.
\end{proof}

\begin{remark}\label{remark-FF2}
The Scheme $\FF_2$ can be simplified to achieve a higher rate when $t_b=n$. For this special case, only Stage (i) the key transport block can be used to serve message transmission at rate $\frac{t_b-t_e}{t_b}-\Delta = 1- \frac{t_e}{n} - \Delta$, by choosing $n'=n$.
\end{remark}

\bhdr{Impossibility of one-way PMT for $t_e\geq t_b$}
It is impossible to obtain AP-PMT in one-round when $t_e \geq t_b$, i.e., when Eve has access to more paths than Bob. The intuition is that Eve will have total advantage over Bob and any protocol that lets Bob obtain a good estimate of the message will let Eve too.
\begin{proposition}\label{prop-ow-t_e>mb}
There is no one-round $(k,c, \delta,\epsilon)$-PMT protocol of rate $R = \frac{k}{c} > \frac{2\epsilon}{1-\delta-\alpha}$ to transmit $k\geq 3/\alpha$ bits of messages over a multipath setting with $t_e\geq t_b$, implying the AP-secrecy capacity of $C_{\sim 0}=0$ for $t_e\geq t_b$.
\end{proposition}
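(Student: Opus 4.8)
The plan is to use the hypothesis $t_e\ge t_b$ to let Eve \emph{simulate Bob}, so that whatever Bob can reconstruct Eve can reconstruct with the same success probability, and then to confront this with the $\epsilon$-secrecy guarantee. In a one-round (one-way) protocol Bob never transmits, so his choice of his $\le t_b$ access paths in the single interval is a function only of public data and his own private coins. Since $t_b\le t_e\le n$, Eve can therefore mount the following attack: draw a $t_b$-subset $T$ from exactly Bob's path-selection law using her own coins, extend $T$ to $t_e$ paths by adding $t_e-t_b$ arbitrary further paths that she legitimately accesses, observe all traffic delivered on $T$, and finally run Bob's reconstruction algorithm on (the traffic on $T$, the identity of $T$, fresh private coins) to output $\hat S_E$. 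The first step I would nail down is the coupling: in an honest execution Alice's transmission is a deterministic function of $(S,\text{Alice's coins})$ while Bob's view is the traffic on his randomly chosen $t_b$-subset; in Eve's attack the traffic on $T$ is produced by the same Alice and $T$ is an independent draw from the identical law, so the triple $(S,\text{traffic on }T,\text{decoder coins})$ has the same distribution in both cases. By the reliability condition (\ref{reli}) this yields $\Pr[\hat S_E\ne s\mid S=s]=\Pr[\hat S\ne s\mid S=s]\le\delta$ for every $s\in\bset^k$.

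Next I would take $S$ uniform on $\bset^k$ and convert ``Eve recovers $S$'' into an information inequality. Because $\hat S_E$ is a randomized function of $View_E$ (plus independent adversary coins) and $\Pr[\hat S_E\ne S]\le\delta$, Fano's inequality gives $H(S\mid View_E)\le H(S\mid\hat S_E)\le h(\delta)+\delta\log(2^k-1)<1+\delta k$, hence $I(S;View_E)=k-H(S\mid View_E)>(1-\delta)k-1$. For the matching upper bound I would invoke the $\epsilon$-secrecy condition (\ref{sec}): averaging over uniform $S$ shows each conditional law $View_E(s)$ is within statistical distance $\epsilon$ of the unconditional law of $View_E$, and since Eve sees at most the $c$ communicated bits all these laws are supported on an alphabet of size at most $2^c$; a continuity-of-entropy (Fannes/Audenaert-type) estimate then bounds $\bigl|H(View_E\mid S=s)-H(View_E)\bigr|$ by $\epsilon\log(2^c-1)+h(\epsilon)$, whence $I(S;View_E)=H(View_E)-H(View_E\mid S)\le \epsilon c+1$ up to the constant coming from that estimate.

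Combining the two estimates, $(1-\delta)k-1\le \epsilon c + O(1)$, i.e. $(1-\delta)k\le \epsilon c + O(1)$; using $k\ge 3/\alpha$ to swallow the additive $O(1)$ into $\alpha k$ and rearranging gives the claimed $R=k/c\le \frac{2\epsilon}{1-\delta-\alpha}$, where I would carry the factor $2$ as slack absorbed from the entropy-continuity constants rather than optimise it away (and one implicitly needs $1-\delta-\alpha>0$ for the bound to be meaningful). The capacity claim then follows: for any AP-PMT family over a setting with $t_e\ge t_b$, fix $\alpha\in(0,1)$; for all large $i$ one has $k_i\ge 3/\alpha$ and $\delta_i<1-\alpha$, so $\Pi_i$ has rate at most $2\epsilon_i/(1-\delta_i-\alpha)\to 0$ as $i\to\infty$, which forces $R_{\FF:\sim 0}=\inf_i k_i/c_i=0$ and hence $C_{\sim 0}=0$.

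The main obstacle I anticipate is making the ``Eve simulates Bob'' reduction rigorous: one must argue that Bob's path selection genuinely carries no information that Eve lacks, which is exactly where both one-wayness and $t_e\ge t_b$ are indispensable (with interaction Bob could couple his choice to an earlier message, and with $t_e<t_b$ Eve simply cannot observe as much as Bob), and the coupling must correctly account for the decoder's private randomness and for Eve's knowledge of which path set she observed. A secondary, more routine point is selecting the appropriate entropy-continuity inequality and tracking its constants so that the bound lands in exactly the stated form with the $3/\alpha$ threshold.
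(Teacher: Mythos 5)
Your proposal is correct and uses essentially the same argument as the paper's proof in Appendix~G: the key step in both is that, because the protocol is one-round and $t_e\geq t_b$, Eve can sample fresh coins from Bob's distribution and reproduce his view exactly (the paper phrases this as $SD([S,\overline{Y}];[S,\overline{Z}])=0$), after which Fano's inequality and the statistical-distance-to-mutual-information relation (Lemma~\ref{MI-SD-interplay}) yield the rate bound. The only difference is cosmetic: you localize the contradiction at Eve (she decodes reliably yet secrecy caps $I(S;View_E)$), whereas the paper transfers the secrecy bound onto Bob's view and contradicts his reliability; the coupling, the tools, and the resulting bound are the same.
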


\begin{proof}
See Appendix \ref{app-proof-ow-t_e>mb}.
\end{proof}

\bhdr{Implication to one-way capacity}
Putting all things together, we reach the following conclusion about the AP-secrecy capacity for one-way communication. When $t_e \geq t_b$, one-way AP-PMT is impossible and the capacity is zero. When $\tmin \leq t_e < t_b$ (if applicable), only Scheme $\FF_2$ can be used and hence the capacity is lower-bounded by $R_{\FF_2:\sim 0}$. Finally, when $t_e < \tmin$, both schemes $\FF_1$ and $\FF_2$ can be used and the lower bound on the capacity equals the maximum of the two rates $R_{\FF_1:\sim 0}$ and $R_{\FF_2:\sim 0}$.

\begin{corollary}\label{corollary-oneway-AP-cap}
The AP-secrecy capacity of any one-way $(n,t_a,t_b,t_e,\len)$-multipath setting is bounded as $\overrightarrow{L}_{\sim 0} \leq \overrightarrow{C}_{\sim 0} \leq \overrightarrow{U}_{\sim 0}$, where
\begin{eqnarray}
\overrightarrow{L}_{\sim 0} =
\begin{cases}
[\frac{1-\frac{t_e}{n}-\Delta}{1+\min(\xi_1,\xi_2)}]_+, & \mbox{if}~~ t_e <  \tmin \\
[\frac{1-\frac{t_e}{n}-\Delta}{1+\xi_2}]_+, & \mbox{if}~~ \tmin \leq t_e < t_b \\
0, & \mbox{if}~~ t_e\geq t_b
\end{cases}
, ~~\mbox{and}~~
\overrightarrow{U}_{\sim 0} =
\begin{cases}
1-\frac{t_e}{n}, & \mbox{if}~~ t_e < t_b \\
0, & \mbox{if}~~ t_e\geq t_b
\end{cases}.
\end{eqnarray}
\end{corollary}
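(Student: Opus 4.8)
The plan is to derive Corollary~\ref{corollary-oneway-AP-cap} purely by assembling results already in hand: the achievability constructions $\FF_1$ and $\FF_2$ (Theorems~\ref{theorem-FF1} and~\ref{theorem-FF2}) supply the lower bound, while the trivial bound~(\ref{upperbound-C-ap0}) and the impossibility result of Proposition~\ref{prop-ow-t_e>mb} supply the upper bound. Since both $\overrightarrow{L}_{\sim 0}$ and $\overrightarrow{U}_{\sim 0}$ are defined by cases on where $t_e$ sits relative to $\tmin$ and $t_b$, I would organize the proof as a matching case analysis over the three regimes $t_e<\tmin$, $\tmin\le t_e<t_b$, and $t_e\ge t_b$.

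For the upper bound: whenever $t_e<t_b$, inequality~(\ref{upperbound-C-ap0}) gives $\overrightarrow{C}_{\sim 0}\le 1-\frac{t_e}{n}$ for any multipath setting, one-way included, which is exactly $\overrightarrow{U}_{\sim 0}$ in that regime. When $t_e\ge t_b$ I would use Proposition~\ref{prop-ow-t_e>mb}: given an arbitrary AP-PMT family $(\Pi_i)_{i\in\Nds}$ with $\delta_i,\epsilon_i\to 0$ and $k_i\to\infty$, and any target rate $\rho>0$, first fix $\alpha>0$ small enough that $\frac{2\epsilon_i}{1-\delta_i-\alpha}<\rho$ holds for all sufficiently large $i$, and then note that for those same large $i$ we also have $k_i\ge 3/\alpha$; Proposition~\ref{prop-ow-t_e>mb} then forces $\frac{k_i}{c_i}<\rho$, so $R_{\FF:\sim 0}=\inf_i \frac{k_i}{c_i}<\rho$. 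As $\rho>0$ was arbitrary, every AP-PMT family has AP-secrecy rate $0$, hence $\overrightarrow{C}_{\sim 0}=0$, and $\overrightarrow{U}_{\sim 0}=\overrightarrow{L}_{\sim 0}=0$ in this regime.

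For the lower bound: in the regime $t_e<\tmin$ both the hypothesis ``$t_e<\tmin\le n$'' of Theorem~\ref{theorem-FF1} and the hypothesis ``$t_e<t_b$'' of Theorem~\ref{theorem-FF2} are satisfied, so both schemes yield genuine AP-PMT families, of AP-secrecy rates $\frac{1-t_e/n-\Delta}{1+\xi_1}$ and $\frac{1-t_e/n-\Delta}{1+\xi_2}$. By Definition~\ref{def-secrecy-capacity} the capacity is at least the larger of any two achievable rates, and since $\max\!\big(\frac{A}{1+\xi_1},\frac{A}{1+\xi_2}\big)=\frac{A}{1+\min(\xi_1,\xi_2)}$ for $A\ge 0$, this yields $\overrightarrow{C}_{\sim 0}\ge\big[\frac{1-t_e/n-\Delta}{1+\min(\xi_1,\xi_2)}\big]_+$; the $[\cdot]_+$ simply records that when $1-t_e/n-\Delta<0$ the constructions give nothing and $0$ is a trivial lower bound. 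In the regime $\tmin\le t_e<t_b$ Theorem~\ref{theorem-FF1} no longer applies but Theorem~\ref{theorem-FF2} still does, giving $\overrightarrow{C}_{\sim 0}\ge\big[\frac{1-t_e/n-\Delta}{1+\xi_2}\big]_+$; and in the regime $t_e\ge t_b$ the lower bound is the trivial $0$, already matched above.

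The proof is essentially bookkeeping, so I do not expect a real obstacle, but the one step that needs genuine care is the passage from the per-protocol statement of Proposition~\ref{prop-ow-t_e>mb} to the per-family (capacity) statement: the slack parameter $\alpha$ and the family index $i$ must be quantified in the correct order — fix $\alpha$ first (as small as the target rate requires), then push $i$ large enough that both $\epsilon_i$ is small \emph{and} $k_i\ge 3/\alpha$ — and this works precisely because $\epsilon_i\to 0$ and $k_i\to\infty$ while $\alpha$ may be held fixed. A secondary, routine check is that the sequences produced by $\FF_1$ and $\FF_2$ meet Definition~\ref{def-asymp-SMTfamily} (strictly increasing $k_i$, non-increasing $\delta_i,\epsilon_i$ tending to $0$), which is inherited directly from Theorems~\ref{theorem-FF1} and~\ref{theorem-FF2}.
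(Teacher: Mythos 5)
Your proposal is correct and follows essentially the same route as the paper, which likewise obtains the lower bound by combining the achievable rates of $\FF_1$ and $\FF_2$ over the three connectivity regimes and the upper bound from inequality~(\ref{upperbound-C-ap0}) together with Proposition~\ref{prop-ow-t_e>mb}. Your explicit quantifier argument (fix $\alpha$ first, then take $i$ large so that both $\epsilon_i$ is small and $k_i\ge 3/\alpha$) for passing from the per-protocol impossibility to the zero-capacity statement is a welcome tightening of a step the paper leaves implicit.
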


In the next section, we show that using interactive communication, Alice and Bob attain AP-PMT with positive rate even if $t_e\geq t_b$.

\subsection{AP-PMT: Always positive rates via two-way communication}
We introduce a two-round (interactive) AP-PMT scheme and show its security even under the condition that $t_e\geq t_b$. The idea is similar to the previous schemes except that the first block is now an \emph{interactive key-agreement} (rather than key-transport) protocol: Bob sends random data elements over random paths and Alice publicly responds (over a fixed path) which elements she has received. Having known about their common elements, Alice and Bob apply privacy amplification to convert them into a secret-key. We take use of the algebraic-geometric SSS for  privacy amplification.

Given the $(n,t_a,t_b,t_e,\len)$ multipath setting, let $w_1=\lceil \log{n~ \choose t'_{a,1}} \rceil$ and $w_2=\lceil \log{n~ \choose \tmin} \rceil$. For arbitrarily small $\psi>0$, and sufficiently large $q_1 \in \Nds$ (to be defined in Theorem \ref{theorem-FF3}), define
\footnote{Here, we assume that $q_1$ is chosen such that $w_2$ divides $r_1 \len$.}
\begin{eqnarray}
&& t'_{a,1}= (1-\psi)\frac{t_a t_b}{n}, ~~~ t'_{e,1}=(1+\psi)\frac{t'_{a,1} t_e}{n}, ~~~ g_1= \lceil \frac{q_1 (2t'_{a,1}-t'_{e,1})}{2^{\len/2}-1}\rceil,~~~ r_1= q_1(t'_{a,1}-t'_{e,1}) - 2 g_1 \nonumber \\
&& q_2 = \frac{r_1 \len}{w_2},~~~ t'_{e,2}=(1+\psi)\frac{\tmin t_e}{n},~~~ g_2 =\lceil \frac{q_2 (2 \tmin-t'_{e,2})}{2^{\len/2}-1} \rceil, ~~~ r_2= q_2(\tmin-t'_{e,2}) - 2 g_2. \label{tk-FF3}
\end{eqnarray}
Let $(\share_{alg,1},\rec_{alg,1})$ denote the algebraic-geometric $(q_1 t'_{a,1} - 2g_1, r_1, g_1, q_1 t'_{a,1})$-quasi-ramp SSS over field $\Fds_{2^\len}$ used for key transport, and $(\share_{alg,2},\rec_{alg,2})$ denote the algebraic-geometric $(q_2 \tmin - 2g_2, r_2, g_2, q_2 \tmin)$-quasi-ramp SSS over field $\Fds_{2^\len}$ used for coordinated PMT. Let $S \in \Fds_{2^\len}^{r_2}$ be the message to be transmitted by Alice.

\ihdr{Two-round $(\delta,\epsilon)$-PMT scheme $\FF_3$}
\begin{enumerate}[(i)]
\item \emph{Interactive key agreement (intervals $1$ to $q_1$).} Bob generates $q_1 t_b$ random elements $\underline{X} = (X_{i,j})_{1\leq i\leq q_1, 1\leq j\leq t_b} \in (\Fds_2^\len)^{q_1 t_b}$ and sends them in $q_1$ time intervals: In each time interval $1\leq i\leq q$, he sends $(X_{i,j})_{1\leq j \leq t_b}$ over $t_b$ (independently) randomly chosen paths. If Alice's observation includes less than $q_1 t'_{a,1}$ field elements from $X$ (sent by Bob), she sends an abort signal and Bob outputs a random message estimation $\hat{S}\in_R \Fds_{2^\len}^{r_2}$. Otherwise, let $X_A\subseteq X$ consist of the first $q_1 t'_{a,1}$ field elements observed by Alice over paths $({\cal P}_i)_{1\leq i\leq q_1}$. Alice sends the information of $({\cal P}_i)_{1\leq i\leq q_1}$ (requiring at most $q_1 w_1$ bits) over a (fixed/public) path to Bob. Alice and Bob both know $X_A$ and use them as shares of quasi-ramp SSS to obtain the key $W=(W_1,\dots,W_{q_2}) = \rec_{alg,1}(\tilde{X}_A) \in \Fds_{2^\len}^{r_1}$ (note that $r_1 \len = q_2 w_2$). The communication overhead of this stage equals:
    \begin{eqnarray}\label{c1-FF3}
        c_1 = q_1 t_b \len + q_1 w_1.
    \end{eqnarray}
\item \emph{Coordinated PMT (intervals $q_1+1$ to $q_1+q_2$).} Alice and Bob use key elements $W_i \in \bset^{w_2}$ to agree on $q_2$ path sets ${\cal T}_i$ of size $\tmin$ for intervals $q_1+1 \leq q_1+i \leq q_1+q_2$. Alice calculates shares of message as $\underline{Y}=\left( Y_{i,j}\right)_{1\leq i \leq q_2, 1\leq j \leq n} = \share_{alg,2}(S)$ and sends $\left(Y_{i,j}\right)_{1\leq j \leq \tmin}$ over ${\cal T}_i$ during interval $q_1+i$. Having received $\underline{Y}$, Bob calculates $\hat{S}=S=\rec_{alg,2}(\underline{Y})$. The number of communicated bits is:
    \begin{eqnarray}\label{c2-FF3}
        c_2= q_2 \tmin \len.
    \end{eqnarray}
\end{enumerate}

\begin{theorem}\label{theorem-FF3}
For any small $\psi,\delta,\epsilon>0$, the scheme $\FF_3$ gives $(\delta,\epsilon)$-PMT and AP-PMT families over any $(n,t_a,t_b,t_e,\len)$ multipath setting, with $t_a,t_b>0$ and $t_e < n$. The AP-secrecy rate of this family equals:

\begin{eqnarray*}
 R_{\FF_3:\sim 0} = \frac{1-\frac{t_e}{n} - \Delta}{1+ \xi_3}, ~~~ \mbox{where}~~~ \xi_3 = \frac{\left( \frac{n}{t_a} + \frac{\log(e n^2 /(t_a t_b) } {\len} \right) \log\frac{e n}{\tmin}}{\len \left(1 - \frac{t_e}{n} - \Delta \right)} ~~\mbox{and}~~ \Delta=(2^{\frac{\len}{2}-2}-0.25)^{-1}.
\end{eqnarray*}
\end{theorem}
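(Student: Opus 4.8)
The plan is to reuse the two-block template from the proofs of Theorems~\ref{theorem-FF1} and~\ref{theorem-FF2} --- analyse the key-establishment block and the coordinated-PMT block separately, show both succeed simultaneously except on a few rare events, and then read off $\delta$, $\epsilon$ and the rate --- the new ingredient being the \emph{interactive} key-agreement block (i). I would fix three events, all independent of the message $S$: $E_1$, that over intervals $1,\dots,q_1$ Alice's observation of Bob's transmissions contains at least $q_1 t'_{a,1}$ common field elements (so Alice does not abort); $E_2$, that after Alice broadcasts the paths $({\cal P}_i)_i$, Eve knows the values of at most $q_1 t'_{e,1}$ of the $q_1 t'_{a,1}$ elements that make up $X_A$; and $E_3$, that in the coordinated-PMT intervals Eve observes at most $q_2 t'_{e,2}$ of the $q_2\tmin$ shares of $S$. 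Each event is a sum over intervals of independent, bounded, hypergeometrically-distributed overlap counts whose per-interval means are $\tfrac{t_a t_b}{n}$ for $E_1$ (Bob's $t_b$ random paths meeting Alice's $t_a$), $\tfrac{t'_{a,1} t_e}{n}$ for $E_2$ (Eve's $t_e$ paths meeting the $\approx t'_{a,1}$ paths of $X_A$ in an interval, conditioned on $({\cal P}_i)_i$), and $\tfrac{\tmin t_e}{n}$ for $E_3$ (Eve's $t_e$ paths meeting ${\cal T}_i$); since the quantities in~(\ref{tk-FF3}) put $(1\mp\psi)$-slack on exactly these means, Hoeffding's inequality --- applied after conditioning on the relevant path choices, so the per-interval bound is uniform --- gives $\Pr[\neg E_1],\Pr[\neg E_2]\le e^{-\Omega(\psi^2 q_1)}$ and $\Pr[\neg E_3]\le e^{-\Omega(\psi^2 q_2)}$, with $q_2$ linear in $q_1$.

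Next I would show that on $E_1\cap E_2\cap E_3$ the run is perfectly reliable and perfectly secret. Reliability on $E_1$ is deterministic: Alice does not abort and sends $({\cal P}_i)_i$; Bob, having generated all $X_{i,j}$, thereby knows $X_A$ exactly, so both parties reconstruct the same $W=\rec_{alg,1}(X_A)$ and hence the same path sets ${\cal T}_i$; as $|{\cal T}_i|=\tmin$ is at most both $t_a$ and $t_b$, Alice can transmit and Bob can listen on all of ${\cal T}_i$, so Bob collects all $q_2\tmin$ shares of $S$ and $\rec_{alg,2}$ returns $S$. For secrecy, the $X_{i,j}$ outside $X_A$ are independent of $X_A$, so on $E_2$ Eve's view of block (i) determines at most $q_1 t'_{e,1}=(q_1 t'_{a,1}-2g_1)-r_1=k_1-r_1$ coordinates of the uniform vector $X_A$; by the secrecy (privacy-amplification) property of the $(k_1,r_1,g_1,q_1 t'_{a,1})$ quasi-ramp SSS of Section~\ref{subsec-SSS}, $W$ is then uniform and independent of that view. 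Consequently the ${\cal T}_i$ look uniform and unknown to Eve, and on $E_3$ her view of block (ii) determines at most $q_2 t'_{e,2}=k_2-r_2$ of the shares of $S$, which by the second SSS's secrecy property is independent of $S$. A point I would treat explicitly is that Eve may choose her paths adaptively: this does not help her, because the $X_{i,j}$ are freshly uniform each interval (so past observations say nothing about future values or about which elements enter $X_A$) and because $W_1,\dots,W_{q_2}$ are mutually independent (being a reparsing of a single uniform $W\in\Fds_{2^\len}^{r_1}$ with $q_2 w_2=r_1\len$), so seeing part of ${\cal T}_i$ reveals nothing about ${\cal T}_{i+1}$.

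These estimates combine in the usual way. Since $E_1,E_2,E_3$ do not depend on $S$, reliability holds with $\delta\le\Pr[\neg E_1]$, and since $View_E$ is identically distributed for all messages on $E_2\cap E_3$ (and trivially so when Alice aborts), secrecy holds with $\epsilon\le\Pr[\neg E_2]+\Pr[\neg E_3]$; both bounds are driven to zero by enlarging $q_1$, so choosing $q_1$ along an increasing sequence gives the $(\delta,\epsilon)$-PMT family and, as the bounds tend to $0$, the AP-PMT family. For the rate, $k=r_2\len$ and $c=c_1+c_2$ with $c_1=q_1 t_b\len+q_1 w_1$ and $c_2=q_2\tmin\len$ from~(\ref{c1-FF3}),~(\ref{c2-FF3}), and $q_2=r_1\len/w_2$. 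Writing $r_1=q_1(t'_{a,1}-t'_{e,1})-2g_1$, $r_2=q_2(\tmin-t'_{e,2})-2g_2$, using $g_i=O(q_i/2^{\len/2})$, $w_1\le t'_{a,1}\log\tfrac{en}{t'_{a,1}}+1$ and $w_2\le\tmin\log\tfrac{en}{\tmin}+1$, then letting $q_1\to\infty$ and $\psi\to0$: the ratio $r_2\len/c_2=(1-t'_{e,2}/\tmin)-2g_2/(q_2\tmin)$ tends to $1-\tfrac{t_e}{n}-\Delta$, while $c_1/c_2$ collapses --- after substituting $t'_{a,1}\to\tfrac{t_a t_b}{n}$ and $t'_{a,1}-t'_{e,1}\to\tfrac{t_a t_b}{n}\big(1-\tfrac{t_e}{n}\big)$ --- to $\xi_3$, so that $R_{\FF_3:\sim 0}=\tfrac{k}{c}=\tfrac{r_2\len/c_2}{1+c_1/c_2}=\tfrac{1-t_e/n-\Delta}{1+\xi_3}$.

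The main obstacle I anticipate is twofold. First, the privacy-amplification step: one must check that feeding the \emph{uniformly random} string $X_A$ into $\rec_{alg,1}$ yields a $W$ that is simultaneously uniform \emph{and} independent of any at-most-$(k_1-r_1)$ subset of its coordinates, since the ramp-secrecy guarantee of Section~\ref{subsec-SSS} is stated for a secret chosen first and shared; establishing this ``extractor'' reading (from the linearity of the scheme) and verifying it is not weakened by Eve also learning, from the public announcement, \emph{which} coordinates are the relevant ones, is the delicate part. Second, the rate computation: the nested ceilings in~(\ref{tk-FF3}) and the $O(2^{-\len/2})$ correction terms $g_1,g_2$ have to be tracked carefully enough to land exactly on the stated $\xi_3$ (and on the $-\Delta$ in the numerator) rather than on a merely comparable expression --- the algebra is routine but unforgiving.
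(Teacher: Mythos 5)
Your proposal follows essentially the same route as the paper's proof: the same decomposition into the bad events (Alice receiving too few elements in the interactive block, Eve seeing too many shares in each of the two blocks), the same hypergeometric-overlap concentration bounds (the paper uses Chernoff where you invoke Hoeffding, an immaterial difference), the same invocation of the quasi-ramp SSS reconstruction/secrecy properties on the good events, and the same algebra collapsing $c_1/c_2$ to $\xi_3$ as $q_1\to\infty$ and $\psi\to 0$. The two points you flag as delicate --- the ``reverse'' (privacy-amplification) use of $\rec_{alg,1}$ on the uniform string $X_A$ and Eve's adaptive path selection --- are in fact passed over silently in the paper's proof, so your treatment is, if anything, slightly more careful on the same argument.
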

\begin{proof}
See Appendix \ref{app-proof-FF3}.
\end{proof}

\bhdr{Implication to two-way capacity} When two-way communication is allowed, the AP-secrecy capacity is trivially upper-bound by $1-\frac{t_e}{n}$, unless obviously when $t_a=0$ or $t_b=0$. We have the following lower bounds on the other hand. For $t_e < \tmin$, all schemes $\FF_1$, $\FF_2$, and $\FF_3$ can be used and the maximum rate shows the lower bound. When $\tmin \leq t_e  < t_b$ (if applicable), the lower bound is achieved by $\FF_2$ or $\FF_3$. Finally when $t_e \geq t_b$, onlt the interactive scheme $\FF_3$ is usable and the capacity is lower-bounded by $R_{\FF_3:\sim 0}$.

\begin{corollary}\label{corollary-twoway-AP-cap}
The AP-secrecy capacity of any $(n,t_a,t_b,t_e,\len)$-multipath setting satisfies $L_{\sim 0} \leq C_{\sim 0} \leq U_{\sim 0}$, where
\begin{eqnarray}
L_{\sim 0} =
\begin{cases}
[\frac{1-\frac{t_e}{n}-\Delta}{1+\min(\xi_1,\xi_2,\xi_3)}]_+, & \mbox{if}~~ t_e <  \tmin \\
[\frac{1-\frac{t_e}{n}-\Delta}{1+\min(\xi_2,\xi_3)}]_+, & \mbox{if}~~ 0 <\tmin \leq t_e < t_b \\
[\frac{1-\frac{t_e}{n} - \Delta}{1+ \xi_3}]_+,  & \mbox{if}~~ 0 <t_b< t_e ~\wedge~ t_a>0 \\
0,      & \mbox{else}
\end{cases},
~\mbox{and}~
U_{\sim 0} = \begin{cases}
1-\frac{t_e}{n}, & \mbox{if}~~  t_a,t_b >0\\
0,  & \mbox{else}
\end{cases}.~~
\end{eqnarray}
\end{corollary}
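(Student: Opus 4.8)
The plan is to obtain Corollary~\ref{corollary-twoway-AP-cap} by assembling results already established, with no new analytic content: the upper bound is the trivial bound~(\ref{upperbound-C-ap0}), and each branch of the lower bound $L_{\sim 0}$ is produced by running whichever of the schemes $\FF_1$, $\FF_2$, $\FF_3$ is admissible in the corresponding connectivity regime and invoking the definition of the AP-secrecy capacity. Since, by Definition~\ref{def-secrecy-capacity}, $C_{\sim 0}$ is the largest AP-secrecy rate attained by any AP-PMT family over the setting, for every scheme $\FF_i$ whose hypotheses hold we immediately get $C_{\sim 0}\geq R_{\FF_i:\sim 0}$, hence $C_{\sim 0}$ is at least the maximum of these rates; the upper bound $U_{\sim 0}$ caps it from above. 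Combining the two directions yields $L_{\sim 0}\leq C_{\sim 0}\leq U_{\sim 0}$.

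For the upper bound I would argue as follows. When $t_a,t_b>0$, the trivial bound~(\ref{upperbound-C-ap0}) gives $C_{\sim 0}\leq 1-\frac{t_e}{n}$, matching the first line of $U_{\sim 0}$. When $t_a=0$ (Alice can transmit on no path) or $t_b=0$ (Bob can listen on no path), no information flows from Alice to Bob, so every PMT protocol is useless and $C_{\sim 0}=0$, matching the ``else'' line.

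For the lower bound I would do a case split on $t_e$ versus $\tmin$ and $t_b$, checking in each regime precisely which schemes are legal, using the stated hypotheses: $\FF_1$ requires $t_e<\tmin$ (Theorem~\ref{theorem-FF1}), $\FF_2$ requires $t_e<t_b$ (Theorem~\ref{theorem-FF2}), and $\FF_3$ requires $t_a,t_b>0$ and $t_e<n$ (Theorem~\ref{theorem-FF3}). If $t_e<\tmin$, then $t_e<\tmin\leq t_b$ and $t_a,t_b\geq\tmin>t_e$, so all three schemes apply and $C_{\sim 0}\geq\max\{R_{\FF_1:\sim 0},R_{\FF_2:\sim 0},R_{\FF_3:\sim 0}\}$. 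If $0<\tmin\leq t_e<t_b$, then $\FF_1$ is excluded but $\FF_2$ and $\FF_3$ still apply (here $t_a,t_b\geq\tmin>0$ and $t_e<t_b\leq n$), so $C_{\sim 0}\geq\max\{R_{\FF_2:\sim 0},R_{\FF_3:\sim 0}\}$. If $t_b<t_e$ with $t_a,t_b>0$, then only $\FF_3$ can apply, giving $C_{\sim 0}\geq R_{\FF_3:\sim 0}$ when $t_e<n$; if instead $t_e=n$, even $\FF_3$ fails, but then $1-\frac{t_e}{n}-\Delta=-\Delta<0$, so the claimed bound is $0$ anyway. In the remaining ``else'' case one of $t_a,t_b$ is zero (or $t_e$ lies outside all the above ranges with no admissible scheme), and the bound $0$ holds trivially. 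To reach the closed forms, I would use that each $R_{\FF_i:\sim 0}=\frac{1-t_e/n-\Delta}{1+\xi_i}$ with the same numerator $A:=1-\frac{t_e}{n}-\Delta$, so that $\max_i R_{\FF_i:\sim 0}=\frac{A}{1+\min_i\xi_i}$ whenever $A\geq 0$, while if $A<0$ every $R_{\FF_i:\sim 0}$ and hence $[\,\cdot\,]_+$ of each is $0$; in both cases $[\max_i R_{\FF_i:\sim 0}]_+=[\frac{A}{1+\min_i\xi_i}]_+$, which is exactly the form appearing in $L_{\sim 0}$.

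I expect essentially no genuine obstacle: all the heavy lifting resides in Theorems~\ref{theorem-FF1}--\ref{theorem-FF3} and in the bound~(\ref{upperbound-C-ap0}), and the corollary is bookkeeping. The only points needing care are (i) verifying, in each regime, that the side conditions $t_a,t_b>0$ and $t_e<n$ required by $\FF_3$ (and $t_e<t_b$ for $\FF_2$) genuinely hold, matching the case boundaries to scheme admissibility; (ii) handling the boundary $t_e=n$ so that the truncation $[\,\cdot\,]_+$ absorbs the negative numerator; and (iii) justifying the rewrite $\max_i\frac{A}{1+\xi_i}=\frac{A}{1+\min_i\xi_i}$ only in the regime where $A$ is nonnegative, noting that the identity still survives the truncation when $A<0$.
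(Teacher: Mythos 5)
Your proposal is correct and follows essentially the same route as the paper: the paper's justification is precisely the assembly argument in the paragraph preceding the corollary, taking the trivial upper bound (\ref{upperbound-C-ap0}) together with the admissible schemes $\FF_1$, $\FF_2$, $\FF_3$ in each connectivity regime. Your additional care about the boundary $t_e=n$, the $[\,\cdot\,]_+$ truncation, and the rewrite $\max_i\frac{A}{1+\xi_i}=\frac{A}{1+\min_i\xi_i}$ is sound bookkeeping that the paper leaves implicit.
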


\subsection{Comparison of P-secrecy and AP-secrecy rates}\label{subsec-rateCompare}
We have proved that in the case of partial-access multipath communication, Alice and Bob can achieve higher secrecy rates if they choose AP-PMT protocols over P-PMT ones. We also introduced AP-PMT families with rates close to the upper-bound $1-\frac{t}{n}$. It has remained unclear, however, how much of rate improvement is attained by AP-PMT protocols in practice and how close the resulting rate is to the upper-bound. We address this by analyzing P-secrecy and AP-secrecy capacities for typical multipath parameters that match practical communication scenarios.

\subsubsection{P-secrecy vs. AP-secrecy capacities}
Although the two capacities equal for the full-access case ($t_a=t_b=n$), we argue that when $\tmin < n$, the AP-secrecy capacity is strictly larger that the P-secrecy capacity for almost all cases. The following argument is given based on Corollaries \ref{corollary-oneway-AP-cap} and \ref{corollary-twoway-AP-cap}.

\ihdr{Case 1: $t_e \geq \tmin$}
The P-secrecy capacity equals zero and is thus strictly less than the AP-secrecy capacity provided that $1-\frac{t_e}{n} > \Delta$, which is true for the connectivity range of $\tmin \leq t_e < n (1-\Delta)$.

\ihdr{Case 2: $t_e < \tmin$}
In this case, both secrecy capacities are positive. However, achievable AP-secrecy rates are strictly higher than the P-secrecy capacity if $\frac{1-t_e/n-\Delta}{1+\xi_1} > 1-\frac{t_e}{\tmin}$, which implies the range $\frac{\xi_1+\Delta}{\frac{1+\xi_1}{\tmin}-\frac{1}{n}} < t_e < \tmin$.

\begin{corollary}\label{corollary-pa-a-p}
The strict inequality $C_{\sim 0} > C_{0}$ holds for any $(n,t_a,t_b,t_e,\len)$-multipath setting that satisfies
\[\frac{\xi_1+\Delta}{\frac{1+\xi_1}{\tmin}-\frac{1}{n}} < t_e < (1-\Delta) n,\]
where $\Delta$ and $\xi_1$ are given in Theorem \ref{theorem-FF1}. As $\Delta$ goes to zero, the above connectivity range tends to
\[\frac{\alpha \log(\frac{e}{\alpha})}{\len (1-\alpha)}n < t_e < n, ~~\mbox{where}~~ \alpha=\frac{\tmin}{n}.\]
\end{corollary}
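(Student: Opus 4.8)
The statement will follow by combining the P\nobreakdash-secrecy capacity bound $C_0\le U_0=[1-\frac{t_e}{\tmin}]_+$ of Theorem~\ref{theorem-0secrecy} with the AP\nobreakdash-secrecy lower bounds of Corollaries~\ref{corollary-oneway-AP-cap}--\ref{corollary-twoway-AP-cap}, and arguing $C_{\sim 0}>C_0$ on the two ranges $t_e\ge\tmin$ and $t_e<\tmin$ separately, then checking that the two resulting $t_e$-ranges glue into a single interval. First I would dispose of the degenerate case $t_a=0$ or $t_b=0$: there $C_{\sim 0}=C_0=0$ and the interval in the statement is not even defined (since $\xi_1$ involves $\log(en/\tmin)$), so throughout one assumes $t_a,t_b\ge1$, i.e.\ $\tmin\ge1$.

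\textbf{The two cases.}
For $t_e\ge\tmin$ one has $U_0=0$, hence $C_0=0$; since $t_a,t_b\ge1$ and $t_e<(1-\Delta)n<n$ on the claimed interval, the interactive scheme $\FF_3$ of Theorem~\ref{theorem-FF3} is applicable, so Corollary~\ref{corollary-twoway-AP-cap} gives $C_{\sim0}\ge[\frac{1-t_e/n-\Delta}{1+\xi_3}]_+$, and this lower bound is strictly positive exactly when $1-t_e/n-\Delta>0$ (the only condition making $\xi_3$ finite); thus $C_{\sim0}>0=C_0$ whenever $\tmin\le t_e<(1-\Delta)n$. For $t_e<\tmin$ one instead uses $C_0\le 1-\frac{t_e}{\tmin}>0$ together with $C_{\sim0}\ge[\frac{1-t_e/n-\Delta}{1+\min(\xi_1,\xi_2,\xi_3)}]_+\ge[\frac{1-t_e/n-\Delta}{1+\xi_1}]_+$ (valid since $\FF_1$ of Theorem~\ref{theorem-FF1} applies when $t_e<\tmin\le n$), reducing the claim to $\frac{1-t_e/n-\Delta}{1+\xi_1}>1-\frac{t_e}{\tmin}$. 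Multiplying through by $1+\xi_1>0$, subtracting $1$, and collecting the $t_e$-terms turns this into $t_e\bigl(\frac{1+\xi_1}{\tmin}-\frac1n\bigr)>\xi_1+\Delta$; since $\tmin\le n$ and $\xi_1>0$ make the bracket positive, it is equivalent to $t_e>\frac{\xi_1+\Delta}{(1+\xi_1)/\tmin-1/n}$, and when this holds $1-\frac{t_e}{\tmin}>0$ forces $\frac{1-t_e/n-\Delta}{1+\xi_1}>0$, so the $[\cdot]_+$'s may be dropped.

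\textbf{Gluing and the asymptotic form.}
Next one observes that every $t_e$ with $\frac{\xi_1+\Delta}{(1+\xi_1)/\tmin-1/n}<t_e<(1-\Delta)n$ falls into the first case (if $t_e\ge\tmin$) or into the second (if $t_e<\tmin$, where the left-endpoint bound is precisely what that case requires), so $C_{\sim0}>C_0$ in either case, which gives the first display. For the asymptotic form I would put $\alpha=\tmin/n$, so that $\log(en/\tmin)=\log(e/\alpha)$ and $\frac1{\tmin}-\frac1n=\frac{1-\alpha}{\alpha n}$, whence the left endpoint equals $\frac{\alpha(\xi_1+\Delta)}{1+\xi_1-\alpha}\,n$; letting $\Delta\to0$ and using that in the practical regime $\len>100$ the quantity $\xi_1=\frac{\log(e/\alpha)}{\len(1-t_e/\tmin-\Delta)}$ is $o(1)$ (so $\xi_1\to\frac{\log(e/\alpha)}{\len}$ and $1+\xi_1-\alpha\to1-\alpha$ up to lower-order terms), the left endpoint tends to $\frac{\alpha\log(e/\alpha)}{\len(1-\alpha)}\,n$ and the right endpoint to $n$.

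\textbf{Main obstacle.}
There is no deep obstacle, since the substantive work lives in the cited theorems and corollaries; the points that need care are (i) invoking an AP\nobreakdash-PMT lower bound that is \emph{actually usable} in each sub-range, in particular the interactive $\FF_3$ when $t_e\ge t_b$ (the one-way bound being $0$ there), (ii) verifying the sign conditions $1+\xi_1>0$ and $(1+\xi_1)/\tmin-1/n>0$ that justify the algebraic rearrangement and the removal of the $[\cdot]_+$, and (iii) recording that the stated limiting range additionally treats $\xi_1$ as negligible, which is legitimate only because $\len$ is large in the scenarios of interest.
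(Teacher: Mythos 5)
Your proposal is correct and follows essentially the same route as the paper: the paper's own argument (given in the two labelled cases immediately preceding the corollary) splits on $t_e \geq \tmin$ versus $t_e < \tmin$, uses $C_0=0$ together with a positive AP lower bound in the first case and the comparison $\frac{1-t_e/n-\Delta}{1+\xi_1} > 1-\frac{t_e}{\tmin}$ in the second, and rearranges to the same left endpoint. Your additional bookkeeping (the degenerate case $t_a=0$ or $t_b=0$, the sign checks justifying the algebra and the removal of $[\cdot]_+$, and the explicit reliance on $\FF_3$ when $t_e\ge t_b$) only makes explicit what the paper leaves implicit.
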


Corollary \ref{corollary-pa-a-p} clearly does not imply any range in the full-access case where $\alpha=1$; however, with a slight deviation from full access, the superiority of AP-PMT rates holds for a wide connectivity range. Assuming for example $\len = 100$ (hence $\Delta < 10^{-14} \approx 0$) and partial access $\tmin = 0.2 n$, AP-PMT shows strictly higher rates for the wide range of $0.01 n < t_e < n$. This however should not imply that our constructions achieve the AP-secrecy capacity for this range. The next section shows how deviation from full-access results in a larger gap between the lower and upper bounds on $C_{\sim 0}$.

\subsubsection{Numerical analysis of secrecy capacity}
Figure \ref{fig-cap_te} graphs the lower and upper bounds on $C_{\sim 0}$ as well as $C_{0}$ for different values of $\beta=\frac{t_e}{n}$, assuming $\len=100$ (hence $\Delta \approx 0$) and $t_a = t_b = 0.2 n$. For this value of $\len$, we approximate $C_0 \approx 1-\frac{t_e}{\tmin} = 1-5 \beta$ (see Theorem \ref{theorem-0secrecy}). The capacity $C_0$ is shown by a solid line and the bounds $\overline{L}_{\sim 0}$ and $\overline{U}_{\sim 0}$ are shown by dotted and dashed lines, respectively. The graph clearly illustrates the benefit of using AP-PMT over P-PMT in the multipath setting. Both P-secrecy and AP-secrecy capacities show linear decrement with respect to $\beta$; however, $C_0$ drops much faster and equals 0 for $\beta \geq 0.2$. The lower bound on $C_{\sim 0}$ shows that this capacity remains positive and close to the upper-bound $U_{\sim 0} = 1-\frac{t_e}{n}$ throughout. For $\beta \leq 0.15$, the lower bound is very close to the upper bound and is achieved by one-round AP-PMT ($\FF_1$ or $\FF_2$). Outside of this range, the achievable rates by our one-round AP-PMT rate drops drastically and tend to 0 at $\beta =0.2$. This is not surprising since one-way AP-PMT is impossible when $t_e \geq t_b$ (implying $\beta \geq 0.2$). For $\beta \geq 0.15$, the lower-bound is attained by our two-round scheme $\FF_3$. Observe that the gap between the two bounds for this range which is due to the overheard factor $\xi_3$ (see rate $R_{\FF_3:\sim 0}$ in Theorem \ref{theorem-FF3}).

\begin{figure}[hbt]
  \centering
  \subfigure[w.r.t. $\beta = \frac{t_e}{n}$ for $t_a = t_b = 0.2 n$]{
    \includegraphics[width=.45\textwidth]{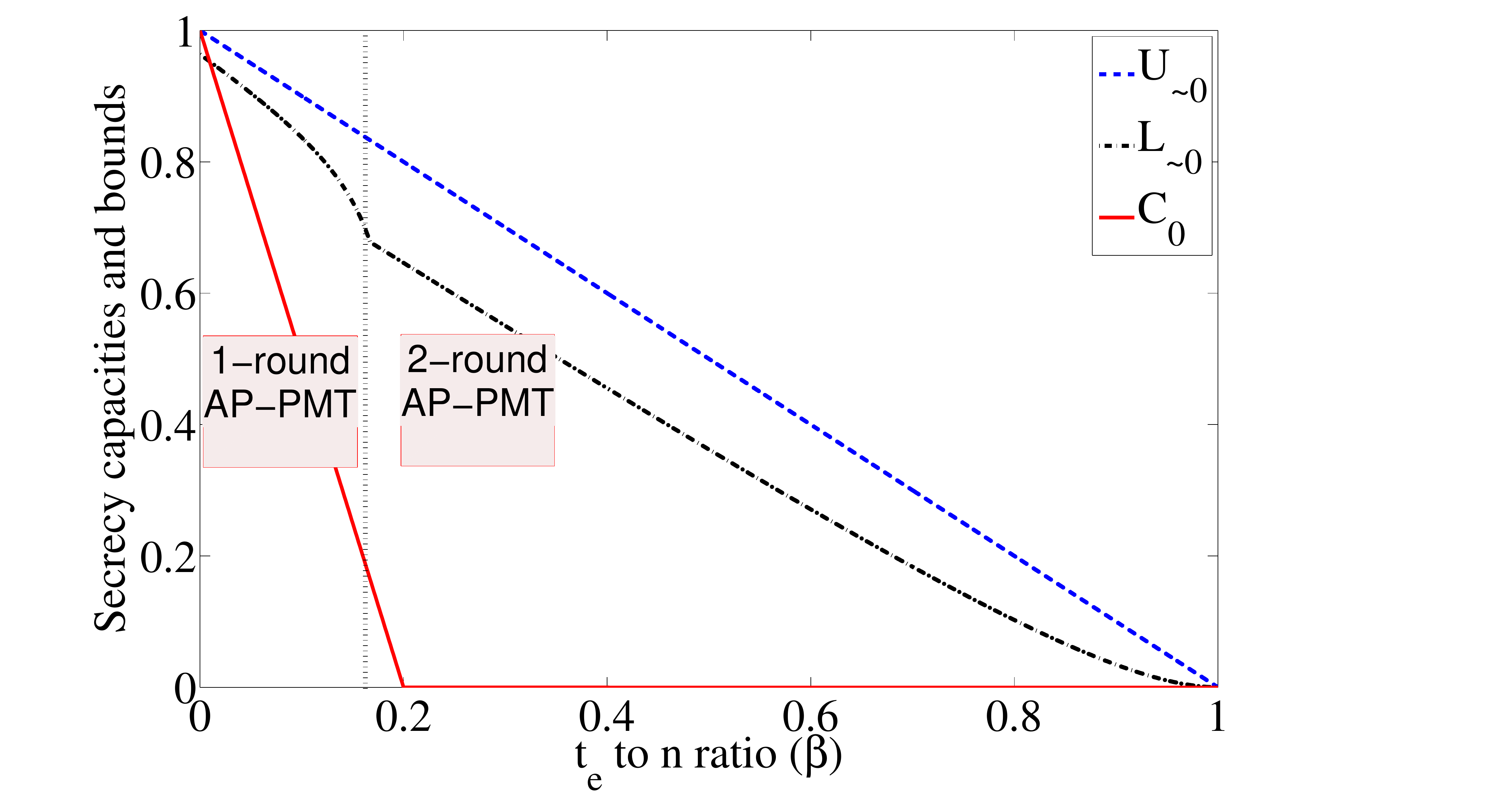}
    \label{fig-cap_te}
  }
  \subfigure[w.r.t. $\alpha = \frac{\tmin}{n}$ for $t_a = t_b$ and $t_e = 0.2 n$]{
    \includegraphics[width=.45\textwidth]{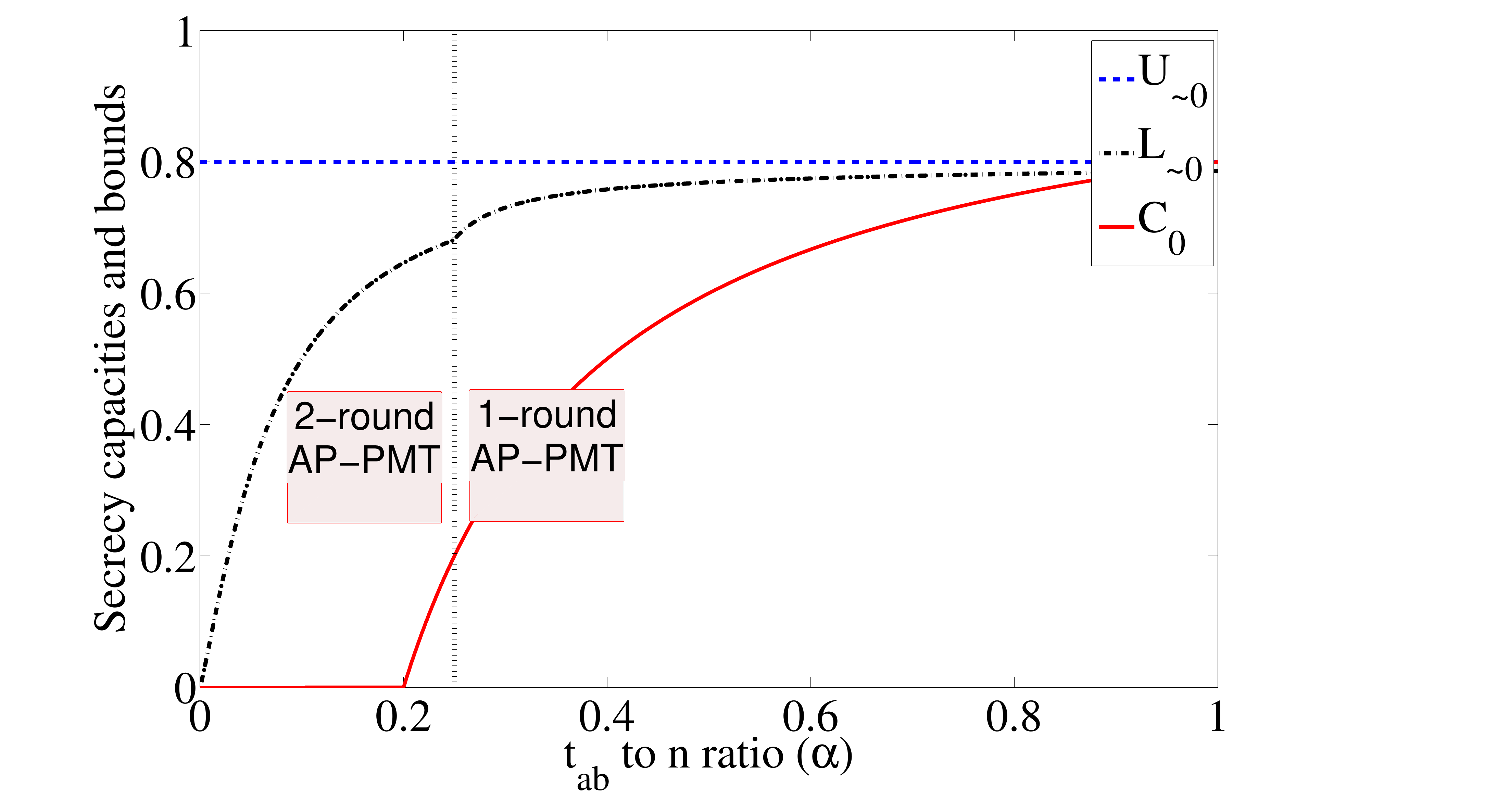}
    \label{fig-cap_tab}
  }
    \caption{Comparing the secrecy capacities and bounds.}
    \label{fig-capacities}
\end{figure}

The gap between the bounds on $C_{\sim 0}$ bridges as we move towards full access by increasing Alice's and Bob's connectivity $t_a$ and $t_b$. This is partly intuitive since in the full-access case ($t_a=t_b=n$) both P-secrecy and AP-secrecy capacities are expected to equal $1-\frac{t_e}{n}$. But what causes the gap when $t_a$ and $t_b$ are relatively small is mainly due to the two-round AP-PMT with larger communication overhead ($\xi_3$). Figure \ref{fig-cap_tab} graphs the same three quantities ($C_0$, $L_{\sim 0}$, and $U_{\sim 0}$) with respect to $\alpha = \frac{\tmin}{n}$, assuming $\len=100$, $t_a=t_b$, and $t_e = 0.2 n$. Observe that the AP-secrecy capacity upper-bound is always $U_{\sim 0} = 1- \frac{t_e}{n} = 0.8$. For $\alpha \geq 0.25$, one-round AP-PMT is possible and achieves rates quite close to $U_{\sim 0}$. For $\alpha < 0.25$, the interactive AP-PMT $\FF_3$ produces higher rates. This experiment shows we do not have a close approximation of achievable rates in the low connectivity regime, where $t_a,t_b < t_e$. Finding an answer to this question remains an open problem.

\section{Practical Consideration}\label{sec-practical}
We discuss two practical applications of our PMT results in the multipath setting model, i.e., sending secret data over (i) multiple-frequency links and (ii) multiple-route networks. Both communication scenarios include a set of paths that connect the communicants and can be tapped into by present eavesdroppers. Provided that the communication system does not allow access to all paths to the parties (in particular the eavesdropper), we hope for the possibility of PMT.

\subsection{PMT using multiple-frequency links}\label{subsec-PMT-frequency}
It is clear to see that multiple-frequency communication environments, such as wireless, realize the multipath setting described in this work. There are many non-overlapping frequency channels that can be used for signal transmission between a sender and a receiver and the communication may be intercepted by an eavesdropper tapping into the system. Our PMT results promise the possibility of secure communication here, provided that the eavesdropper does not have simultaneous access to all frequencies (i.e., $t_e < n$ in our setting). The challenge is thus to design a multiple-frequency system that makes it infeasible for the adversary to capture data over all frequency channels. Unfortunately, existing frequency-hopping solutions do not satisfy this requirement. Bluetooth for example transmits data at speed of 1Mbps over 79 adjacent 1-MHz frequency channels within the 2.4--2.48 GHz band. By the current technology, one can obtain a receiver device much smaller than a laptop to capture all 80-MHz Bluetooth range, convert it into digital, and store it to a disk at a rate in the order of 80 Mbps. This lets the eavesdropper record hours of communication in a 1 Terabyte disk.

It is yet possible to design systems that serve our purpose. It is practically infeasible to have a single transceiver device (in particular ADC) to deal with wider than $100$ MHz of the flying signal \cite{LoFe07}. All we need is to use a system whose frequency channels are far apart. Consider for instance a system design that uses $n=70$ 20-MHz frequency-channels that are distributed evenly (with 80-MHz distances) over the 57--64 GHz frequency range. This range is called the FCC unlicensed millimeter-wave band and is allocated by Canada/US for WiGig (Gigabit WiFi). Data over each channel is transmitted at the speed of at least 100 Mbps, (e.g., by using 6-bit 64-QAM modulation). Since there is only one frequency channel in each 100-MHz slot, the eavesdropper would require $70$ transceiver blocks to access all $70$ channels simultaneously. This may not be practical in certain scenarios due to device expense issues or physical space restriction (e.g., stealth attack on indoor communication). Let us assume that up to $t_e =  35$ (i.e. $n/2$) transceiver blocks can be embedded in the eavesdropper's device, while the legitimate communicants are provided with only four such blocks, implying simultaneous access to $t_a=t_b=4$ channels.

We have figured out all the multipath setting parameters, except the interval length $\len$. The eavesdropper may want to switch between the $10$-MHz frequency channels to learn more information about the communication. Fastest frequency synthesizers allow the eavesdropper to switch between these channels in about $1 \mu$s \cite{WinRadio}. Although we may allow a larger switching time for legitimate communicants, the $1 \mu$s switching time determines $\len$ in our design. At the speed of 100 Mpbs, we obtain $\len = \lfloor 10^{-6} \times 100 \times 2^{20} \rfloor = 104$ bits: The transmitter sends 104 bits over each accessed channel and then switches to a next set of channels, thus preventing the eavesdropper from switching channels during the 104-bit long interval. This example leads us to the $(70,4,4,35,104)$-multipath setting for which the two-round (interactive) AP-PMT scheme $\FF_3$ sends private data at rate $17$\%. This solution does not require pre-shared keys and provides information-theoretic security.

\subsection{PMT using multiple-route networks}\label{subsec-PMT-network}
Large networks such as sensor networks, mobile ad hoc networks (MANETs), and the Internet allow nodes to communicate over possibly multiple routes. It has been shown \cite{YeKrTr03} that when multiple routes are available, the communicants can benefit from multipath routing to enhance the reliability of transmission. We would like to study whether multipath routing can be used for privacy of communication in the case an eavesdropper taps into some intermediate router nodes. We here focus on communication over MANETs. Studies have shown that the average number of node-disjoint paths in a moderately-dense MANET (of around 500 nodes) is usually more than $10$. We consider the following scenario: There are a total of $n=10$ paths between the source node and the destination node. The source can send only over $t_a=2$ paths while the destination receives data through all $t_b=10$ paths. The adversary's resources allow for compromising at most $t_e=8$ paths at a time, and at least $1$ millisecond is needed to redirect resources to tap into new nodes (and paths); this is quite plausible, noting the technical challenges of tapping into communicating devices. The source transmits data at the speed of $512$ Kbps, implying $\len=\lfloor 10^{-3} \times 512 \times 2^{10} \rfloor =524$. This leads to the $(10,2,8,10,524)$-multipath setting for which the simplified version of scheme $\FF_2$ (with Stage (i) only -- see Remark \ref{remark-FF2}) guarantees private transmission at the highest possible rate of $20$\%.

\section{Conclusion and Future Work}
We have derived the necessary and sufficient conditions for the possibility of P-PMT and AP-PMT in the multipath setting. We also derived lower and upper bounds
on the P-secrecy and AP-secrecy capacities. Although in the full-access case ($t_a=t_b=1$) P-PMT and AP-PMT behave the same, in general, AP-PMT protocols may attain much higher secrecy rates. The maximum rate for P-PMT is $[1-\frac{t_e}{\tmin}]_+$, whereas AP-PMT protocols can achieves much higher rates close to the upper-bound $1-\frac{t_e}{n}$. The is yet a gap between the proved achievable rates and this upper-bound. \textit{Bridging the gap is an interesting question which we leave for future work}.

Any practical communication system that benefits from the diversity of communication paths can be a test case to show the feasibility our PMT results. We considered this for the real-life scenarios of communication over multiple-frequency links and multiple-route networks. In both cases, we elaborated on how to derive multipath setting parameters and use our results to provide private communication at rates $17$\% and $20\%$, respectively. Showing the possibility of keyless communication with information-theoretic privacy is interesting. \textit{A followup work can be the design of concrete protocols considering all practical and technical concerns that may have been missing in this work.}

\bibliography{Hadibibs}{}
\bibliographystyle{abbrv}

\appendix

\section{Proof of Theorem \ref{theorem-FF0}} \label{app-proof-FF0}
It is clear that the message length equals $k=r u$ and the number of communication bits is $c= n u$. This implies the secrecy rate of $R=\frac{r u}{n u} = 1 - \frac{t_e}{n}$. Perfect reliability ($\delta=0$) holds from the reconstruction property of the SSS noting that Bob receives all shares $X_i$'s. Perfect secrecy ($\epsilon=0$) follows from the secrecy property of the SSS: Eve by collecting $t_e$ shares is unqualified and so for any message distribution $S$, it holds $I(S;View_E(S))=0$, implying
\[\forall s_1,s_2\in \bset^k:~~ SD(View_E(s_1), View_E(s_2)) = 0.\]

\section{Proof of Lemma \ref{lemma-remove-delta}} \label{app-proof-remove-delta}
Since Eve is passive, full-access communicants Alice and Bob will have the same view of the communication transcript, denoted by $X$, by the $(k,c,\delta,\epsilon)$-PMT protocol $\Pi$ over the multipath setting.  Following the protocol $\Pi$, Alice calculates the key $S=f(X,Rnd_A)$ and Bob calculates its estimate $\hat{S}=g(X,Rnd_B)$ using key derivation functions $f(.,.)$ and $g(.,.)$ as well as local randomness $Rnd_A$ and $Rnd_B$, respectively.  We shall convert $\Pi$ to a new protocol $\Pi'$ that provides the same rate and has perfect reliability. We build $\Pi'$ by converting the functions $f$ and $g$ to deterministic functions, which require no local randomness.

\ihdr{Step 1: Making g(.,.) deterministic}
We write
\begin{eqnarray*}
\Pr(\hat{S}=S) &=& \Pr(f(X,Rnd_A)=g(X,Rnd_B)) \\
    &=& \sum_{x, r} \Pr(X=x) \Pr(Rnd_B=r| X=x) \Pr(f(x,Rnd_A|X=x)=g(x,r)) \\
    &\leq& \sum_{x} \Pr(X=x) \max_r \Pr(f(x,Rnd_A|X=x)=g(x,r)) \stackrel{\triangle}{=} \Pr(f(X,Rnd_A)=g(X,R^*_B(X))).
\end{eqnarray*}
We define the deterministic function $g'(X)=g(X,R^*_B(X))$. Bob will guarantee $\delta$-reliability by returning $\hat{S'}=g'(X)$ instead of $\hat{S}$, i.e., $\Pr(\hat{S'} = S) \geq \Pr(\hat{S}=S)$.

\ihdr{Step 2: Making f(.,.) deterministic}
For each $x$ as an instance of the communication transcript $X$, define the set
\begin{eqnarray*}
{\cal R}_{A}(x)=\{r ~|~ f(x,r) = g'(x) \}.
\end{eqnarray*}
The set includes randomness values that, if chosen by Alice, lead to the prefect reliability of key establishment. Perfectly reliability can be thus obtained as the event ${\EE}$ which refers to when $Rnd_A \in {\cal R}_A(X)$. Usinh the reliability property of $\Pi$, the event ${\EE}$ occurs with probability
\[\Pr({\EE}) = \Pr(\hat{S'} = S) \geq \Pr(\hat{S}=S) \geq 1-\delta.\]

Let $R^*_A(X)$ be an arbitrary (but fixed) member of the random set ${\cal R}_A(X)$. Define deterministic key derivation function $f'(X)=f(X,R^*_A(X))$. The protocol $\Pi'$ is the same as $\Pi$ except for the key derivation step, the parties return $S' = f'(X)$ and $\hat{S'} = g'(X)$.

Clearly $\Pi'$ has the same rate since the communication transcripts and the key length remain unchanged. $\Pi'$ is also perfectly-reliable since $R^*_A(X) \in {\cal R}_{A}(x)$ and thus $f'(x)=g'(x)$. It remains to show the secrecy (and uniformity) of $S'$. Let $Z \in \ZZ$ be Eve's view of the communication (in both $\Pi$ and $\Pi'$). Note that when $\EE$ holds $S=S'$ and so for any $s\in \bset^k$ and $z \in \ZZ$, we have
\begin{eqnarray*}
\Pr(S'=s, Z=z | \EE) = \Pr(S=s, Z=z | \EE).
\end{eqnarray*}
The joint-distribution of $(S',Z)$ satisfies
\begin{eqnarray*}
\Pr(S'=s,Z=z)&=& \Pr(S'=s, Z=z | \EE) \Pr(\EE) +  \Pr(S'=s, Z=z | \overline{\EE}) \Pr(\overline{\EE})\\
    &=& \Pr(S=s, Z=z | \EE) \Pr(\EE) +  \Pr(S'=s, Z=z | \overline{\EE}) \Pr(\overline{\EE})\\
    &=& \Pr(S=s, Z=z) - \Pr(S=s, Z=z | \overline{\EE}) \Pr(\overline{\EE}) +  \Pr(S'=s, Z=z | \overline{\EE}) \Pr(\overline{\EE})\\
\end{eqnarray*}
The secrecy parameter $\epsilon'$ of the new protocol is obtained as
\begin{eqnarray*}
SD([S',Z];[U_k,Z]) &=& \frac{1}{2} \sum_{s,z} |\Pr(S'=s, Z=z) - \Pr(U_k=s, Z=z)| \\
        &\leq & \frac{1}{2} \sum_{s,z} |\Pr(S=s, Z=z) - \Pr(U_k=s, Z=z)|  \\
        && \tab + \Pr(\overline{\EE})  \frac{1}{2} \sum_{s,z}  | \Pr(S'=s, Z=z | \overline{\EE})-\Pr(S=s, Z=z|\overline{\EE})| \\
        &\leq& \epsilon + \delta.
\end{eqnarray*}

\section{Proof of Lemma \ref{lemma-remove-epsilon}} \label{app-remove-epsilon}
Let $\Pi$ be a $(k,c,0,\epsilon)$-PMT protocol. Consider a ``uniform'' secret key $S\in \bset^k$ and denote the communication in each round $1\leq j\leq r$ (from one party to the other) by $X^{(j)}=(X^{(j)}_1,\dots,X^{(j)}_n)$, where $X^{(j)}_i \in \{0,1\}^{l^{(j)}_i}$ is transmitted over the $i$-th path. Alice and Bob have access to the whole communication string $X=(X^{(1)}, ..., X^{(r)})$. Eve's view of the communication is $Z=(Z^{(1)}, ..., Z^{(r)})$, where $Z^{(j)}=(Z^{(j)}_1,\dots,Z^{(j)}_n)$ is observed in round $j$ over some arbitrary paths ${\cal T}^{(j)}$ such that $|{\cal T}^{(j)}| \leq t_e$:
\begin{eqnarray*}
\forall 1\leq j \leq n,~~ \forall 1\leq i \leq n:~~ Z^{(j)}_i = \begin{cases}
X^{(j)}_i, & \mbox{if}~~ i \in {\cal T}^{(j)}\\
\bot, & \mbox{else}
\end{cases}.
\end{eqnarray*}
We use the above to calculate
\begin{eqnarray*}
H(S|Z) \leq H(X|Z) = min_{{\cal T}^{(j)}} H((X^{(j)}_i)_{1\leq j \leq r,~~ i \notin {\cal T}^{(j)}}) \leq  min_{{\cal T}^{(j)}} \sum_{j=1}^r \sum_{i \notin {\cal T}^{(j)}} l_i.
\end{eqnarray*}

We need to relate $H(S|Z)$ to the secret-key length in order to find an upper-bound on rate $\frac{k}{c}$. We use the relation between statistical distance and Shannon conditional entropy that is given by the following lemma \cite{HY10}.
\begin{lemma}\label{MI-SD-interplay}
For two variables $S \in \bset^k$ and $Z \in {\cal Z}$ with statistical distance $SD([S, Z]; [U_k , Z]) \leq \epsilon$, such that $U_k$ is a $k$-bit uniform string, it holds that $H(S|Z)\geq (1-\epsilon') k$, where $\epsilon'=1.25 \epsilon-\epsilon \log \epsilon$.
\end{lemma}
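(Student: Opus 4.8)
The plan is to pass from the joint hypothesis to a bound on each conditional distribution $P_{S|Z=z}$ and then invoke the continuity of Shannon entropy around the uniform distribution. Write $k-H(S|Z)=\sum_z\Pr(Z{=}z)\bigl(k-H(S|Z{=}z)\bigr)$ and set $\delta_z:=SD(P_{S|Z=z},U_k)$. The one identity I need is that these conditional distances average exactly to the hypothesis,
\[
\sum_z\Pr(Z{=}z)\,\delta_z=\tfrac{1}{2}\sum_{s,z}\bigl|\Pr(S{=}s,Z{=}z)-2^{-k}\Pr(Z{=}z)\bigr|=SD\bigl([S,Z];[U_k,Z]\bigr)\le\epsilon,
\]
so it suffices to bound $k-H(S|Z{=}z)$ by a concave, increasing function of $\delta_z$ alone and then average; this is exactly what keeps $|\ZZ|$ out of the final bound (a crude joint estimate would not).

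First I would establish the per-$z$ estimate: for any distribution $P$ on $\bset^k$ with $SD(P,U_k)=\delta$,
\[
k-H(P)\;=\;\log 2^k-H(P)\;\le\;\delta\log(2^k-1)+h(\delta)\;\le\;\delta k+h(\delta),
\]
where $h(\delta)=-\delta\log\delta-(1-\delta)\log(1-\delta)$ is the binary entropy function. This is the sharp (Fannes--Audenaert) continuity inequality with the second argument taken to be uniform; alternatively it has a short self-contained proof obtained by splitting the atoms of $P$ into those of mass above and below $2^{-k}$ and bounding their two contributions to $\log 2^k-H(P)$ separately. Averaging this over $z$ with weights $\Pr(Z{=}z)$, using the identity above on the linear term and Jensen's inequality (concavity of $h$) on the entropy term, yields
\[
k-H(S|Z)\;\le\;k\sum_z\Pr(Z{=}z)\,\delta_z+\sum_z\Pr(Z{=}z)\,h(\delta_z)\;\le\;k\epsilon+h\Bigl(\sum_z\Pr(Z{=}z)\,\delta_z\Bigr)\;\le\;k\epsilon+h(\epsilon),
\]
where the last step uses monotonicity of $h$ on $[0,\tfrac{1}{2}]$ together with $\epsilon\le\tfrac{1}{2}$; when $\epsilon>\tfrac{1}{2}$ one has $\epsilon'>1$ and $H(S|Z)\ge 0\ge(1-\epsilon')k$, so that case is vacuous.

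It then remains to absorb $k\epsilon+h(\epsilon)$ into the claimed bound $\epsilon' k=(1.25\,\epsilon-\epsilon\log\epsilon)k$. This is an elementary manipulation: one uses a crude estimate such as $h(\epsilon)\le\epsilon\log\tfrac{1}{\epsilon}+\epsilon\log e$ together with a short case split on $\epsilon$ --- for all but the largest admissible $\epsilon$ the quantity $\epsilon' k$ already dominates $k\epsilon+h(\epsilon)$, and those remaining values of $\epsilon$ make the inequality trivial --- to conclude $H(S|Z)\ge(1-\epsilon')k$. I expect the only real obstacle to be the per-$z$ continuity estimate: it is the sole genuinely analytic ingredient and must be applied with the correct alphabet dependence ($\delta\log(2^k-1)$, not a blanket $\delta k$) so that nothing is wasted in the averaging; the statistical-distance identity, Jensen's inequality, and the final tidying of the constant are routine.
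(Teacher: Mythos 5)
The paper never proves this lemma: it is quoted as a black box from the cited reference \cite{HY10} (Ho and Yeung's work on the interplay between entropy and variational distance) inside Appendix~C, so there is no in-paper proof to compare yours against --- what you supply is a self-contained derivation where the authors supply only a citation. Your structure is sound, and the three load-bearing steps all check out: the exact identity $\sum_z \Pr(Z{=}z)\,\delta_z = SD([S,Z];[U_k,Z])$ (valid because $U_k$ is independent of $Z$), the per-$z$ continuity bound $k-H(S|Z{=}z)\le \delta_z\log(2^k-1)+h(\delta_z)$ with $h$ the binary entropy (Fannes--Audenaert with one argument uniform, applicable since $SD(P,U_k)\le 1-2^{-k}$ always), and the averaging via concavity and monotonicity of $h$ on $[0,\tfrac12]$, with the $\epsilon>\tfrac12$ regime correctly dismissed as vacuous because $\epsilon'>1$ there. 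Your decision to bound each conditional distribution separately and only then average is exactly what keeps $|{\cal Z}|$ out of the bound; a joint Fannes estimate on the pair $(S,Z)$ would not work. The one step that is not quite ``routine'' is the final absorption $k\epsilon+h(\epsilon)\le(1.25\epsilon-\epsilon\log\epsilon)k$: the difference equals $0.25k\epsilon-(k-1)\epsilon\log\epsilon+(1-\epsilon)\log(1-\epsilon)$, and using $(1-\epsilon)\log(1-\epsilon)\ge-\epsilon\log e$ together with $-\log\epsilon\ge1$ for $\epsilon\le\tfrac12$ it is nonnegative precisely when $1.25k\ge 1+\log e\approx 2.44$, i.e.\ for $k\ge 2$. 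For $k=1$ it genuinely fails (e.g.\ $\epsilon=0.3$ gives $k\epsilon+h(\epsilon)\approx 1.18$ versus $\epsilon' k\approx 0.90$), even though the lemma itself still holds there and admits a direct check; since $k$ is a message length in every use the paper makes of the lemma, this is a cosmetic rather than substantive defect, but your advertised ``case split on $\epsilon$'' does not cover it --- the split needed is on $k$.
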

The lemma suggests $k \leq \frac{H(S|Z)}{1-\epsilon'}$. Hence, the secret-key rate
\begin{eqnarray*}
\frac{k}{c} \leq \frac{1}{1-\epsilon'} min_{{\cal T}^{(j)}}  \frac{ \sum_{j=1}^r \sum_{i \notin {\cal T}^{(j)}} l_i }{\sum_{j=1}^r \sum_{i=1}^n l_i} \leq \frac{(n-t_e)r}{nr (1-\epsilon')} = \frac{1-\frac{t_e}{n}}{1-\epsilon'}.
\end{eqnarray*}
The last inequality becomes equal (maximized) when in each round the same number of bits is transmitted over all paths, i.e., $\forall j:~~ l^{(j)}_1=\dots=l^{(j)}_n=l^{(j)}$, and this leads to the upper bound $\frac{(n-t)rl^{(1)}}{nrl^{(1)}}$.

\section{Proof of Lemma \ref{lemma-upperbound-0}} \label{app-proof-upperbound-0}
Assume $\Pi$ is an $r$-round ($r\geq 1$) $(k, c, 0, 0)$-PMT protocol over the $(n,t_a,t_b,t_e,\len)$-multipath setting ($t_e$ can be less, equal, or greater than $\tmin=\min(t_a,t_b)$). Define Alice's, Bob's, and Eve's views of the communication through $\Pi$ by
\begin{eqnarray*}
&& X=(X_i^{(j)})_{1\leq i\leq n ~,~ 1\leq j \leq r},\\
&& Y=(Y_i^{(j)})_{1\leq i\leq n ~,~ 1\leq j \leq r},\\
&& Z=(Z_i^{(j)})_{1\leq i\leq n ~,~ 1\leq j \leq r},
\end{eqnarray*}
where $V^{(j)}_i$ ($V\in \{X,Y,Z\}$) indicates the view of a party over the $i$-th path during the $j$-th round. The views are specified as follows. In the beginning each round $j$, Alice, Bob, and Eve select $\MM^{(j)}_A$ of size $t_a$, $\MM^{(j)}_B$ of size $t_b$, and ${\cal T}^{(j)}$ of size $t_e$ as their selected paths indexed from $\{1,\dots,n\}$. In any round either Alice or Bob sends information over the selected paths, then the other communicant and Eve observe information on their chosen paths. For instance, for a round $1\leq j \leq r$ with Alice as the sender, the views are

{\small
\begin{eqnarray*}
X_i^{(j)} = \begin{cases}
X_i^{(j)} & \mbox{if}~ i \in \MM^{(j)}_A\\
\bot & else
\end{cases},
~
Y_i^{(j)} = \begin{cases}
X_i^{(j)} & \mbox{if}~ i \in \MM^{(j)}_A \cap \MM^{(j)}_B \\
\bot & else
\end{cases},
~ \mbox{and}~
Z_i^{(j)} = \begin{cases}
X_i^{(j)} & \mbox{if}~ i \in \MM^{(j)}_A \cap {\cal T}^{(j)}\\
\bot & else
\end{cases}.
\end{eqnarray*}
}

Denote by ${\cal Z}={\cal Z}^{(1)} \times\dots \times {\cal Z}^{(r)}$ the set of all possible views of Eve's for all rounds. Note that in each round $1\leq j \leq r$, the set ${\cal Z}^{(j)}$ consists of all and only $n$-sequences with at least $n-t_e$ elements being equal to $\bot$. Denote by ${\cal T}_{z^{(j)}}$ the set of elements in $z^{(j)}$ that are not equal to $\bot$. Contiuing with the assumption that Alice is the sender of round $j$, we have $|{\cal T}_{z^{(j)}}| \leq \min(t_e,t_a)$ and the equality corresponds to when the Eve's captured paths have the maximum intension with those chosen by Alice. Similarly, $|{\cal T}_{z^{(j)}}| \leq \min(t_e,t_b)$ when Bob is the sender of round $j$. The perfect secrecy of the protocol $\Pi$ implies that the distribution of the message is not changed given the view of the adversary, i.e.,
\[\forall s \in \bset^k,z \in {\cal Z}:~ \Pr(S=s, Z=z)=\Pr(S=s),\]
which follows
\begin{eqnarray}\label{upper-ps}
\forall z \in {\cal Z}:~ H(S|Z=z) = H(S).
\end{eqnarray}
On the other hand, the reliability property ($\hat{S}=S$) implies $H(S|\hat{S})=0$, thus
\begin{eqnarray}\label{upper-pr}
H(S) = I(S;\hat{S}).
\end{eqnarray}
We combine (\ref{upper-ps}) and (\ref{upper-pr}) to write
\begin{eqnarray*}
H(S) &=& \min_{z \in {\cal Z}} H(S|Z=z) = \min_{z \in {\cal Z}} I(S;\hat{S}|Z=z)  \\
    &\stackrel{(a)}{\leq}& \min_{z \in {\cal Z}} I(X;Y|Z=z) \leq  \min_{z \in {\cal Z}} H(X,Y|Z=z) \\
    &\stackrel{(b)}{\leq}& \sum_{1\leq j\leq r}  \min_{z^{(j)} \in {\cal Z}^{(j)}} H(X^{(j)},Y^{(j)}|Z^{(j)}=z^{(j)}) \\
    &\leq& \sum_{1\leq j\leq r}  \min_{z^{(j)}\in {\cal Z}^{(j)}}  \sum_{i \in \MM^{(j)}_A \cap \MM^{(j)}_B \cap {\cal T}'_{z^{(j)} }} l^{(j)}_i,
\end{eqnarray*}
where ${\cal T}'_{z^{(j)}}$ is the complement of ${\cal T}_{z^{(j)}}$, and $l^{(j)}_i$ denotes the number of transmitted bits (by either Alice or Bob) over the $i$-th path in the $j$-th round. Inequality (a) is due to the Markovity of $S \leftrightarrow X \leftrightarrow Y \leftrightarrow \hat{S}$, and inequality (b) follows from the chain rule of conditional entropy. Considering uniform distribution $H(S)=k$ for the message, we can upper-bound the secrecy rate as
\begin{eqnarray*}
R= \frac{k}{c} = \frac{H(S)}{c}\leq \frac{ \sum_{j=1}^r \min_{z^{(j)}} \sum_{i \in \MM^{(j)}_A \cap \MM^{(j)}_B \cap {\cal T}'_{z^{(j)} }} l^{(j)}_i}{\sum_{j=1}^r \sum_{i \in \MM^{(j)}_A } l^{(j)}_i} \leq \frac{\sum_{i=1}^r [\tmin-t_e]_+ l^{(j)}}{\sum_{i=1}^r \tmin l^{(j)}} = [1-\frac{t_e}{\tmin}]_+,
\end{eqnarray*}
The last inequality holds since (i) there is $z^{(j)}$ that gives the maximum $|{\cal T}_{z^{(j)}}|\geq \min(t_e,\tmin)$, implying that $|\MM^{(j)}_A \cap \MM^{(j)}_B \cap {\cal T}'_{z^{(j)} }| \leq [\tmin-t_e]_+$ and (ii) the fraction becomes maximized when the same number of bits is transmitted over all paths in every round, i.e., $\forall j ~\forall i \in \MM^{(j)}_A:~ l^{(j)}_i=l^{(j)}$.

\section{Proof of Theorem \ref{theorem-FF1}} \label{app-proof-FF1}
The family of protocols starts from message length $k = r_2 \len$, obtained by choosing $q_1$ large enough such that (see (\ref{tk-FF1}))
\begin{eqnarray*}
q_1 \geq \frac{2^{\len/2}-1}{t_e} ~~~ \mbox{and}~~~ q_2 \geq \max \left( \frac{(2+\psi)n}{\psi^2 t_e}\ln\frac{1}{\epsilon} ~,~ \frac{2^{\len/2}-1}{t'_e} \right).
\end{eqnarray*}
We analyze the rate, reliability, and secrecy of the family $\FF_1$ as follows.

\bhdr{Secrecy rate}
Using (\ref{tk-FF1})-(\ref{c2-FF1}) and letting $\tmin=\min(t_a,t_b)$ and $\Delta=(2^{\frac{\len}{2}-2}-0.25)^{-1}$, we have
\begin{eqnarray*}
R_1   &=& \frac{r_2 \len}{(q_1+q_2) \tmin \len} = \frac{q_2}{q_2+q_1} \left(1-\frac{t_e'}{\tmin} - \frac{2g_2}{q_2 \tmin}\right) = \frac{q_2}{q_2+\frac{r_1}{\tmin - t_e - \frac{2g_1}{q_1}}} \left(1-\frac{t_e'}{\tmin} - \frac{2g_2}{q_2 \tmin}\right) \\
    &=& \frac{q_2}{q_2+\frac{q_2 w}{\len \tmin \left( 1 - \frac{t_e}{\tmin} - \frac{2 g_1}{q_1 \tmin}\right) }} \left(1-\frac{t_e'}{\tmin} - \frac{2g_2}{q_2 \tmin}\right) \stackrel{(a)}{\geq} \frac{1-\frac{t_e'}{\tmin} - \Delta }{1+\frac{\log{n \choose \tmin}}{\len \tmin \left( 1 - \frac{t_e}{\tmin} - \Delta \right) }} \stackrel{(b)}{\geq} \frac{1-\frac{t_e'}{\tmin} - \Delta }{1+\frac{\log(e n / \tmin)}{\len \left( 1 - \frac{t_e}{\tmin} - \Delta \right) }}.
\end{eqnarray*}

Inequality (a) follows by using a similar argument as in (\ref{Delta}), noting the choices of $g_1$ and $g_2$ (\ref{tk-FF1}) as well as $q_1 \geq \frac{2^{\len/2}-1}{t_e}$ and $q_2 \geq \frac{2^{\len/2}-1}{t'_e}$.  Inequality (b)
holds due to ${n \choose \tmin} < (ne/\tmin)^\tmin$ which follows from Stirling's approximation. The fact that $\psi>0$ can be arbitrarily small implies that $\lim_{\psi \to 0} \frac{t'_e}{\tmin} = \frac{t_e}{n}$; this completes the rate analysis.

\bhdr{Perfect-reliability} This is trivial: In both blocks (key transport and coordinated PMT), Alice and Bob use P-PMT schemes over fixed paths and hence Bob always recovers the transmitted message using the reconstruction function of the SSS.

\bhdr{$\epsilon$-secrecy} We show this considering the choice of $q_2 \geq \frac{(2+\psi)n}{\psi^2 t_e}\ln(\frac{1}{\epsilon})$. Recall that the coordinated PMT block relies on the assumption that Eve observes communication over at most $q_2 t_e'$ of the $q_2 \tmin$ paths secretly chosen by Alice and Bob for $q_2$ time intervals.  Let $T'_i \leq \min(\tmin,t_e)=t_e$ be the number of the secretly chosen paths (in the $i$-th interval of Stage (ii)) which intersect those $t_e$ paths captured by Eve. The $\epsilon$-secrecy of the protocol (see \ref{sec} in Definition \ref{def-PMT}) is directly related to the probability that $\sum_{i=1}^{q_2} T'_i > {q_2} t'$: If $\sum_{i=1}^{q_2} T'_i \leq {q_2} t'$, the SSS secrecy property guarantees no information leakage. For every $s_1,s_2 \in \bset^k$
\begin{eqnarray*}
SD(View_E(s_1),View_E(s_2)) \leq \Pr(\sum_{i=1}^{q_2} T'_i \leq {q_2} t_e') \times 0 + \Pr (\sum_{i=1}^{q_2} T'_i > {q_2} t_e') \times 1 = \Pr (\sum_{i=1}^{q_2} T'_i > {q_2} t_e').
\end{eqnarray*}
We upper-bound this probability as follows. The variables $T'_i$ follow the hyper-geometric distribution
\begin{eqnarray*}
\forall 0\leq j\leq t_e:~~ \Pr(T'_i=j) = \frac{{\tmin \choose j} {n-\tmin \choose t_e-j} }{ {n \choose t_e}},
\end{eqnarray*}
whose expected value equals $\frac{\tmin t_e}{n}$. We now consider the independent and identically distributed random variables $\frac{T'_i}{\tmin}$ which take values between 0 and 1 and have equal expected value of $\mu=t_e/n$. Applying the Chernoff bound to the sum of these independent variables \cite{Ch52} shows us the following upper-tail probability bound:
\begin{eqnarray*}
\Pr\left(\sum_{i=1}^{q_2} T'_i> {q_2} t_e' \right) = \Pr\left( \sum_{i=1}^{q_2} \frac{T'_i}{\tmin} > (1+\psi) {q_2} \mu \right) < e^{-\frac{\psi^2}{2+\psi} {q_2} \mu} \leq  e^{- \ln(1/\epsilon)} = \epsilon.
\end{eqnarray*}
The first inequality is due to the Chernoff bound and the last one holds because of the choice of ${q_2}$.


\section{Proof of Theorem \ref{theorem-FF2}} \label{app-proof-FF2}
The family of protocols starts from message length  $k=r_2 \len$, obtained by choosing $q_1$ large enough such that (see (\ref{tk-FF2}))
\begin{eqnarray*}
q_1 \geq \max \left(\frac{2 n}{\psi^2 t_b} \ln(\frac{1}{\delta}) ~,~  \frac{(2+\psi)n}{\psi^2 t_e}\ln(\frac{2}{\epsilon}) ~,~ \frac{2^{\len/2}-1}{t'_{e,1}} \right) ~~~ \mbox{and}~~~ q_2 \geq \max \left( \frac{(2+\psi)n}{\psi^2 t_e}\ln(\frac{2}{\epsilon}) ~,~ \frac{2^{\len/2}-1}{t'_{e,2}} \right).
\end{eqnarray*}

\bhdr{Secrecy rate}
Using (\ref{tk-FF2}) and letting $\Delta=(2^{\frac{\len}{2}-2}-0.25)^{-1}$, we reach the secrecy rate:
\begin{eqnarray*}
R_2   &=& \frac{r_2 \len}{(q_2 \tmin + q_1 t_a) \len} = \frac{q_2}{q_2+q_1 t_a/\tmin} \left(1-\frac{t'_{e,2}}{\tmin} - \frac{2g_2}{q_2 \tmin}\right) = \frac{q_2}{q_2+\frac{r_1 t_a}{\tmin(t'_{b,1} - t'_{e,1} - \frac{2g_1}{q_1})}} \left(1-\frac{t'_{e,2}}{\tmin} - \frac{2g_2}{q_2 \tmin}\right) \\
    &=& \frac{q_2}{q_2+\frac{q_2 w}{\len \tmin \left( \frac{t'_{b,1}-t'_{e,1}}{t_a} - \frac{2 g_1}{q_1 t_a}\right) }} \left(1-\frac{t'_{e,2}}{\tmin} - \frac{2g_2}{q_2 \tmin}\right) \stackrel{(a)}{\geq} \frac{1-\frac{t'_{e,2}}{\tmin} - \Delta }{1+\frac{\log{n \choose \tmin}}{\len \tmin \left( \frac{t'_{b,1}-t'_{e,1}}{t_a} - \Delta \right) }} \stackrel{(b)}{\geq} \frac{1-\frac{t'_{e,2}}{\tmin} - \Delta }{1+\frac{\log(e n / \tmin)}{\len \left( \frac{t'_{b,1}-t'_{e,1}}{t_a} - \Delta \right) }}\\
    &=& \frac{1- (1+\psi)\frac{t_e}{n} - \Delta }{1+\xi_{2,\psi}},~~~ \mbox{where}~~~ \xi_{2,\psi} = \frac{\log(e n / \tmin)}{\len \left( \frac{(1-\psi) t_b - (1+\psi) t_e}{n} - \Delta \right)}.
\end{eqnarray*}
Inequality (a) follows from the definition of $g_1$ and $g_2$ (\ref{tk-FF2}) and the choices of $q_1 \geq \frac{2^{\len/2}-1}{t'_{e,1}}$ and $q_2 \geq \frac{2^{\len/2}-1}{t'_{e,2}}$.  Inequality (b)  holds due to ${n \choose \tmin} < (ne/\tmin)^\tmin$ which follows from Stirling's approximation. Noting that $\psi>0$ can be made arbitrarily small completes the proof of rate.

\bhdr{$\delta$-reliability} The protocol may fail ($\hat{S} \neq S$) only if Bob cannot recover the transported key $W$ at Stage (i), whose probability is upper-bounded as follows. Let $T'_i \leq \tmin$ denote the number of elements that Bob receives in time interval $1 \leq i \leq {q_1}$. Note from the description of the scheme that Bob fails in obtaining $W$ only if he receives less than $q_1 t'_{b,1}$ elements, i.e., when $\sum_{i=1}^{q_1} T'_i < q_1 t'_{b,1}$. For all $1\leq i \leq q_1$, the random variable $T'_i$ follows the hyper-geometric distribution
\[\forall 0 \leq j \leq \tmin:~~ \Pr(T'_i = j) = \frac{{t_a \choose j} {n-t_a \choose t_b-j}}{{n\choose t_b}},\]
which has an expected value of $E(T'_i) = \frac{t_a t_b}{n}$. This implies that normalized independent random variables $\frac{T'_i}{t_a}$ (for $1 \leq i \leq q_1$) take values between 0 and 1 and have the expected value $\mu = t_b/n$. We use the Chenroff  bound \cite{Ch52} on the sum of independent normalized random variables to obtain (note that $t'_{b,1}=\frac{(1-\psi)t_a t_b}{n}$)
\begin{eqnarray*}
\Pr(\hat{S} \neq S) &\leq& \Pr(\sum_{i=1}^{q_1} T'_i < {q_1} t'_{b,1}) = \Pr(\sum_{i=1}^{q_1} \frac{T'_i}{t_a} < (1-\psi) {q_1} \mu) \stackrel{(a)}< e^{-\frac{\psi^2}{2} {q_1} \mu} \stackrel{(b)}\leq e^{- \ln(1/\delta)} = \delta.
\end{eqnarray*}
Inequality (a) is due to the Chernoff bound and inequality (b) is due to the choice of $q_1 \geq \frac{2 n}{\psi^2 t_b}\ln(\frac{1}{\delta})$.

\bhdr{$\epsilon$-secrecy} Thanks to the secrecy of the quasi-ramp SSS, the protocol provides ``perfect secrecy'' if Eve receives at most $q_1 t'_{e,1}$ share elements in the key-transport block and at most $q_2 t'_{e,2}$ share elements during the coordinated PMT. We show that both above assumptions holds except with probability $\epsilon$. Let the random variable $T''_i$ denote the number of paths that are accessed by both Alice and Eve during interval $1\leq i \leq q_1+q_2$ (this includes both stages of the protocol). For the key-transport (intervals $1$ to $q_1$), the distribution of $T''_i$ is hyper-geometric with an expected value of $\frac{t_a t_e}{n}$. The normalized variables $\frac{T''_i}{t_a}$ (for $1\leq i \leq q_1$) are independent, take values between 0 and 1, and each has the expected value $\mu_e=\frac{t_e}{n}$. Applying the Chernoff bound to the sum of these variables, we reach at (note that $t'_{e,1}=(1+\psi)\frac{t_a t_e}{n}$ and $q_1 \geq \frac{(2+\psi)n}{\psi^2 t_e}\ln(\frac{2}{\epsilon})$)
\begin{eqnarray*}
\Pr\left(\sum_{i=1}^{q_1} T''_i > q_1 t'_{e,1} \right) = \Pr\left( \sum_{i=1}^{q_1} \frac{T''_i}{t_a} > (1+\psi) {q_1} \mu_e \right) < e^{-\frac{\psi^2}{2+\psi} {q_1} \mu_e} \leq  e^{- \ln(1/\epsilon)} = \frac{\epsilon}{2},
\end{eqnarray*}
Similarly, we have independent and normalized variables $\frac{T''_i}{\tmin}$ for $q_1+1\leq i \leq q_1+q_2$ with the expected values of $\mu_e=\frac{t_e}{n}$. Using the Chernoff inequality results in (note that $t'_{e,2}=(1+\psi)\frac{\tmin t_e}{n}$ and $q_2 \geq \frac{(2+\psi)n}{\psi^2 t_e}\ln(\frac{2}{\epsilon})$)
\begin{eqnarray*}
\Pr\left(\sum_{i=q_1}^{q_1+q_2} T''_i > q_2 t'_{e,2} \right) < \frac{\epsilon}{2},
\end{eqnarray*}

The $\epsilon$-secrecy of $\FF_2$ (see \ref{sec} in Definition \ref{def-PMT}) follows from the above: For every $s_1,s_2 \in \bset^k$,
\begin{eqnarray*}
SD(View_E(s_1),View_E(s_2)) \leq \Pr (\sum_{i=1}^{q_1}  T''_i > q_1 t'_{e,1} \cup \sum_{i=q_1+1}^{q_1+q_2}  T''_i > q_2 t'_{e,2}) \times 1 < \epsilon.
\end{eqnarray*}

\section{Proof of Proposition \ref{prop-ow-t_e>mb}} \label{app-proof-ow-t_e>mb}
Let $\Pi$ be any one-round $(R, \delta,\epsilon)$-PMT protocol and let $S \in \bset^k$ be a uniform message to be delivered by $\Pi$. The protocol is generally described by two public distributions $P_{Rnd_A}$ and $P_{Rnd_B}$ as well as three public (deterministic) functions $f$, $g$, and $h$ which may be bounded according to the multipath setting constraints. Alice and Bob first generate random strings $Rnd_A$ and $Rnd_B$ independently and according to distributions $P_{Rnd_A}$ and $P_{Rnd_B}$, respectively. Alice sends $\overline{X}=f(Rnd_A,S)$ over the multipath setting via possibly multiple time intervals. Bob creates his view of Alice's communication through $\overline{Y}=g(Rnd_B, \overline{X})$, where $g$ is bounded by Bob's $t_b$-path access limitation in each time interval and retrieves the message as $\hat{S}=h(Rnd_B,\overline{Y})$. With the public knowledge of $P_{Rnd_B}$, $g$, and $h$, Eve can follow Bob's strategy: She generates $Rnd_E$ according to the distribution $P_{Rnd_B}$ and then calculates $S' = h(Rnd_E,g(Rnd_E,\overline{X}))$. Note that Eve can apply the functions $g$ and $h$ since she has ability of access to $t_e\geq t_b$ paths in a time interval. From the properties of the statistical distance, it follows that
\begin{eqnarray*}
SD([S,\overline{Y}];[S,\overline{Z}]) &=& SD([S,g(Rnd_B, f(Rnd_A,S))];[S,g(Rnd_E, f(Rnd_A,S))]) \\
    &\leq& SD([S,Rnd_B,Rnd_A];[S,Rnd_E,Rnd_A]) = 0.
\end{eqnarray*}
The inequality holds since functions $g$ and $f$ are deterministic and the last equality is true since $S$, $Rnd_A$, $Rnd_B$, and $Rnd_E$ are all independent and $Rnd_E$ follows the  distribution of $Rnd_B$. Similarly, one can show
\begin{eqnarray*}
SD([U_k,\overline{Y}];[U_k,\overline{Z}]) = 0,
\end{eqnarray*}
where $U_k$ is a $k$-bit fresh uniform random string. We use the triangle inequality property of statistical distance to reach
\begin{eqnarray*}
SD([S,\overline{Y}];[U_k,\overline{Y}]) &\leq& SD([S,\overline{Y}];[S,\overline{Z}]) + SD([S,\overline{Z}];[U_k,\overline{Z}]) + SD([U_k,\overline{Z}];[U_k,\overline{Y}])  \\
    &=& SD([S,\overline{Z}];[U_k,\overline{Z}]) \leq \epsilon,
\end{eqnarray*}
where the last inequality follows from the $\epsilon$-secrecy of $\Pi$. From the inter-relation between mutual information and statistical distance (see Lemma \ref{MI-SD-interplay} in Appendix \ref{app-remove-epsilon}), we reach the following lower bound on Bob's uncertainty about the uniform message $S$:
\begin{eqnarray*}
H(S|\hat{S}) \geq H(S|\overline{Y}) = k - I(S; \overline{Y}) \geq k - 2\epsilon (\frac{k}{R}+\log(\frac{1}{\epsilon})) \geq k (1- \frac{2\epsilon}{R}) - 2.
\end{eqnarray*}
Combining the $\delta$-reliability of $\Pi$ with the inverse of Fano's inequality proves
\begin{eqnarray*}
\delta \geq \Pr(S\neq \hat{S}) \geq \frac{H(S|\hat{S})-1}{k} \geq 1 - \frac{2\epsilon}{R} - \frac{3}{k},
\end{eqnarray*}
which implies $R \leq \frac{2\epsilon}{1-\delta-\alpha}$, noting that $k \geq 3/\alpha$. By letting $\epsilon$ and $\delta$ be arbitrarily small, we obtain
\[ \lim_{\epsilon,\delta \to 0} R = 0.\]

\section{Proof of Theorem \ref{theorem-FF3}} \label{app-proof-FF3}
The family of protocols starts from message length  $k=r_2 \len$, obtained by choosing $q_1$ large enough such that (see (\ref{tk-FF3}))
\begin{eqnarray*}
q_1 \geq \max \left(\frac{2 n}{\psi^2 t_a} \ln(\frac{1}{\delta})  ~,~  \frac{(2+\psi)n^2}{\psi^2 t_a t_e}\ln(\frac{1}{\epsilon}) ~,~ \frac{2^{\len/2}-1}{t'_{e,1}} \right) ~~ \mbox{and} ~~ q_2 \geq \max \left(  \frac{(2+\psi)n}{\psi^2 t_e}\ln(\frac{1}{\epsilon})  ~,~ \frac{2^{\len/2}-1}{t'_{e,2}} \right).
\end{eqnarray*}

\bhdr{Secrecy rate}
Using (\ref{tk-FF3})-(\ref{c2-FF3}) and letting $\Delta=(2^{\frac{\len}{2}-2}-0.25)^{-1}$, we calculate the secrecy rate of $\FF_3$ as:
\begin{eqnarray*}
R_3   &=& \frac{r_2 \len}{(q_2 \tmin + q_1 t_b) \len + q_1 w_1} = \frac{q_2}{q_2+q_1 \left( \frac{t_b}{\tmin} + \frac{w_1}{\tmin \len} \right)} \left(1-\frac{t'_{e,2}}{\tmin} - \frac{2g_2}{q_2 \tmin}\right) \\
    &=& \frac{q_2}{q_2+\frac{r_1}{t'_{a,1} - t'_{e,1} - \frac{2g_1}{q_1}} \left( \frac{t_b}{\tmin} + \frac{w_1}{\tmin \len} \right) } \left(1-\frac{t'_{e,2}}{\tmin} - \frac{2g_2}{q_2 \tmin}\right)
    = \frac{1-\frac{t'_{e,2}}{\tmin} - \frac{2g_2}{q_2 \tmin}}{1+\frac{w_2}{\len \left(t'_{a,1} - t'_{e,1} - \frac{2g_1}{q_1} \right)} \left( \frac{t_b}{\tmin} + \frac{w_1}{\tmin \len} \right) } \\
    &=& \frac{1-\frac{t'_{e,2}}{\tmin} - \frac{2g_2}{q_2 \tmin}}{1+\frac{\log{n \choose \tmin}}{\len t'_{a,1} \left(1 - \frac{t'_{e,1}}{t'_{a,1}} - \frac{2g_1}{q_1 t'_{a,1}} \right)} \left( \frac{t_b}{\tmin} + \frac{\log{n \choose t'_{a,1}}}{\tmin \len} \right) }
    \stackrel{(a)}{\geq}  \frac{1-\frac{t'_{e,2}}{\tmin} - \Delta}{1+\frac{\log{n \choose \tmin}}{\len t'_{a,1} \left(1 - \frac{t'_{e,1}}{t'_{a,1}} - \Delta \right)} \left( \frac{t_b}{\tmin} + \frac{\log{n \choose t'_{a,1}}}{\tmin \len} \right) } \\
    &\stackrel{(b)}{\geq}& \frac{1-\frac{t'_{e,2}}{\tmin} - \Delta}{1+\frac{\log(e n / \tmin) \left( t_b / t'_{a,1} + \log(e n / t'_{a,1}) / \len \right)}{\len \left(1 - \frac{t'_{e,1}}{t'_{a,1}} - \Delta \right)} }
    =   \frac{1-(1+\psi)\frac{t_e}{\tmin} - \Delta}{1+ \xi_{3,\psi}}, ~~~ \mbox{where}~~~\\
    &&\tab\tab \xi_{3,\psi} = \frac{\log\frac{e n}{\tmin} \left( \frac{n}{(1-\psi)t_a} + \frac{\log(e n^2 / ((1-\psi) t_a t_b) } {\len} \right)}{\len \left(1 - \frac{(1+\psi) t_e}{n} - \Delta. \right)}
\end{eqnarray*}
Inequality (a) follows from the definition of $g_1$ and $g_2$ (\ref{tk-FF3}) and the choices of $q_1 \geq \frac{2^{\len/2}-1}{t'_{e,1}}$ and $q_2 \geq \frac{2^{\len/2}-1}{t'_{e,2}}$.  Inequality (b)  holds due to Stirling's approximation. The fact that $\psi>0$ can be made arbitrarily small completes the rate analysis.

\bhdr{$\delta$-reliability} We prove the upper bound $\delta$ on the failure probability of $\FF_3$ noting that the protocol fails only if Alice receives less than $q_1 t'_{a,1}$ field elements in Round 1 of the key-agreement block. This probability  is upper-bounded as follows. Let $T'_i \leq \tmin$ be the number of elements (transmitted by Bob) that Alice receives in time interval $i$. For all $1 \leq i \leq q_1$, the random variable $T'_i$ follows the hyper-geometric distribution
\[\forall 0 \leq j \leq \tmin:~~ \Pr(T'_i = j) = \frac{{t_b \choose j} {n-t_b \choose t_a-j}}{{n\choose t_a}},\]
which has an expected value of $E(T'_i) = \frac{t_a t_b}{n}$. This implies that normalized independent random variables $\frac{T'_i}{t_b}$ (for $1 \leq i \leq {q_1}$) have the expected value of $\mu_a = t_a/n$. We use the Chernoff  bound on the sum of independent normalized random variables to show (note that $t'_{a,1}=\frac{(1-\psi)t_a t_b}{n}$ and $q_1 \geq \frac{2 n}{\psi^2 t_a}\ln(\frac{1}{\delta})$)
\begin{eqnarray*}
\Pr(\hat{S} \neq S) \leq \Pr(r < {q_1} t'_{a,1}) = \Pr(\sum_{i=1}^{q_1} \frac{T'_i}{t_b} < (1-\psi) {q_1} \mu_a) < e^{-\frac{\psi^2}{2} {q_1} \mu_a} \leq  e^{- \ln(1/\delta)} = \delta.
\end{eqnarray*}

\bhdr{$\epsilon$-secrecy} Because of the use of quasi-ramp SSS, the protocol provides ``perfect secrecy'' assuming that Eve receives $\leq q_1 t'_{e,1}$ share elements during key transport and $\leq q_2 t'_{e,2}$ share elements during coordinated PMT. In the former, secrecy is due to the $(q_1 t'_{a,1}-2g_1, r_1, g_1, q_1 t'_{a,1})$-quasi-ramp SSS which guarantees that any $q_1 t'_{a,1}-2g_1-r_1 = q_1 t'_{e,1}$ shares of $X_A$ leak no information about the key $W$. The latter is due to the $(q_2 \tmin -2g_2, r_2, g_2, q_2 \tmin)$-quasi-ramp SSS which promises message secrecy against collation of $q_2 t'_{a,1}- 2g_2 - r_2 = q_2 t'_{e,1}$ many shares.

We show that the both above assumptions hold with except with probability $\epsilon$. Let $T''_i \leq \min(t_e, T'_i)$ be the number of paths that are observed at by both Eve and Alice in interval $1\leq i \leq q_1$. Given $T'_i=j$, the probability distribution of $T''_i$ is hyper-geometric:
\[\forall 0 \leq l \leq \min(t_e,j):~~ \Pr(T''_i = l) = \frac{{j \choose l} {n-j \choose t_e-l}}{{n\choose t_e}},\]
with expected value of $\frac{t_e j}{n}$. The expected value of $T''_i$ (unconditioned on $T'_i$) is obtained as
\begin{eqnarray*}
E(T''_i) = \sum_{j=0}^{\tmin} \Pr(T'_i=j) \frac{t_e j}{n} = \frac{t_e}{n} E(T'_i) = \frac{t_e t_a t_b}{n^2}.
\end{eqnarray*}
The normalized variables $\frac{T'_i}{t_b}$ (for $1\leq i \leq {q_1}$) are independent and have the expected value of $\mu_e=\frac{t_e t_a}{n^2}$. Applying the Chernoff bound to the sum of these independent normalized variables \cite{Ch52}, we reach at (note that $t'_{e,1}=(1+\psi)t'_{a,1} t_e/n$ and $q_1 \geq \frac{(2+\psi)n^2}{\psi^2 t_a t_e}\ln(\frac{2}{\epsilon})$)
\begin{eqnarray*}
\Pr\left(\sum_{i=1}^{q_1} T''_i > q_1 t'_{e,1} \right) = \Pr\left( \sum_{i=1}^{q_1} \frac{T''_i}{t_b} > (1+\psi)(1-\psi) q_1 \mu_e \right) < e^{-\frac{\psi^2}{2+\psi} q_1 \mu_e} \leq  e^{- \ln(2/\epsilon)} = \frac{\epsilon}{2}.
\end{eqnarray*}
Similarly, letting $T''_i$ (for $q_1+1 \leq i \leq q_1+q_2$) be Eve's observed share elements during coordinated PMT, the normalized variables $\frac{T''_i}{\tmin}$ are independent and have expected value $\mu'_e=\frac{t_e}{n}$. The probability that she receives more than $q_2 t'_{e,2}$ such shares during $q_2$ intervals at Stage (ii) is upper bounded as $t'_{e,2}=(1+\psi) \tmin t_e/n$ and $q_2 \geq \frac{(2+\psi)n}{\psi^2 t_e}\ln(\frac{2}{\epsilon})$)
\begin{eqnarray*}
\Pr\left(\sum_{i=q_1+1}^{q_1+q_2} T''_i > q_2 t'_{e,2} \right) < e^{-\frac{\psi^2}{2+\psi} q_2 \mu_e} \leq \frac{\epsilon}{2}.
\end{eqnarray*}

The $\epsilon$-secrecy of $\FF_3$ (see \ref{sec} in Definition \ref{def-PMT}) follows since for every $s_1,s_2 \in \bset^k$,
\begin{eqnarray*}
SD(View_E(s_1),View_E(s_2)) \leq \Pr (\sum_{i=1}^{q_1}  T''_i > q_1 t'_{e,1} \cup \sum_{i=q_1+1}^{q_1+q_2}  T''_i > q_2 t'_{e,2}) \times 1 \leq \epsilon.
\end{eqnarray*}

\end{document}